\providecommand{\U}[1]{\protect\rule{.1in}{.1in}}
\newtheorem{theorem}{Theorem}
\newtheorem{definition}[theorem]{Definition}
\newenvironment{proof}[1][Proof]{\noindent \textbf{#1.} }{\  \rule{0.5em}{0.5em}}
\begin{document}

\title{\textbf{Robust Wald-type tests for non-homogeneous observations
based on minimum density power divergence estimator 
}}
\author{Ayanendranath Basu$^{1\ast}$, Abhik Ghosh$^{1}$, Nirian Martin$^{2}$ and Leandro Pardo$^{2}$\\
	$^1$ Indian Statistical Institute, Kolkata, India
	\\ $^2$ Complutense University, Madrid, Spain
\\ $^{\ast}$Corresponding author; Email: ayanbasu@isical.ac.in 
}
\date{}
\maketitle

\begin{abstract}
This paper considers the problem of robust hypothesis testing under non-identically distributed data.
We propose Wald-type tests for both simple and composite hypothesis for independent but non-homogeneous observations
based on the robust minimum density power divergence estimator of the common underlying parameter.
Asymptotic and theoretical robustness properties of the proposed tests have been discussed.
Application to the problem of testing the general linear hypothesis in a generalized linear model with fixed-design 
has been considered in detail with specific illustrations for its special cases under 
normal and Poisson distributions. 
\end{abstract}


%
\noindent\underline{\textbf{Keywords}}\textbf{:}
Non-homogeneous Data; Robust Hypothesis Testing; Wald-Type Test;
Minimum Density Power Divergence Estimator; Power Influence Function;
Linear Regression; Poisson Regression.

\section{Introduction}\label{sec1}

Suppose that the parametric model for the sample $Y_{1},...,Y_{n}$ asserts
that the distribution of $Y_{i}$ is $F_{i,\boldsymbol{\theta}}$,
$\boldsymbol{\theta}\in\Theta$ $\subset\mathbb{R}^{p}$. We shall denote by
$f_{i,\boldsymbol{\theta}}$ the probability density function associated with
$F_{i,\boldsymbol{\theta}}$ with respect to a convenient $\sigma$-finite
measure, for $i=1,...,n$. This situation is very common for statistical modeling in many applications and 
an important example is the generalized linear model (GLM) with fixed design set-up.

The maximum likelihood score equation for these independently and non
identically distributed data, $Y_{1},...,Y_{n}$, is given by
\begin{equation}%
{\displaystyle\sum\limits_{i=1}^{n}}
\boldsymbol{u}_{i,\boldsymbol{\theta}}\left(  y_{i}\right)  =\boldsymbol{0},
\label{a.1}%
\end{equation}
with $\boldsymbol{u}_{i,\boldsymbol{\theta}}\left(  y_{i}\right)
=\frac{\partial}{\partial\boldsymbol{\theta}}\log f_{i,\boldsymbol{\theta}%
}(y_i)$. It is well-known that the maximum likelihood estimator,
$\widehat{\boldsymbol{\theta}}$, obtained as the solution of the system of equation
(\ref{a.1}), has serious problems of robustness. For this reason 
statistical solutions for several special cases like
linear regression with normal errors \citep{Huber:1983, Muller:1998} 
as well as some general cases \citep{Beran:1982,Ghosh/Basu:2013} have been considered in the literature. 
In this paper, we shall follow the approach presented by \cite{Ghosh/Basu:2013} in order to
propose some robust Wald-type tests. In the cited approach given by \cite{Ghosh/Basu:2013}
a robust estimator was introduced based on the density power divergence (DPD) measure; 
details about this family of divergence measures can be found in  \cite{Basu/etc:1998,Basu/etc:2011}. 
This estimator, called the minimum density power divergence estimator (MDPDE) 
for non-homogeneous observations, is obtained as the
solution of the system of equations
\begin{equation}%
{\displaystyle\sum\limits_{i=1}^{n}}
\left(  f_{i,\boldsymbol{\theta}}^{\tau}(Y_{i})\boldsymbol{u}%
_{i,\boldsymbol{\theta}}\left(  Y_{i}\right)  -%
{\displaystyle\int}
f_{i,\boldsymbol{\theta}}^{\tau+1}(y)\boldsymbol{u}_{i,\boldsymbol{\theta}%
}\left(  y\right)  dy\right)  =\boldsymbol{0},\text{ }\tau>0. \label{a.2}%
\end{equation}
Note that in the limit as $\tau\rightarrow0$, the system of equations given
in (\ref{a.2}) tends to the system given in (\ref{a.1}).

In \cite{Ghosh/Basu:2013} it was established under some standard regularity conditions,
that the asymptotic distribution of the MDPDE for non-homogeneous observations, 
say $\widehat{\boldsymbol{\theta}}_{\tau}$, at the true model distribution 
$\{f_{i,\boldsymbol{\theta}_0} :  i=1,\ldots,n\},$ is given by
\begin{equation}
\boldsymbol{\Omega}_{n,\tau}^{-1/2}(\boldsymbol{\theta}_{0})
\boldsymbol{\Psi}_{n,\tau}(\boldsymbol{\theta}_{0})
\left[\sqrt{n}(\widehat{\boldsymbol{\theta}}_{\tau}-\boldsymbol{\theta}_{0})\right]
\overset{\mathcal{L}}{\underset{n\mathcal{\rightarrow}\infty}{\longrightarrow}}
\mathcal{N}\left(  \boldsymbol{0},\boldsymbol{I}_{p}\right), 
\label{a.250}%
\end{equation}
or equivalently, 
\begin{equation}
\sqrt{n}(\widehat{\boldsymbol{\theta}}_{\tau}-\boldsymbol{\theta}%
_{0})\overset{\mathcal{L}}{\underset{n\mathcal{\rightarrow}\infty
}{\longrightarrow}}\mathcal{N}\left(  \boldsymbol{0},\boldsymbol{\Sigma
}_{\tau}(\boldsymbol{\theta}_{0})\right)  , \label{a.25}%
\end{equation}
with
\begin{equation}
\boldsymbol{\Sigma}_{\tau}(\boldsymbol{\theta}_{0})=\lim_{n\rightarrow
\infty}\boldsymbol{\Psi}_{n,\tau}^{-1}(\boldsymbol{\theta}_{0}%
)\boldsymbol{\Omega}_{n,\tau}(\boldsymbol{\theta}_{0})\boldsymbol{\Psi
}_{n,\tau}^{-1}(\boldsymbol{\theta}_{0}), \label{a.3}%
\end{equation}
where we define
\begin{align}
\boldsymbol{\Psi}_{n,\tau}(\boldsymbol{\theta})  &  =\frac{1}{n}%
{\displaystyle\sum\limits_{i=1}^{n}}
\boldsymbol{J}_{i,\tau}(\boldsymbol{\theta}),\label{a.4}\\
\mbox{with }~~~~~~~~~~~~~~~~~~~~~~~~~~~~~~
\boldsymbol{J}_{i,\tau}(\boldsymbol{\theta})  &  =%
{\displaystyle\int}
\boldsymbol{u}_{i,\boldsymbol{\theta}}\left(  y\right)  \boldsymbol{u}%
_{i,\boldsymbol{\theta}}^{T}\left(  y\right)  f_{i,\boldsymbol{\theta}%
}^{\tau+1}(y)dy,~~~~~~~~~~~~~~~~~~~~~~~~~~~~~~~~~~~~~~~~~~~~~~~~~~~~~~~~~~~~~~~~
\nonumber
\end{align}
and
\begin{align}
\boldsymbol{\Omega}_{n,\tau}(\boldsymbol{\theta})  &  =\frac{1}{n}%
{\displaystyle\sum\limits_{i=1}^{n}}
\left(
{\displaystyle\int}
\boldsymbol{u}_{i,\boldsymbol{\theta}}\left(  y\right)  \boldsymbol{u}%
_{i,\boldsymbol{\theta}}^{T}\left(  y\right)  f_{i,\boldsymbol{\theta}%
}^{2\tau+1}(y_{i})dy-\boldsymbol{\xi}_{i,\tau}(\boldsymbol{\theta
})\boldsymbol{\xi}_{i,\tau}^{T}(\boldsymbol{\theta})\right)  ,
~~~~~~~~~~~~~~~~~~~\label{EQ:Omega_n}\\
\mbox{with }~~~~~~~~~~~~~~~~~~~~~~~~~~
\boldsymbol{\xi}_{i,\tau}(\boldsymbol{\theta})  &  =%
{\displaystyle\int}
\boldsymbol{u}_{i,\boldsymbol{\theta}}\left(  y\right)
f_{i,\boldsymbol{\theta}}^{\tau}(y)dy.\label{EQ:xi}
\end{align}
The required regularity conditions are listed in Appendix \ref{APP:cond}
for the sake of completeness and will be referred as the ``Ghosh-Basu Conditions" throughout the rest of the paper.

Motivated by the strong robustness properties of the Wald-type test statistics
based on MDPDEs \citep{Basu/etc:2016a,Basu/etc:2016b,Ghosh/etc:2016a,Ghosh/etc:2016b} 
in case of independently and identically distributed observations,
in this paper we shall introduce and study the corresponding MDPDE based 
Wald-type tests for independently but non identically distributed data. 
In particular, we will develop the asymptotic and theoretical robustness 
of these Wald-type tests for both simple and composite hypotheses,
along with  applications to the generalized linear models (GLMs) and its important subclasses.
It is important to note that there is no established robust hypothesis testing procedure 
under such general non-homogeneous data set-up except for one recent attempt by \cite{Ghosh/Basu:2015},
who developed the divergence based test statistics with DPD measure. 
However, the asymptotic null distribution  of their proposed test statistics 
is a linear combination of chi-square distributions occasionally limiting its application in complicated situations. 
On the contrary, our proposed Wald-type test statistics in this paper will be shown to have an ordinary chi-square distribution
along with all the other competitive properties and hence easier to implement in any practical applications.

The rest of the paper is organized as follows: 
In Section \ref{sec2}, we shall introduce the
Wald-type tests for testing simple null hypothesis as well as composite null
hypothesis and we study their asymptotic distributions under the null
hypotheses as well as alternative hypotheses. 
The robustness of these Wald-type tests 
will be studied in Section \ref{sec3}. 
In Section \ref{sec4} the results are particularized to the GLM model. 
Some examples are studied in Section \ref{sec5} and 
the paper ends with some insightful discussions in Section \ref{sec7}.

\section{Wald-type tests under independent but non-homogeneous data}
\label{sec2}

In the following two sections we shall consider the simple null hypothesis as
well as composite null hypothesis versions of the Wald-type test statistics
for independently and non identically distributed data.

\subsection{Wald-type tests for simple null hypotheses}

Let $Y_{1},...,Y_{n}$ independently and non identically distributed data
according to the probability density function $f_{i,\boldsymbol{\theta}}$,
where $\boldsymbol{\theta}\in\Theta\subset\mathbb{R}^{p}.$ In this section we
define a family of Wald-type test statistics based on MDPDE for testing the
hypothesis%
\begin{equation}
H_{0}:\boldsymbol{\theta}=\boldsymbol{\theta}_{0}\text{ against }%
H_{1}:\boldsymbol{\theta}\neq\boldsymbol{\theta}_{0}, \label{a.6}%
\end{equation}
for a given $\boldsymbol{\theta}_0\in \Theta$, 
which will henceforth be referred to as the proposed Wald-type statistics.

\begin{definition}
Let $\widehat{\boldsymbol{\theta}}_{\tau}$ be the MDPDE of
$\boldsymbol{\theta}$. The family of proposed Wald-type test statistics for
testing the null hypothesis (\ref{a.6}) is given by%
\begin{equation}
W_{n}^0(\boldsymbol{\theta}_{0})=n(\widehat{\boldsymbol{\theta}}_{\tau
}-\boldsymbol{\theta}_{0})^{T}\boldsymbol{\Sigma}_{\tau}^{-1}%
(\boldsymbol{\theta}_{0})(\widehat{\boldsymbol{\theta}}_{\tau}%
-\boldsymbol{\theta}_{0}), \label{a.7}%
\end{equation}
where $\boldsymbol{\Sigma}_{\tau}(\boldsymbol{\theta}_{0})$ is as defined in (\ref{a.3}).
\end{definition}

The asymptotic distribution of $W_{n}^0(\boldsymbol{\theta}_{0})$ is presented
in the next theorem. The result follows easily from the asymptotic
distribution of the MDPDE considered in (\ref{a.25}) and so we omit the proof.
Throughout the rest of the paper, for all the theoretical results, 
we will assume that the Ghosh-Basu conditions hold and 
$\boldsymbol{\Sigma}_{\tau}(\boldsymbol{\theta})$ is continuous in $\boldsymbol{\theta}\in \Theta$.

\begin{theorem}\label{THM:null}
The asymptotic distribution, under the null hypothesis considered in
(\ref{a.6}), of the proposed Wald-type test statistics given in (\ref{a.7}) is $\chi_p^2$,
a chi-square distribution with $p$ degrees of freedom.
\end{theorem}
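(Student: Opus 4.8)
The plan is to reduce the statement to the standard fact that a nondegenerate Gaussian quadratic form taken against its own inverse covariance is chi-square distributed, and then to transport this limit through the continuous mapping theorem. Under the null hypothesis in (\ref{a.6}) the value $\boldsymbol{\theta}_0$ is the true parameter, so the asymptotic normality of the MDPDE recorded in (\ref{a.25}) applies directly with $\boldsymbol{\theta}_0$ as the true value. Writing $\boldsymbol{Z}_n = \sqrt{n}(\widehat{\boldsymbol{\theta}}_\tau - \boldsymbol{\theta}_0)$, I therefore have $\boldsymbol{Z}_n \overset{\mathcal{L}}{\longrightarrow} \boldsymbol{Z}$, where $\boldsymbol{Z} \sim \mathcal{N}(\boldsymbol{0}, \boldsymbol{\Sigma}_\tau(\boldsymbol{\theta}_0))$.

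First I would observe that the statistic in (\ref{a.7}) is exactly the quadratic form $W_n^0(\boldsymbol{\theta}_0) = \boldsymbol{Z}_n^T \boldsymbol{\Sigma}_\tau^{-1}(\boldsymbol{\theta}_0) \boldsymbol{Z}_n$. Since the map $\boldsymbol{z} \mapsto \boldsymbol{z}^T \boldsymbol{\Sigma}_\tau^{-1}(\boldsymbol{\theta}_0) \boldsymbol{z}$ is continuous, $\boldsymbol{\Sigma}_\tau^{-1}(\boldsymbol{\theta}_0)$ being a fixed finite matrix, the continuous mapping theorem yields $W_n^0(\boldsymbol{\theta}_0) \overset{\mathcal{L}}{\longrightarrow} \boldsymbol{Z}^T \boldsymbol{\Sigma}_\tau^{-1}(\boldsymbol{\theta}_0) \boldsymbol{Z}$. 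It then remains only to identify the law of this limiting quadratic form. I would factor $\boldsymbol{\Sigma}_\tau(\boldsymbol{\theta}_0) = \boldsymbol{\Sigma}_\tau^{1/2}(\boldsymbol{\theta}_0)\boldsymbol{\Sigma}_\tau^{1/2}(\boldsymbol{\theta}_0)$ via the symmetric positive-definite square root and set $\boldsymbol{V} = \boldsymbol{\Sigma}_\tau^{-1/2}(\boldsymbol{\theta}_0)\boldsymbol{Z}$, so that $\boldsymbol{V} \sim \mathcal{N}(\boldsymbol{0}, \boldsymbol{I}_p)$. Then $\boldsymbol{Z}^T \boldsymbol{\Sigma}_\tau^{-1}(\boldsymbol{\theta}_0)\boldsymbol{Z} = \boldsymbol{V}^T\boldsymbol{V} = \sum_{j=1}^p V_j^2$, a sum of $p$ independent squared standard normals, which follows the $\chi_p^2$ distribution and completes the argument.

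Alternatively one may start from the standardized form (\ref{a.250}): the vector $\boldsymbol{\Omega}_{n,\tau}^{-1/2}(\boldsymbol{\theta}_0)\boldsymbol{\Psi}_{n,\tau}(\boldsymbol{\theta}_0)\boldsymbol{Z}_n$ converges to $\mathcal{N}(\boldsymbol{0}, \boldsymbol{I}_p)$, so its squared Euclidean norm converges to $\chi_p^2$; the sandwich identity (\ref{a.3}) then shows this norm coincides asymptotically with $W_n^0(\boldsymbol{\theta}_0)$, giving the same conclusion. Both routes are essentially routine, which is consistent with the remark preceding the theorem that the result follows easily from (\ref{a.25}).

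The only point demanding care — and the one I would flag as the main obstacle — is the nonsingularity of $\boldsymbol{\Sigma}_\tau(\boldsymbol{\theta}_0)$, needed both for the statistic in (\ref{a.7}) to be well defined and for the limit to have full rank $p$, so that the degrees of freedom are exactly $p$ rather than fewer. This is secured by the Ghosh-Basu regularity conditions, under which $\boldsymbol{\Psi}_{n,\tau}$ and $\boldsymbol{\Omega}_{n,\tau}$ admit positive-definite limits, together with the assumed continuity of $\boldsymbol{\Sigma}_\tau(\cdot)$; I would verify at the outset that these guarantee the existence of $\boldsymbol{\Sigma}_\tau^{1/2}(\boldsymbol{\theta}_0)$ and its inverse before carrying out the change of variables.
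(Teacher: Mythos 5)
Your argument is correct and is precisely the routine derivation the paper has in mind when it omits the proof, stating that the result ``follows easily from the asymptotic distribution of the MDPDE considered in (\ref{a.25})'': apply (\ref{a.25}) under $H_0$, use the continuous mapping theorem on the quadratic form, and identify $\boldsymbol{Z}^T\boldsymbol{\Sigma}_\tau^{-1}(\boldsymbol{\theta}_0)\boldsymbol{Z}$ as $\chi_p^2$ via the symmetric square root. Your closing remark on the nonsingularity of $\boldsymbol{\Sigma}_\tau(\boldsymbol{\theta}_0)$, guaranteed by condition \ref{as4} of the Ghosh--Basu conditions, is a sensible point of care but does not change the approach.
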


In the next Theorem we are going to present a result that will be important in
order to get an approximation to the power function of the proposed Wald-type
test statistic given in (\ref{a.7}) because in many practical situation is not
possible to get a simple expression for the exact power function.

\begin{theorem}
Let $\boldsymbol{\theta}^{\ast}$ be the the true value of parameter with
$\boldsymbol{\theta}^{\ast}\neq\boldsymbol{\theta}_{0}$. Then, we have
\[
\sqrt{n}(s(\widehat{\boldsymbol{\theta}}_{\tau})-s(\boldsymbol{\theta}%
^{\ast}))\overset{\mathcal{L}}{\underset{n\mathcal{\rightarrow}\infty
}{\longrightarrow}}\mathcal{N}\left(  0,\sigma_{W_{n}^0(\boldsymbol{\theta}%
_{0})}^{2}(\boldsymbol{\theta}^{\ast}))\right)  ,
\]
where
$
s\left(  \boldsymbol{\theta}\right)  =\left(  \boldsymbol{\theta
}-\boldsymbol{\theta}_{0}\right)  ^{T}\boldsymbol{\Sigma}_{\tau}%
^{-1}(\boldsymbol{\theta}_{0})\left(  \boldsymbol{\theta}-\boldsymbol{\theta
}_{0}\right)  
$
and
$
\sigma_{W_{n}^0(\boldsymbol{\theta}_{0})}^{2}(\boldsymbol{\theta}^{\ast
})=4\left(  \boldsymbol{\theta}^{\ast}-\boldsymbol{\theta}_0\right)^{T}
\left[\boldsymbol{\Sigma}_{\tau}^{-1}(\boldsymbol{\theta}_{0})
\boldsymbol{\Sigma}_{\tau}(\boldsymbol{\theta}^{\ast})
\boldsymbol{\Sigma}_{\tau}^{-1}(\boldsymbol{\theta}_{0})\right]
\left(\boldsymbol{\theta}^{\ast}-\boldsymbol{\theta}_0\right)  .
$
\label{THM:Papprox}
\end{theorem}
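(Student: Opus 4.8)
The plan is to observe that the Wald-type statistic is exactly $W_{n}^0(\boldsymbol{\theta}_{0})=n\,s(\widehat{\boldsymbol{\theta}}_{\tau})$, so that the asymptotic behaviour of $s(\widehat{\boldsymbol{\theta}}_{\tau})$ can be read off from the asymptotic normality of the MDPDE via the delta method. The essential input is the analogue of (\ref{a.25}) taken at the true parameter value, which here is $\boldsymbol{\theta}^{\ast}$ rather than $\boldsymbol{\theta}_{0}$: under the Ghosh--Basu conditions,
\[
\sqrt{n}(\widehat{\boldsymbol{\theta}}_{\tau}-\boldsymbol{\theta}^{\ast})
\overset{\mathcal{L}}{\underset{n\rightarrow\infty}{\longrightarrow}}
\mathcal{N}\left(\boldsymbol{0},\boldsymbol{\Sigma}_{\tau}(\boldsymbol{\theta}^{\ast})\right).
\]

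First I would record that $s$ is a smooth quadratic form and compute its gradient. Since $\boldsymbol{\Sigma}_{\tau}^{-1}(\boldsymbol{\theta}_{0})$ is symmetric (being the inverse of the symmetric matrix $\boldsymbol{\Sigma}_{\tau}(\boldsymbol{\theta}_{0})$), one has
\[
\nabla s(\boldsymbol{\theta})
=\frac{\partial}{\partial\boldsymbol{\theta}}
\left(\boldsymbol{\theta}-\boldsymbol{\theta}_{0}\right)^{T}
\boldsymbol{\Sigma}_{\tau}^{-1}(\boldsymbol{\theta}_{0})
\left(\boldsymbol{\theta}-\boldsymbol{\theta}_{0}\right)
=2\,\boldsymbol{\Sigma}_{\tau}^{-1}(\boldsymbol{\theta}_{0})
\left(\boldsymbol{\theta}-\boldsymbol{\theta}_{0}\right).
\]
Applying the first-order multivariate delta method to $g(\cdot)=s(\cdot)$ at the point $\boldsymbol{\theta}^{\ast}$ then yields $\sqrt{n}(s(\widehat{\boldsymbol{\theta}}_{\tau})-s(\boldsymbol{\theta}^{\ast}))\overset{\mathcal{L}}{\longrightarrow}\mathcal{N}(0,\,\nabla s(\boldsymbol{\theta}^{\ast})^{T}\boldsymbol{\Sigma}_{\tau}(\boldsymbol{\theta}^{\ast})\nabla s(\boldsymbol{\theta}^{\ast}))$, and substituting the gradient above reduces the limiting variance to
\[
4\left(\boldsymbol{\theta}^{\ast}-\boldsymbol{\theta}_{0}\right)^{T}
\boldsymbol{\Sigma}_{\tau}^{-1}(\boldsymbol{\theta}_{0})
\boldsymbol{\Sigma}_{\tau}(\boldsymbol{\theta}^{\ast})
\boldsymbol{\Sigma}_{\tau}^{-1}(\boldsymbol{\theta}_{0})
\left(\boldsymbol{\theta}^{\ast}-\boldsymbol{\theta}_{0}\right),
\]
which is precisely $\sigma_{W_{n}^0(\boldsymbol{\theta}_{0})}^{2}(\boldsymbol{\theta}^{\ast})$ as claimed.

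The computation is routine, so the only real point requiring care is the non-degeneracy of the limit, and this is exactly where the hypothesis $\boldsymbol{\theta}^{\ast}\neq\boldsymbol{\theta}_{0}$ enters. Because $\boldsymbol{\Sigma}_{\tau}^{-1}(\boldsymbol{\theta}_{0})$ is positive definite and $\boldsymbol{\theta}^{\ast}-\boldsymbol{\theta}_{0}\neq\boldsymbol{0}$, the gradient $\nabla s(\boldsymbol{\theta}^{\ast})$ is nonzero, so the delta method delivers a genuine (non-degenerate) normal law; were $\boldsymbol{\theta}^{\ast}=\boldsymbol{\theta}_{0}$, the gradient would vanish and the first-order expansion would be uninformative, forcing a second-order analysis that instead reproduces the $\chi_{p}^{2}$ null limit of Theorem \ref{THM:null}. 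I would therefore emphasise that this result is the off-null counterpart of Theorem \ref{THM:null} and is the ingredient needed for the power-function approximation alluded to before the statement.
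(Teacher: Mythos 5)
Your proposal is correct and follows essentially the same route as the paper: a first-order Taylor expansion (delta method) of the quadratic form $s(\cdot)$ around $\boldsymbol{\theta}^{\ast}$, combined with the asymptotic normality of the MDPDE at the true parameter value, with the gradient $2\,\boldsymbol{\Sigma}_{\tau}^{-1}(\boldsymbol{\theta}_{0})(\boldsymbol{\theta}^{\ast}-\boldsymbol{\theta}_{0})$ yielding exactly the stated variance. Your added remark on why $\boldsymbol{\theta}^{\ast}\neq\boldsymbol{\theta}_{0}$ is needed for a non-degenerate limit is a useful clarification that the paper leaves implicit, but it does not change the argument.
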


\begin{proof}
A first-order Taylor expansion of $s\left(  \boldsymbol{\theta}\right)  $
around $\boldsymbol{\theta}^{\ast}$ at $\widehat{\boldsymbol{\theta}}_{\tau
}$ is given by%
\[
s(\widehat{\boldsymbol{\theta}}_{\tau})-s(\boldsymbol{\theta}^{\ast
})=\left.  \frac{\partial s\left(  \boldsymbol{\theta}\right)  }%
{\partial\boldsymbol{\theta}^{T}}\right\vert _{\boldsymbol{\theta=\theta
}^{\ast}}(\widehat{\boldsymbol{\theta}}_{\tau}-\boldsymbol{\theta}^{\ast
})+o_{p}(n^{-1/2}).
\]
Then the asymptotic distribution of the random variable $\sqrt{n}%
(s(\widehat{\boldsymbol{\theta}}_{\tau})-s(\boldsymbol{\theta}^{\ast}))$
matches the asymptotic distribution of the random variable $\left.
\frac{\partial s\left(  \boldsymbol{\theta}\right)  }{\partial
\boldsymbol{\theta}^{T}}\right\vert _{\boldsymbol{\theta=\theta}^{\ast}%
}\sqrt{n}\left(\widehat{\boldsymbol{\theta}}_{\tau}-\boldsymbol{\theta}^{\ast}\right)$ 
and the desired result follows.
\end{proof}

Based on Theorem \ref{THM:null} we shall reject the null hypothesis given in (\ref{a.6}) if
\begin{equation}
W_{n}^0(\boldsymbol{\theta}_{0}) > \chi_{p,\alpha}^{2},
\label{EQ:CR}
\end{equation}
and Theorem \ref{THM:Papprox} makes it possible to have an approximation 
of the power function for the test given in (\ref{EQ:CR}). This is given by 
\begin{align}
\pi^{\tau}_{W_{n}^0(\boldsymbol{\theta}_{0})}(\boldsymbol{\theta}^{\ast}) &
=\Pr\left(  \text{Rejecting }H_{0}| \boldsymbol{\theta}=\boldsymbol{\theta}^\ast\right)  
=\Pr\left(  W_{n}^0(\boldsymbol{\theta}_{0})>\chi_{p,\alpha}^{2}|\boldsymbol{\theta}=\boldsymbol{\theta}^{\ast}\right)  
\nonumber\\
&  =\Pr\left(  s(\widehat{\boldsymbol{\theta}}_{\tau})-s(\boldsymbol{\theta
}^{\ast})>\frac{\chi_{p,\alpha}^{2}}{n}-s(\boldsymbol{\theta}^{\ast})\right)
\nonumber\\
&  =1-\Phi_{n}\left(  \frac{n^{1/2}}{\sigma_{W_{n}^0(\boldsymbol{\theta}_{0}%
)}(\boldsymbol{\theta}^{\ast})}\left(  \frac{\chi_{p,\alpha}^{2}}%
{n}-s(\boldsymbol{\theta}^{\ast})\right)  \right)  ,\label{EQ:power_approx}
\end{align}
where $\chi_{p,\alpha}^2$ denote the $(1-\alpha)$-th quantile of $\chi_p^2$ distribution,
and $\Phi_{n}(\cdot)$ is a sequence of distribution functions tending uniformly
to the standard normal distribution function $\Phi(\cdot)$. We can observe
that the Wald-type tests are consistent in the Fraser sense since%
\begin{eqnarray}
\lim_{n\rightarrow\infty}\pi^{\tau}_{W_{n}^0(\boldsymbol{\theta}_{0})}%
(\boldsymbol{\theta}^{\ast})=1\text{ }\forall\tau\geq 0.
\label{EQ:power}
\end{eqnarray}
This result can be applied in the sense of getting the necessary sample size
for the Wald-type tests to have a predetermined power, 
$\pi^{\tau}_{W_{n}^0(\boldsymbol{\theta}_{0})}(\boldsymbol{\theta}^{\ast})\approx\pi^{\ast}$ 
and size $\alpha$. The necessary sample size is given by
\[
n=\left[  \frac{A+B+\sqrt{A(A+2B)}}{2s^{2}(\boldsymbol{\theta}^{\ast}%
)}\right]  +1
\]
where $\left[  z\right]  $ denotes the largest integer less than or equal to $z$, 
$A   =\sigma_{W_{n}^0(\boldsymbol{\theta}_{0})}^{2}(\boldsymbol{\theta}^{\ast})
\left(  \Phi^{-1}(1-\pi^{\ast})\right)  ^{2},$
and
$B  =2s(\boldsymbol{\theta}^{\ast})\chi_{p,\alpha}^{2}$.

In order to produce a non-trivial asymptotic power, see (\ref{EQ:power}), 
\cite{Cochran:1952} suggested using a set of local alternatives contiguous to the null hypothesis given in (\ref{a.6})
as $n$ increases. In the next theorem we shall
present the asymptotic distribution of the Wald-type test statistics under
contiguous alternative hypotheses.

\begin{theorem}
Under the contiguous alternative hypothesis%
\begin{equation}
H_{1,n}:\boldsymbol{\theta}_{n}=\boldsymbol{\theta}_{0}+n^{-1/2}%
\boldsymbol{d}, \label{a.8}%
\end{equation}
where $\boldsymbol{d}$ is a fixed vector in $\mathbb{R}^{p}$ such that
$\boldsymbol{\theta}_{n}\in\Theta\subset\mathbb{R}^{p}$, the asymptotic
distribution of the Wald-type test statistics given in (\ref{a.7}) is $\chi_p^2(\delta)$,
a non-central chi-square distribution with $p$ degrees of freedom and
non-centrality parameter%
\begin{equation}
\delta=\boldsymbol{d}^{T}\boldsymbol{\Sigma}_{\tau}^{-1}(\boldsymbol{\theta
}_{0})\boldsymbol{d}. \label{a.9}%
\end{equation}

\end{theorem}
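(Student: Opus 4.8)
The plan is to reduce the statement to the elementary distributional fact that if a random vector $\boldsymbol{Z}$ follows the $\mathcal{N}(\boldsymbol{\mu},\boldsymbol{I}_p)$ law, then the quadratic form $\boldsymbol{Z}^T\boldsymbol{Z}$ has the non-central chi-square distribution $\chi_p^2(\boldsymbol{\mu}^T\boldsymbol{\mu})$ with $p$ degrees of freedom. Thus the entire argument reduces to identifying the limiting law of a suitably standardised version of $\sqrt{n}(\widehat{\boldsymbol{\theta}}_{\tau}-\boldsymbol{\theta}_{0})$ under the sequence $H_{1,n}$ and then reading off its mean vector.

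First I would observe that under $H_{1,n}$ the true value generating the data is $\boldsymbol{\theta}_{n}=\boldsymbol{\theta}_{0}+n^{-1/2}\boldsymbol{d}$, so the asymptotic normality of the MDPDE in (\ref{a.25}) applies with $\boldsymbol{\theta}_{n}$ in the role of the true parameter, giving $\sqrt{n}(\widehat{\boldsymbol{\theta}}_{\tau}-\boldsymbol{\theta}_{n})\overset{\mathcal{L}}{\longrightarrow}\mathcal{N}(\boldsymbol{0},\boldsymbol{\Sigma}_{\tau}(\boldsymbol{\theta}_{0}))$; here the limiting covariance is $\boldsymbol{\Sigma}_{\tau}(\boldsymbol{\theta}_{0})$ rather than $\boldsymbol{\Sigma}_{\tau}(\boldsymbol{\theta}_{n})$ because $\boldsymbol{\theta}_{n}\to\boldsymbol{\theta}_{0}$ and $\boldsymbol{\Sigma}_{\tau}$ was assumed continuous. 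I would then decompose
\[
\sqrt{n}(\widehat{\boldsymbol{\theta}}_{\tau}-\boldsymbol{\theta}_{0})
=\sqrt{n}(\widehat{\boldsymbol{\theta}}_{\tau}-\boldsymbol{\theta}_{n})+\sqrt{n}(\boldsymbol{\theta}_{n}-\boldsymbol{\theta}_{0})
=\sqrt{n}(\widehat{\boldsymbol{\theta}}_{\tau}-\boldsymbol{\theta}_{n})+\boldsymbol{d},
\]
and, since the last summand is a fixed constant, Slutsky's theorem yields $\sqrt{n}(\widehat{\boldsymbol{\theta}}_{\tau}-\boldsymbol{\theta}_{0})\overset{\mathcal{L}}{\longrightarrow}\mathcal{N}(\boldsymbol{d},\boldsymbol{\Sigma}_{\tau}(\boldsymbol{\theta}_{0}))$.

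To finish, I would set $\boldsymbol{Z}_{n}=\boldsymbol{\Sigma}_{\tau}^{-1/2}(\boldsymbol{\theta}_{0})\sqrt{n}(\widehat{\boldsymbol{\theta}}_{\tau}-\boldsymbol{\theta}_{0})$, so that $\boldsymbol{Z}_{n}\overset{\mathcal{L}}{\longrightarrow}\mathcal{N}(\boldsymbol{\Sigma}_{\tau}^{-1/2}(\boldsymbol{\theta}_{0})\boldsymbol{d},\boldsymbol{I}_p)$ while $W_{n}^0(\boldsymbol{\theta}_{0})=\boldsymbol{Z}_{n}^{T}\boldsymbol{Z}_{n}$. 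By the continuous mapping theorem together with the quadratic-form fact stated above, $W_{n}^0(\boldsymbol{\theta}_{0})$ converges in law to $\chi_p^2(\delta)$ with $\delta=(\boldsymbol{\Sigma}_{\tau}^{-1/2}(\boldsymbol{\theta}_{0})\boldsymbol{d})^{T}(\boldsymbol{\Sigma}_{\tau}^{-1/2}(\boldsymbol{\theta}_{0})\boldsymbol{d})=\boldsymbol{d}^{T}\boldsymbol{\Sigma}_{\tau}^{-1}(\boldsymbol{\theta}_{0})\boldsymbol{d}$, which is precisely the non-centrality parameter in (\ref{a.9}).

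The step I expect to require the most care is the first one: transferring the asymptotic normality (\ref{a.25}), established at a fixed true parameter, to the moving sequence of true parameters $\boldsymbol{\theta}_{n}$. This is where contiguity enters. Formally one can appeal to Le Cam's third lemma, using the local asymptotic normality furnished by the Ghosh-Basu conditions, to confirm that passing to the contiguous sequence leaves the limiting covariance unchanged and induces only a deterministic mean shift of $\boldsymbol{d}$; alternatively, one verifies directly that the estimating equation (\ref{a.2}), expanded around $\boldsymbol{\theta}_{n}$, admits the same asymptotic linear representation with the same dispersion matrix, up to the continuity of $\boldsymbol{\Sigma}_{\tau}$. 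Once this transfer is secured, the remaining steps are entirely routine.
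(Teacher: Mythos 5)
Your proposal is correct and follows essentially the same route as the paper: the same decomposition $\sqrt{n}(\widehat{\boldsymbol{\theta}}_{\tau}-\boldsymbol{\theta}_{0})=\sqrt{n}(\widehat{\boldsymbol{\theta}}_{\tau}-\boldsymbol{\theta}_{n})+\boldsymbol{d}$, the same limiting normal law with mean $\boldsymbol{d}$ and covariance $\boldsymbol{\Sigma}_{\tau}(\boldsymbol{\theta}_{0})$, and the same standardisation plus quadratic-form argument yielding $\chi_p^2(\boldsymbol{d}^{T}\boldsymbol{\Sigma}_{\tau}^{-1}(\boldsymbol{\theta}_{0})\boldsymbol{d})$. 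Your closing remark about justifying the transfer of (\ref{a.25}) to the moving sequence $\boldsymbol{\theta}_{n}$ (via contiguity/Le Cam or a direct expansion) is a point the paper simply asserts without comment, so you are, if anything, slightly more careful than the published proof.
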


\begin{proof}
We can write
\[
\widehat{\boldsymbol{\theta}}_{\tau}-\boldsymbol{\theta}_0=\widehat{\boldsymbol{\theta}}_{\tau}
-\boldsymbol{\theta}_{n}+\boldsymbol{\theta}_{n}-\boldsymbol{\theta}_0
=(\widehat{\boldsymbol{\theta}}_{\tau}-\boldsymbol{\theta}_{n})+n^{-1/2}\boldsymbol{d}.
\]
Therefore, under $H_{1,n}$ given in (\ref{a.8}), we have from (\ref{a.25}),
\[
\sqrt{n}(\widehat{\boldsymbol{\theta}}_{\tau}-\boldsymbol{\theta}%
_{n})\overset{\mathcal{L}}{\underset{n\mathcal{\rightarrow}\infty
}{\longrightarrow}}\mathcal{N}\left(  \boldsymbol{0},\boldsymbol{\Sigma
}_{\tau}(\boldsymbol{\theta}_{0})\right)
\]
and hence
\[
\sqrt{n}(\widehat{\boldsymbol{\theta}}_{\tau}-\boldsymbol{\theta}%
_{0})\overset{\mathcal{L}}{\underset{n\mathcal{\rightarrow}\infty
}{\longrightarrow}}\mathcal{N}\left(  \boldsymbol{d},\boldsymbol{\Sigma
}_{\tau}(\boldsymbol{\theta}_{0})\right)  .
\]
Note that $W_{n}^0(\boldsymbol{\theta}_{0})$ can be written by
\[
W_{n}(\boldsymbol{\theta}_{0})=\left(  n^{1/2}\boldsymbol{\Sigma}_{\tau
}^{-1/2}(\boldsymbol{\theta}_{0})(\widehat{\boldsymbol{\theta}}_{\tau
}-\boldsymbol{\theta}_{0})\right)  ^{T}\left(  n^{1/2}\boldsymbol{\Sigma
}_{\tau}^{-1/2}(\boldsymbol{\theta}_{0})(\widehat{\boldsymbol{\theta}%
}_{\tau}-\boldsymbol{\theta}_{0})\right)
\]
and under $H_{1,n}$ given in (\ref{a.8}), we have,%
\[
n^{1/2}\boldsymbol{\Sigma}_{\tau}^{-1/2}(\boldsymbol{\theta}_{0}%
)(\widehat{\boldsymbol{\theta}}_{\tau}-\boldsymbol{\theta}_{0}%
)\overset{\mathcal{L}}{\underset{n\mathcal{\rightarrow}\infty}{\longrightarrow
}}\mathcal{N}\left(  \boldsymbol{\Sigma}_{\tau}^{-1/2}(\boldsymbol{\theta
}_{0})\boldsymbol{d},\boldsymbol{I}_{p\times p}\right)  .
\]
We apply the following result concerning quadratic forms. \textquotedblleft If
$\boldsymbol{Z\sim}\mathcal{N}\left(  \boldsymbol{\mu},\boldsymbol{\Sigma
}\right)  $, $\boldsymbol{\Sigma}$ is a symmetric projection of rank $k$ and
$\boldsymbol{\Sigma\mu=\mu}$, then $\boldsymbol{Z}^{T}\boldsymbol{Z}$ is a
chi-square distribution with $k$ degrees of freedom and non-centrality
parameter $\boldsymbol{\mu}^{T}\boldsymbol{\mu}$\textquotedblright. Therefore
\[
W_{n}^0(\boldsymbol{\theta}_{0})\overset{\mathcal{L}%
}{\underset{n\mathcal{\rightarrow}\infty}{\longrightarrow}}\chi_{p}^{2}\left(
\delta\right)  ,
\]
where $\delta$ was defined in (\ref{a.9}).
\end{proof}

The last theorem permits us to get an approximation to the power function at
$\boldsymbol{\theta}_{n}$ by
\begin{eqnarray}
\pi^{\tau}_{W_{n}^0(\boldsymbol{\theta}_{0})}(\boldsymbol{\theta}_{n})=1-G_{\chi
_{p}^{2}\left(  \delta\right)  }\left(  \chi_{p,\alpha}^{2}\right)  ,
\label{EQ:Power_one}
\end{eqnarray}
where $G_{\chi_{p}^{2}\left(  \delta\right)  }\left(  z\right)$ is the distribution function of $\chi_{p}^{2}\left(  \delta\right) $
evaluated at the point $z$.

Based on this result we can also obtain an approximation of the power function at a
generic point $\boldsymbol{\theta}^{\ast}$, because we can consider
$\boldsymbol{d}=n^{1/2}\left(  \boldsymbol{\theta}^{\ast}-\boldsymbol{\theta
}_{0}\right)  $ and then $\boldsymbol{\theta}_{n}=\boldsymbol{\theta}^{\ast}$.

\subsection{Wald-type tests for composite null hypotheses}\label{sec2.2}

In many practical hypothesis testing problems, the restricted parameter space
$\Theta_{0}\subset\Theta$ is defined by a set of $r<p$ non-redundant
restrictions of the form
\begin{equation}
\boldsymbol{h}(\boldsymbol{\theta})=\boldsymbol{0} \label{a.10}%
\end{equation}
on $\Theta$, where $\boldsymbol{h}:\mathbb{R}^{p}\rightarrow\mathbb{R}^{r}$ is
a vector-valued function such that the full rank $p\times r$ matrix
\begin{equation}
\boldsymbol{H}\left(  \boldsymbol{\theta}\right)  =\frac{\partial
\boldsymbol{h}(\boldsymbol{\theta})}{\partial\boldsymbol{\theta}^{T}}
\label{a.11}%
\end{equation}
exists and is continuous in $\boldsymbol{\theta}$. 

Our interest will be in testing
\begin{equation}
H_{0}:\boldsymbol{\theta}\in\Theta_{0}\subset{\mathbb{R}}^{p-r}\text{ against
}H_{1}:\boldsymbol{\theta}\in\Theta-\Theta_{0} \label{a.12}%
\end{equation}
using Wald-type test statistics.

\begin{definition}
Let $\widehat{\boldsymbol{\theta}}_{\tau}$ be the MDPDE of
$\boldsymbol{\theta}$. The family of proposed Wald-type test statistics for
testing the composite null hypothesis (\ref{a.12}) is given by%
\begin{equation}
W_{n}(\widehat{\boldsymbol{\theta}}_{\tau})=n\boldsymbol{h}^{T}%
(\widehat{\boldsymbol{\theta}}_{\tau})\left(  \boldsymbol{H}^{T}%
(\widehat{\boldsymbol{\theta}}_{\tau})\boldsymbol{\Sigma}_{\tau
}(\widehat{\boldsymbol{\theta}}_{\tau})\boldsymbol{H}%
(\widehat{\boldsymbol{\theta}}_{\tau})\right)  ^{-1}\boldsymbol{h}%
(\widehat{\boldsymbol{\theta}}_{\tau}), \label{a.13}%
\end{equation}

\end{definition}

In the next theorem we are going to present the asymptotic distribution of
$W_{n}(\widehat{\boldsymbol{\theta}}_{\tau})$.

\begin{theorem}
The asymptotic distribution of the Wald-type test statistics $W_{n}%
(\widehat{\boldsymbol{\theta}}_{\tau})$, under the null hypothesis given in
(\ref{a.12}), is chi-squared with $r$ degrees of freedom.
\label{THM:null_composite}
\end{theorem}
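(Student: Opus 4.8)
The plan is to reduce the composite case to the simple case via a first-order delta-method argument applied to the constraint function $\boldsymbol{h}$. First I would fix the true parameter value $\boldsymbol{\theta}_0\in\Theta_0$, so that $\boldsymbol{h}(\boldsymbol{\theta}_0)=\boldsymbol{0}$ under the null hypothesis (\ref{a.12}). Starting from the asymptotic normality of the MDPDE in (\ref{a.25}), namely $\sqrt{n}(\widehat{\boldsymbol{\theta}}_{\tau}-\boldsymbol{\theta}_0)\overset{\mathcal{L}}{\longrightarrow}\mathcal{N}(\boldsymbol{0},\boldsymbol{\Sigma}_{\tau}(\boldsymbol{\theta}_0))$, I would perform a first-order Taylor expansion of $\boldsymbol{h}$ about $\boldsymbol{\theta}_0$:
\[
\boldsymbol{h}(\widehat{\boldsymbol{\theta}}_{\tau})=\boldsymbol{h}(\boldsymbol{\theta}_0)+\boldsymbol{H}^{T}(\boldsymbol{\theta}_0)(\widehat{\boldsymbol{\theta}}_{\tau}-\boldsymbol{\theta}_0)+o_p(n^{-1/2})=\boldsymbol{H}^{T}(\boldsymbol{\theta}_0)(\widehat{\boldsymbol{\theta}}_{\tau}-\boldsymbol{\theta}_0)+o_p(n^{-1/2}),
\]
using $\boldsymbol{h}(\boldsymbol{\theta}_0)=\boldsymbol{0}$ and the definition (\ref{a.11}) of $\boldsymbol{H}$. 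Multiplying by $\sqrt{n}$ and applying the delta method then yields
\[
\sqrt{n}\,\boldsymbol{h}(\widehat{\boldsymbol{\theta}}_{\tau})\overset{\mathcal{L}}{\underset{n\rightarrow\infty}{\longrightarrow}}\mathcal{N}\!\left(\boldsymbol{0},\,\boldsymbol{H}^{T}(\boldsymbol{\theta}_0)\boldsymbol{\Sigma}_{\tau}(\boldsymbol{\theta}_0)\boldsymbol{H}(\boldsymbol{\theta}_0)\right).
\]
Because $\boldsymbol{H}$ has full column rank $r$ and $\boldsymbol{\Sigma}_{\tau}(\boldsymbol{\theta}_0)$ is positive definite, the $r\times r$ covariance matrix $\boldsymbol{H}^{T}(\boldsymbol{\theta}_0)\boldsymbol{\Sigma}_{\tau}(\boldsymbol{\theta}_0)\boldsymbol{H}(\boldsymbol{\theta}_0)$ is nonsingular.

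Next I would assemble the test statistic. The standard quadratic-form result states that if $\boldsymbol{Z}\sim\mathcal{N}(\boldsymbol{0},\boldsymbol{\Gamma})$ with $\boldsymbol{\Gamma}$ nonsingular of dimension $r$, then $\boldsymbol{Z}^{T}\boldsymbol{\Gamma}^{-1}\boldsymbol{Z}\sim\chi_r^2$. Applying this to $\boldsymbol{Z}=\sqrt{n}\,\boldsymbol{h}(\widehat{\boldsymbol{\theta}}_{\tau})$ with $\boldsymbol{\Gamma}=\boldsymbol{H}^{T}(\boldsymbol{\theta}_0)\boldsymbol{\Sigma}_{\tau}(\boldsymbol{\theta}_0)\boldsymbol{H}(\boldsymbol{\theta}_0)$ gives
\[
n\,\boldsymbol{h}^{T}(\widehat{\boldsymbol{\theta}}_{\tau})\left(\boldsymbol{H}^{T}(\boldsymbol{\theta}_0)\boldsymbol{\Sigma}_{\tau}(\boldsymbol{\theta}_0)\boldsymbol{H}(\boldsymbol{\theta}_0)\right)^{-1}\boldsymbol{h}(\widehat{\boldsymbol{\theta}}_{\tau})\overset{\mathcal{L}}{\underset{n\rightarrow\infty}{\longrightarrow}}\chi_r^2.
\]
The remaining gap is that the statistic $W_n(\widehat{\boldsymbol{\theta}}_{\tau})$ in (\ref{a.13}) uses the estimated matrix $\boldsymbol{H}^{T}(\widehat{\boldsymbol{\theta}}_{\tau})\boldsymbol{\Sigma}_{\tau}(\widehat{\boldsymbol{\theta}}_{\tau})\boldsymbol{H}(\widehat{\boldsymbol{\theta}}_{\tau})$ rather than its value at $\boldsymbol{\theta}_0$. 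To close this, I would invoke the consistency of $\widehat{\boldsymbol{\theta}}_{\tau}$ (implied by (\ref{a.25})) together with the assumed continuity of $\boldsymbol{\Sigma}_{\tau}(\cdot)$ and of $\boldsymbol{H}(\cdot)$ stated in (\ref{a.11}); by the continuous mapping theorem the estimated matrix converges in probability to $\boldsymbol{\Gamma}$, so its inverse does too, and Slutsky's theorem guarantees that replacing $\boldsymbol{\theta}_0$ by $\widehat{\boldsymbol{\theta}}_{\tau}$ in the central matrix leaves the limiting distribution unchanged.

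I expect the main obstacle to be purely bookkeeping rather than conceptual: one must verify that $o_p(n^{-1/2})$ terms from the Taylor remainder are genuinely negligible after multiplication by $\sqrt{n}$ (this needs the differentiability of $\boldsymbol{h}$ and the $\sqrt{n}$-boundedness of $\widehat{\boldsymbol{\theta}}_{\tau}-\boldsymbol{\theta}_0$, both available here), and that the estimated weight matrix is invertible with probability tending to one, which follows from its convergence to the nonsingular limit $\boldsymbol{\Gamma}$. Since all three ingredients — asymptotic normality of the MDPDE, full rank of $\boldsymbol{H}$, and continuity of $\boldsymbol{\Sigma}_{\tau}$ — are already in force, the proof is essentially an application of the delta method plus Slutsky's theorem, and no new analytic difficulty arises.
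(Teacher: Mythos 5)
Your proposal is correct and follows essentially the same route as the paper's own proof: a first-order Taylor expansion of $\boldsymbol{h}$ at $\boldsymbol{\theta}_0$ combined with the asymptotic normality (\ref{a.25}) to get $\sqrt{n}\,\boldsymbol{h}(\widehat{\boldsymbol{\theta}}_{\tau})\rightarrow\mathcal{N}(\boldsymbol{0},\boldsymbol{H}^{T}(\boldsymbol{\theta}_0)\boldsymbol{\Sigma}_{\tau}(\boldsymbol{\theta}_0)\boldsymbol{H}(\boldsymbol{\theta}_0))$, followed by the standard quadratic-form result and a Slutsky argument to replace the weight matrix at $\boldsymbol{\theta}_0$ by its consistent estimate at $\widehat{\boldsymbol{\theta}}_{\tau}$. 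Your additional remarks on the nonsingularity of the limiting covariance and the invertibility of the estimated weight matrix are fine and only make explicit what the paper leaves implicit.
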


\begin{proof}
Let $\boldsymbol{\theta}_{0}\in\Theta_{0}$ be the true value of the parameter. A
Taylor expansion gives
\begin{align*}
\boldsymbol{h}(\widehat{\boldsymbol{\theta}}_{\tau})  &  =\boldsymbol{h}%
(\boldsymbol{\theta}_{0})\boldsymbol{+H}^{T}(\boldsymbol{\theta}%
_{0})(\widehat{\boldsymbol{\theta}}_{\tau}-\boldsymbol{\theta}_{0}%
)+o_{p}(n^{-1/2}\boldsymbol{1})\\
&  =\boldsymbol{H}^{T}(\boldsymbol{\theta}_{0})(\widehat{\boldsymbol{\theta}%
}_{\tau}-\boldsymbol{\theta}_{0})+o_{p}(n^{-1/2}\boldsymbol{1}).
\end{align*}
Under $H_{0}$
\[
\sqrt{n}(\widehat{\boldsymbol{\theta}}_{\tau}-\boldsymbol{\theta}%
_{0})\overset{\mathcal{L}}{\underset{n\mathcal{\rightarrow}\infty
}{\longrightarrow}}\mathcal{N}\left(  \boldsymbol{0}_{p},\boldsymbol{\Sigma
}_{\tau}(\boldsymbol{\theta}_{0})\right)
\]
and so
\[
\sqrt{n}\boldsymbol{h}(\widehat{\boldsymbol{\theta}}_{\tau}%
)\overset{\mathcal{L}}{\underset{n\mathcal{\rightarrow}\infty}{\longrightarrow
}}\mathcal{N}\left(  \boldsymbol{0}_{p},\boldsymbol{H}^{T}(\boldsymbol{\theta
}_{0})\boldsymbol{\Sigma}_{\tau}(\boldsymbol{\theta}_{0})\boldsymbol{H}%
(\boldsymbol{\theta}_{0})\right)  .
\]
Taking into account that $\mathrm{rank}\left(  \boldsymbol{H}\left(
\boldsymbol{\theta}\right)  \right)  =r$, we get
\[
n\boldsymbol{h}^{T}(\widehat{\boldsymbol{\theta}}_{\tau})\left(
\boldsymbol{H}^{T}(\boldsymbol{\theta}_{0})\boldsymbol{\Sigma}_{\tau
}(\boldsymbol{\theta}_{0})\boldsymbol{H}(\boldsymbol{\theta}_{0})\right)
^{-1}\boldsymbol{h}(\widehat{\boldsymbol{\theta}}_{\tau}%
)\overset{\mathcal{L}}{\underset{n\mathcal{\rightarrow}\infty}{\longrightarrow
}}\chi_{r}^{2}.
\]
But
$
\boldsymbol{H}^{T}(\widehat{\boldsymbol{\theta}}_{\tau})\boldsymbol{\Sigma
}_{\tau}(\widehat{\boldsymbol{\theta}}_{\tau})\boldsymbol{H}%
(\widehat{\boldsymbol{\theta}}_{\tau})
$
is a consistent estimator of 
$\boldsymbol{H}^{T}(\boldsymbol{\theta}_{0})\boldsymbol{\Sigma}_{\tau}(\boldsymbol{\theta}_{0})\boldsymbol{H}(\boldsymbol{\theta}_{0})$
by continuity of the matrices $\boldsymbol{H}(\boldsymbol{\theta})$ and $\boldsymbol{\Sigma}_{\tau}(\boldsymbol{\theta})$
at $\boldsymbol{\theta}=\boldsymbol{\theta}_0$. 
Then, it holds that
\[
W_{n}(\widehat{\boldsymbol{\theta}}_{\tau})\overset{\mathcal{L}%
}{\underset{n\mathcal{\rightarrow}\infty}{\longrightarrow}}\chi_{r}^{2}.
\]

\end{proof}

Based on the previous result, we shall reject the null hypothesis given in
(\ref{a.12}) if
\begin{equation}
W_{n}(\widehat{\boldsymbol{\theta}}_{\tau})>\chi_{r,\alpha}^{2}.
\label{a.14}%
\end{equation}
It is not easy to get an exact expression for the power function of the test
given in (\ref{a.14}). For that reason we are going to present a theorem that
will be important in order to get an approximation of the power function for
the test statistic presented in (\ref{a.14}).

\begin{theorem}
Let $\boldsymbol{\theta}^{\ast}\notin\Theta_{0}$ the true value of the
parameter with $\widehat{\boldsymbol{\theta}}_{\tau}\overset{\mathcal{P}%
}{\underset{n\mathcal{\rightarrow}\infty}{\longrightarrow}}\boldsymbol{\theta
}^{\ast}$. Define 
$$
s^{\ast}\left(  \boldsymbol{\theta}_{1},\boldsymbol{\theta}_{2}\right)
=\boldsymbol{h}^{T}(\boldsymbol{\theta}_{1}\boldsymbol{)}\left(
\boldsymbol{H}^{T}(\boldsymbol{\theta}_{2})\boldsymbol{\Sigma}_{\tau
}(\boldsymbol{\theta}_{2})\boldsymbol{H}(\boldsymbol{\theta}_{2})\right)
^{-1}\boldsymbol{h}(\boldsymbol{\theta}_{1}).
$$
Then, we have
\[
\sqrt{n}\left(  s^{\ast}(\widehat{\boldsymbol{\theta}}_{\tau}%
,\widehat{\boldsymbol{\theta}}_{\tau})-s^{\ast}(\boldsymbol{\theta}^{\ast
},\boldsymbol{\theta}^{\ast})\right)  \overset{\mathcal{L}%
}{\underset{n\mathcal{\rightarrow}\infty}{\longrightarrow}}\mathcal{N}\left(
0,\sigma^{2}(\boldsymbol{\theta}^{\ast})\right)  ,
\]
where
\[
\sigma^{2}(\boldsymbol{\theta}^{\ast})=4\boldsymbol{h}^{T}(\boldsymbol{\theta}^\ast)
\left(\boldsymbol{H}^{T}(\boldsymbol{\theta}^\ast)\boldsymbol{\Sigma}_{\tau}(\boldsymbol{\theta}^\ast)
\boldsymbol{H}(\boldsymbol{\theta}^\ast)\right)^{-1}\boldsymbol{h}(\boldsymbol{\theta}^\ast).
\]

\end{theorem}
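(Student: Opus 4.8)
The plan is to mirror the delta-method argument used in the proof of Theorem \ref{THM:Papprox}, now applied to the scalar map $\boldsymbol{\theta}\mapsto s^{\ast}(\boldsymbol{\theta},\boldsymbol{\theta})$. First I would record the two ingredients supplied by the earlier results: since $\boldsymbol{\theta}^{\ast}$ is the true parameter, the MDPDE asymptotics (\ref{a.25}) apply at $\boldsymbol{\theta}^{\ast}$, giving $\sqrt{n}(\widehat{\boldsymbol{\theta}}_{\tau}-\boldsymbol{\theta}^{\ast})\overset{\mathcal{L}}{\longrightarrow}\mathcal{N}(\boldsymbol{0},\boldsymbol{\Sigma}_{\tau}(\boldsymbol{\theta}^{\ast}))$, together with the consistency $\widehat{\boldsymbol{\theta}}_{\tau}\overset{\mathcal{P}}{\longrightarrow}\boldsymbol{\theta}^{\ast}$ assumed in the statement.

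Next I would perform a first-order Taylor expansion of $s^{\ast}(\widehat{\boldsymbol{\theta}}_{\tau},\widehat{\boldsymbol{\theta}}_{\tau})$ about $\boldsymbol{\theta}^{\ast}$. The two-argument notation is exactly what makes the bookkeeping transparent: the first argument carries the ``signal'' $\boldsymbol{h}(\cdot)$ while the second carries the sandwiched matrix $(\boldsymbol{H}^{T}\boldsymbol{\Sigma}_{\tau}\boldsymbol{H})^{-1}$. Differentiating through the first argument and using $\partial\boldsymbol{h}(\boldsymbol{\theta})/\partial\boldsymbol{\theta}^{T}=\boldsymbol{H}^{T}(\boldsymbol{\theta})$ gives
\[
\left.\frac{\partial s^{\ast}(\boldsymbol{\theta},\boldsymbol{\theta}^{\ast})}{\partial\boldsymbol{\theta}^{T}}\right\vert_{\boldsymbol{\theta}=\boldsymbol{\theta}^{\ast}}=2\boldsymbol{h}^{T}(\boldsymbol{\theta}^{\ast})\left(\boldsymbol{H}^{T}(\boldsymbol{\theta}^{\ast})\boldsymbol{\Sigma}_{\tau}(\boldsymbol{\theta}^{\ast})\boldsymbol{H}(\boldsymbol{\theta}^{\ast})\right)^{-1}\boldsymbol{H}^{T}(\boldsymbol{\theta}^{\ast}),
\]
so that, exactly as in Theorem \ref{THM:Papprox}, the limiting law of $\sqrt{n}(s^{\ast}(\widehat{\boldsymbol{\theta}}_{\tau},\widehat{\boldsymbol{\theta}}_{\tau})-s^{\ast}(\boldsymbol{\theta}^{\ast},\boldsymbol{\theta}^{\ast}))$ coincides with that of this gradient applied to $\sqrt{n}(\widehat{\boldsymbol{\theta}}_{\tau}-\boldsymbol{\theta}^{\ast})$. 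Reading off the variance by the delta method, the gradient sandwiches $\boldsymbol{\Sigma}_{\tau}(\boldsymbol{\theta}^{\ast})$ to produce
\[
4\,\boldsymbol{h}^{T}(\boldsymbol{\theta}^{\ast})\boldsymbol{M}^{-1}\boldsymbol{H}^{T}(\boldsymbol{\theta}^{\ast})\boldsymbol{\Sigma}_{\tau}(\boldsymbol{\theta}^{\ast})\boldsymbol{H}(\boldsymbol{\theta}^{\ast})\boldsymbol{M}^{-1}\boldsymbol{h}(\boldsymbol{\theta}^{\ast}),\qquad\boldsymbol{M}=\boldsymbol{H}^{T}(\boldsymbol{\theta}^{\ast})\boldsymbol{\Sigma}_{\tau}(\boldsymbol{\theta}^{\ast})\boldsymbol{H}(\boldsymbol{\theta}^{\ast}).
\]
The decisive simplification is that the inner triple product is precisely $\boldsymbol{M}$, so the sandwich telescopes and the variance collapses to $4\boldsymbol{h}^{T}(\boldsymbol{\theta}^{\ast})\boldsymbol{M}^{-1}\boldsymbol{h}(\boldsymbol{\theta}^{\ast})=\sigma^{2}(\boldsymbol{\theta}^{\ast})$, which is the asserted form.

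The step I expect to be the main obstacle is justifying that only the first argument of $s^{\ast}$ contributes at this order, that is, that the randomness injected by evaluating the matrix $(\boldsymbol{H}^{T}\boldsymbol{\Sigma}_{\tau}\boldsymbol{H})^{-1}$ at $\widehat{\boldsymbol{\theta}}_{\tau}$ rather than at $\boldsymbol{\theta}^{\ast}$ does not enter the leading-order law. Here I would split $s^{\ast}(\widehat{\boldsymbol{\theta}}_{\tau},\widehat{\boldsymbol{\theta}}_{\tau})-s^{\ast}(\boldsymbol{\theta}^{\ast},\boldsymbol{\theta}^{\ast})$ into a first-argument increment and a second-argument increment, and then invoke the continuity of $\boldsymbol{H}$ and $\boldsymbol{\Sigma}_{\tau}$ at $\boldsymbol{\theta}^{\ast}$ together with $\widehat{\boldsymbol{\theta}}_{\tau}\overset{\mathcal{P}}{\longrightarrow}\boldsymbol{\theta}^{\ast}$ to replace the matrix evaluated at $\widehat{\boldsymbol{\theta}}_{\tau}$ by its probability limit $\boldsymbol{M}$, treating the sandwiched matrix as a consistent plug-in. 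This is the same plug-in device that legitimised the use of $\boldsymbol{H}^{T}(\widehat{\boldsymbol{\theta}}_{\tau})\boldsymbol{\Sigma}_{\tau}(\widehat{\boldsymbol{\theta}}_{\tau})\boldsymbol{H}(\widehat{\boldsymbol{\theta}}_{\tau})$ inside the composite Wald-type statistic in the proof of Theorem \ref{THM:null_composite}, and it is the only point where care beyond a routine Taylor expansion and an application of (\ref{a.25}) is needed.
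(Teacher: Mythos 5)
Your proposal is correct and follows essentially the same route as the paper: a first-order Taylor expansion in the first argument of $s^{\ast}$, the gradient $2\boldsymbol{h}^{T}(\boldsymbol{\theta}^{\ast})\left(\boldsymbol{H}^{T}(\boldsymbol{\theta}^{\ast})\boldsymbol{\Sigma}_{\tau}(\boldsymbol{\theta}^{\ast})\boldsymbol{H}(\boldsymbol{\theta}^{\ast})\right)^{-1}\boldsymbol{H}^{T}(\boldsymbol{\theta}^{\ast})$, and the telescoping of the sandwich to give $\sigma^{2}(\boldsymbol{\theta}^{\ast})$. The step you flag as the main obstacle — neutralising the randomness in the second argument — is handled in the paper by exactly the device you describe, namely asserting that $s^{\ast}(\widehat{\boldsymbol{\theta}}_{\tau},\widehat{\boldsymbol{\theta}}_{\tau})$ and $s^{\ast}(\widehat{\boldsymbol{\theta}}_{\tau},\boldsymbol{\theta}^{\ast})$ share the same asymptotic law by consistency of $\widehat{\boldsymbol{\theta}}_{\tau}$ and continuity of $\boldsymbol{H}$ and $\boldsymbol{\Sigma}_{\tau}$.
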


\begin{proof}
We can observe that $s^{\ast}(\widehat{\boldsymbol{\theta}}_{\tau
},\widehat{\boldsymbol{\theta}}_{\tau})$ and $s^{\ast}%
(\widehat{\boldsymbol{\theta}}_{\tau},\boldsymbol{\theta}^{\ast})$ have the
same asymptotic distribution because $\widehat{\boldsymbol{\theta}}_{\tau
}\overset{\mathcal{P}}{\underset{n\mathcal{\rightarrow}\infty}{\longrightarrow
}}\boldsymbol{\theta}^{\ast}.$ A first-order Talyor expansion of 
$s^{\ast}({\boldsymbol{\theta}},\boldsymbol{\theta}^{\ast})$ at
$\widehat{\boldsymbol{\theta}}_{\tau}$ around $\boldsymbol{\theta}^{\ast}$
gives%
\[
s^{\ast}(\widehat{\boldsymbol{\theta}}_{\tau},\boldsymbol{\theta}^{\ast})
-s^{\ast}(\boldsymbol{\theta}^{\ast},\boldsymbol{\theta}^{\ast})
=\left.  \frac{\partial s^{\ast}\left(  \boldsymbol{\theta}%
,\boldsymbol{\theta}^{\ast}\right)  }{\partial\boldsymbol{\theta}^{T}%
}\right\vert _{\boldsymbol{\theta=\theta}^{\ast}}(\widehat{\boldsymbol{\theta
}}_{\tau}-\boldsymbol{\theta}^{\ast})+o_{p}(n^{-1/2}).
\]
Now the result follows, because
\[
\sigma^{2}(\boldsymbol{\theta}^{\ast})=\left.  \frac{\partial s^{\ast}\left(
	\boldsymbol{\theta},\boldsymbol{\theta}^{\ast}\right)  }{\partial
	\boldsymbol{\theta}^{T}}\right\vert _{\boldsymbol{\theta=\theta}^{\ast}%
}\boldsymbol{\Sigma}_{\tau}(\boldsymbol{\theta}^{\ast})\left.
\frac{\partial s^{\ast}\left(  \boldsymbol{\theta},\boldsymbol{\theta}^{\ast
	}\right)  }{\partial\boldsymbol{\theta}}\right\vert _{\boldsymbol{\theta
		=\theta}^{\ast}}.
\]
and 
$$ \frac{\partial s^{\ast}\left(
	\boldsymbol{\theta},\boldsymbol{\theta}^{\ast}\right)  }{\partial\boldsymbol{\theta}^{T}}
=2\boldsymbol{h}^{T}(\boldsymbol{\theta})
\left(\boldsymbol{H}^{T}(\boldsymbol{\theta}^\ast)\boldsymbol{\Sigma}_{\tau}(\boldsymbol{\theta}^\ast)
\boldsymbol{H}(\boldsymbol{\theta}^\ast)\right)^{-1}\boldsymbol{H}^T(\boldsymbol{\theta}).
$$
\end{proof}

Using the above theorem we can derive an approximation to the power 
of the proposed Wald-type tests of composite null hypothesis at any $\boldsymbol{\theta}^{\ast}\notin\Theta_{0}$
using an argument similar to that of the derivation of the expression in (\ref{EQ:power_approx})  for the case of simple null hypothesis. 
This further indicates the consistency of our proposal at any fixed alternatives even for the composite hypotheses.

We may also find an approximation of the power of $W_{n}(\widehat{\boldsymbol{\theta}}_{\tau})$ at an alternative
close to the null hypothesis. Let $\boldsymbol{\theta}_{n}\in\Theta-\Theta
_{0}$ be a given alternative and let $\boldsymbol{\theta}_{0}$ be the element
in $\Theta_{0}$ closest to $\boldsymbol{\theta}_{n}$ in the Euclidean distance
sense. A first possibility to introduce contiguous alternative hypotheses is
to consider a fixed $\boldsymbol{d}\in\mathbb{R}^{p}$ and to permit
$\boldsymbol{\theta}_{n}$ to move towards $\boldsymbol{\theta}_{0}$ as $n$
increases through the relation
\begin{equation}
H_{1,n}:\boldsymbol{\theta}_{n}=\boldsymbol{\theta}_{0}+n^{-1/2}%
\boldsymbol{d}. \label{a.15}%
\end{equation}
A second approach is to relax the condition $\boldsymbol{h}\left(
\boldsymbol{\theta}\right)  =\boldsymbol{0}$ defining $\Theta_{0}.$ Let
$\boldsymbol{d}^{\ast}\in\mathbb{R}^{r}$ and consider the following sequence,
$\left\{  \boldsymbol{\theta}_{n}\right\}  $, of parameters moving towards
$\boldsymbol{\theta}_{0}$ according to
\begin{equation}
H_{1,n}^{\ast}:\boldsymbol{h}(\boldsymbol{\theta}_{n})=n^{-1/2}\boldsymbol{d}%
^{\ast}. \label{a.16}%
\end{equation}
Note that a Taylor series expansion of $\boldsymbol{h}(\boldsymbol{\theta}%
_{n})$ around $\boldsymbol{\theta}_{0}$ yields
\begin{eqnarray}
\boldsymbol{h}(\boldsymbol{\theta}_{n})=\boldsymbol{h}(\boldsymbol{\theta}%
_{0})+\boldsymbol{H}^{T}(\boldsymbol{\theta}_{0})\left(  \boldsymbol{\theta
}_{n}-\boldsymbol{\theta}_{0}\right)  +o\left(  \left\Vert \boldsymbol{\theta
}_{n}-\boldsymbol{\theta}_{0}\right\Vert \boldsymbol{1}\right).
\label{EQ:b}
\end{eqnarray}
By substituting $\boldsymbol{\theta}_{n}=\boldsymbol{\theta}_{0}%
+n^{-1/2}\boldsymbol{d}$ in (\ref{EQ:b}) and taking into account
that\textbf{\ }$\boldsymbol{h}(\boldsymbol{\theta}_{0})=\boldsymbol{0}$, we
get
\begin{equation}
\boldsymbol{h}(\boldsymbol{\theta}_{n})=n^{-1/2}\boldsymbol{H}^{T}%
(\boldsymbol{\theta}_{0})\boldsymbol{d}+o\left(  \left\Vert \boldsymbol{\theta
}_{n}-\boldsymbol{\theta}_{0}\right\Vert \boldsymbol{1} \right)  , \label{a.17}%
\end{equation}
so that the equivalence in the limit is obtained for $\boldsymbol{d}^{\ast
}\boldsymbol{=H}^{T}(\boldsymbol{\theta}_{0})\boldsymbol{d}$.

\begin{theorem}
Under the contiguous alternative hypotheses given in (\ref{a.15}) and (\ref{a.16}), we have

\begin{enumerate}
\item[i)] $W_{n}(\widehat{\boldsymbol{\theta}}_{\tau})
\underset{n\rightarrow\infty}{\overset{\mathcal{L}}{\longrightarrow}}\chi_{r}^{2}\left(a\right)$ under
$H_{1,n}$ given in (\ref{a.15}), where the non-centrality parameter ``$a$" is given by
$$
a = \boldsymbol{d}^{T}\boldsymbol{H}(\boldsymbol{\theta}_{0})\left(
\boldsymbol{H}^{T}(\boldsymbol{\theta}_{0})\boldsymbol{\Sigma}_{\tau
}(\boldsymbol{\theta}_{0})\boldsymbol{H}(\boldsymbol{\theta}_{0})\right)
^{-1}\boldsymbol{H}^{T}(\boldsymbol{\theta}_{0})\boldsymbol{d}.
$$

\item[ii)] $W_{n}(\widehat{\boldsymbol{\theta}}_{\tau})
\underset{n\rightarrow\infty}{\overset{\mathcal{L}}{\longrightarrow}}\chi_{r}^{2}\left(b\right)  $ under $H_{1,n}^{\ast}$ given in
(\ref{a.16}), where the non-centrality parameter ``$b$" is given by
$$
b =   \boldsymbol{\boldsymbol{d}}^{\ast T}\left(  \boldsymbol{H}%
^{T}(\boldsymbol{\theta}_{0})\boldsymbol{\Sigma}_{\tau}(\boldsymbol{\theta
}_{0})\boldsymbol{H}(\boldsymbol{\theta}_{0})\right)  ^{-1}%
\boldsymbol{\boldsymbol{d}}^{\ast}.
$$
\end{enumerate}

\begin{proof}
A Taylor series expansion of $\boldsymbol{h}(\widehat{\boldsymbol{\theta}%
}_{{\tau}})$ around $\boldsymbol{\theta}_{n}$ yields%
\[
\boldsymbol{h}(\widehat{\boldsymbol{\theta}}_{{\tau}})=\boldsymbol{h}%
(\boldsymbol{\theta}_{n})+\boldsymbol{H}^{T}(\boldsymbol{\theta}%
_{n})(\widehat{\boldsymbol{\theta}}_{{\tau}}-\boldsymbol{\theta}_{n})+o\left(
\left\Vert \widehat{\boldsymbol{\theta}}_{{\tau}}-\boldsymbol{\theta}%
_{n}\right\Vert \boldsymbol{1}\right)  .
\]
From (\ref{a.17}), we have
\[
\boldsymbol{h}(\widehat{\boldsymbol{\theta}}_{{\tau}})=n^{-1/2}\boldsymbol{H}%
^{T}(\boldsymbol{\theta}_{0})\boldsymbol{d}+\boldsymbol{H}^{T}%
(\boldsymbol{\theta}_{n})(\widehat{\boldsymbol{\theta}}_{{\tau}}%
-\boldsymbol{\theta}_{n})+o\left(  \left\Vert \widehat{\boldsymbol{\theta}%
}_{{\tau}}-\boldsymbol{\theta}_{n}\right\Vert \boldsymbol{1}\right)  +o\left(
\left\Vert \boldsymbol{\theta}_{n}-\boldsymbol{\theta}_{0}\right\Vert
\boldsymbol{1}\right)  .
\]
As
$
\sqrt{n}(\widehat{\boldsymbol{\theta}}_{{\tau}}-\boldsymbol{\theta}%
_{n})\underset{n\rightarrow\infty}{\overset{\mathcal{L}}{\longrightarrow}%
}\mathcal{N}(\boldsymbol{0},\boldsymbol{\Sigma}_{\tau}(\boldsymbol{\theta
}_{0}))
$
and $\sqrt{n}\left(  o\left(  \left\Vert \widehat{\boldsymbol{\theta}}_{{\tau}
}-\boldsymbol{\theta}_{n}\right\Vert \boldsymbol{1}\right)  +o\left(  \left\Vert
\boldsymbol{\theta}_{n}-\boldsymbol{\theta}_{0}\right\Vert \boldsymbol{1}\right)  \right)
=o_{p}\left( \boldsymbol{1}\right)  $, we have
\[
\sqrt{n}\boldsymbol{h}(\widehat{\boldsymbol{\theta}}_{{\tau}}%
)\underset{n\rightarrow\infty}{\overset{\mathcal{L}}{\longrightarrow}%
}\mathcal{N}(\boldsymbol{H}^{T}(\boldsymbol{\theta}_{0})\boldsymbol{d}%
,\boldsymbol{H}^{T}(\boldsymbol{\theta}_{0})\boldsymbol{\Sigma}_{\tau
}(\boldsymbol{\theta}_{0})\boldsymbol{H}(\boldsymbol{\theta}_{0})).
\]
We can observe by the relationship $\boldsymbol{d}^{\ast}\boldsymbol{=H}%
^{T}(\boldsymbol{\theta}_{0})\boldsymbol{d}$, if $\boldsymbol{h}%
(\boldsymbol{\theta}_{n})=n^{-1/2}\boldsymbol{d}^{\ast}$ that
\[
\sqrt{n}\boldsymbol{h}(\widehat{\boldsymbol{\theta}}_{{\tau}}%
)\underset{n\rightarrow\infty}{\overset{\mathcal{L}}{\longrightarrow}%
}\mathcal{N}(\boldsymbol{d}^{\ast},\boldsymbol{H}^{T}(\boldsymbol{\theta}%
_{0})\boldsymbol{\Sigma}_{\tau}(\boldsymbol{\theta}_{0})\boldsymbol{H}%
(\boldsymbol{\theta}_{0})).
\]
In our case, the quadratic form is
$W_{n}=\boldsymbol{Z}^{T}\boldsymbol{Z}$
with
$\boldsymbol{Z}=\sqrt{n}\boldsymbol{h}(\widehat{\boldsymbol{\theta}}_{{\tau}
})\left(  \boldsymbol{H}^{T}(\boldsymbol{\theta}_{0})\boldsymbol{\Sigma
}_{\tau}(\boldsymbol{\theta}_{0})\boldsymbol{H}(\boldsymbol{\theta}%
_{0})\right)  ^{-1/2}$
and
\[
\boldsymbol{Z}\underset{n\rightarrow\infty}{\overset{\mathcal{L}%
}{\longrightarrow}}\mathcal{N}\left(  \left(  \boldsymbol{H}^{T}%
(\boldsymbol{\theta}_{0})\boldsymbol{\Sigma}_{\tau}(\boldsymbol{\theta}%
_{0})\boldsymbol{H}(\boldsymbol{\theta}_{0})\right)  ^{-1/2}\boldsymbol{H}%
^{T}(\boldsymbol{\theta}_{0})\boldsymbol{d},\boldsymbol{I}\right)  ,
\]
where $\boldsymbol{I}$ is the identity $r\times r$ matrix. Hence, the
application of the result is immediate and the non-centrality parameter is
\[
\boldsymbol{d}^{T}\boldsymbol{H}(\boldsymbol{\theta}_{0})\left(
\boldsymbol{H}^{T}(\boldsymbol{\theta}_{0})\boldsymbol{\Sigma}_{\tau
}(\boldsymbol{\theta}_{0})\boldsymbol{H}(\boldsymbol{\theta}_{0})\right)
^{-1}\boldsymbol{H}^{T}(\boldsymbol{\theta}_{0})\boldsymbol{d}=\boldsymbol{d}%
^{\ast T}\left(  \boldsymbol{H}^{T}(\boldsymbol{\theta}_{0})\boldsymbol{\Sigma
}_{\tau}(\boldsymbol{\theta}_{0})\boldsymbol{H}(\boldsymbol{\theta}%
_{0})\right)  ^{-1}\boldsymbol{d}^{\ast}.
\]
\end{proof}
\end{theorem}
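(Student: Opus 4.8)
The plan is to treat both parts uniformly by reducing each to the quadratic-form fact quoted earlier: if $\boldsymbol{Z}\xrightarrow{\mathcal{L}}\mathcal{N}(\boldsymbol{\nu},\boldsymbol{I}_r)$ then $\boldsymbol{Z}^T\boldsymbol{Z}\xrightarrow{\mathcal{L}}\chi_r^2(\boldsymbol{\nu}^T\boldsymbol{\nu})$. The essential observation is that parts (i) and (ii) differ only in the asymptotic \emph{mean} of $\sqrt{n}\,\boldsymbol{h}(\widehat{\boldsymbol{\theta}}_\tau)$; its asymptotic covariance will be $\boldsymbol{H}^T(\boldsymbol{\theta}_0)\boldsymbol{\Sigma}_\tau(\boldsymbol{\theta}_0)\boldsymbol{H}(\boldsymbol{\theta}_0)$ in both cases, so the two non-centralities $a$ and $b$ are simply $\boldsymbol{\nu}^T\boldsymbol{\nu}$ computed from the respective limiting means.

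First I would record that under either contiguous sequence $\boldsymbol{\theta}_n\to\boldsymbol{\theta}_0$, and since (\ref{a.25}) gives $\sqrt{n}(\widehat{\boldsymbol{\theta}}_\tau-\boldsymbol{\theta}_n)\xrightarrow{\mathcal{L}}\mathcal{N}(\boldsymbol{0},\boldsymbol{\Sigma}_\tau(\boldsymbol{\theta}_0))$, it follows that $\widehat{\boldsymbol{\theta}}_\tau=\boldsymbol{\theta}_n+O_p(n^{-1/2})\xrightarrow{P}\boldsymbol{\theta}_0$. I then expand $\boldsymbol{h}$ at $\widehat{\boldsymbol{\theta}}_\tau$ around $\boldsymbol{\theta}_n$,
\[
\boldsymbol{h}(\widehat{\boldsymbol{\theta}}_\tau)=\boldsymbol{h}(\boldsymbol{\theta}_n)+\boldsymbol{H}^T(\boldsymbol{\theta}_n)(\widehat{\boldsymbol{\theta}}_\tau-\boldsymbol{\theta}_n)+o(\|\widehat{\boldsymbol{\theta}}_\tau-\boldsymbol{\theta}_n\|\boldsymbol{1}),
\]
and substitute the appropriate value of $\boldsymbol{h}(\boldsymbol{\theta}_n)$: under $H_{1,n}$ I use (\ref{a.17}) to get $\boldsymbol{h}(\boldsymbol{\theta}_n)=n^{-1/2}\boldsymbol{H}^T(\boldsymbol{\theta}_0)\boldsymbol{d}+o(\|\boldsymbol{\theta}_n-\boldsymbol{\theta}_0\|\boldsymbol{1})$, while under $H_{1,n}^*$ the defining relation gives $\boldsymbol{h}(\boldsymbol{\theta}_n)=n^{-1/2}\boldsymbol{d}^*$ exactly.

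Multiplying through by $\sqrt{n}$ and using the continuity of $\boldsymbol{H}$ to replace $\boldsymbol{H}^T(\boldsymbol{\theta}_n)$ by $\boldsymbol{H}^T(\boldsymbol{\theta}_0)$, a Slutsky argument then yields $\sqrt{n}\,\boldsymbol{h}(\widehat{\boldsymbol{\theta}}_\tau)\xrightarrow{\mathcal{L}}\mathcal{N}(\boldsymbol{\mu},\boldsymbol{H}^T(\boldsymbol{\theta}_0)\boldsymbol{\Sigma}_\tau(\boldsymbol{\theta}_0)\boldsymbol{H}(\boldsymbol{\theta}_0))$, with $\boldsymbol{\mu}=\boldsymbol{H}^T(\boldsymbol{\theta}_0)\boldsymbol{d}$ in case (i) and $\boldsymbol{\mu}=\boldsymbol{d}^*$ in case (ii). Standardizing by the inverse square root of the limiting covariance gives $\boldsymbol{Z}\xrightarrow{\mathcal{L}}\mathcal{N}(\boldsymbol{\nu},\boldsymbol{I}_r)$ with $\boldsymbol{\nu}=(\boldsymbol{H}^T(\boldsymbol{\theta}_0)\boldsymbol{\Sigma}_\tau(\boldsymbol{\theta}_0)\boldsymbol{H}(\boldsymbol{\theta}_0))^{-1/2}\boldsymbol{\mu}$; since the plug-in weight matrix $\boldsymbol{H}^T(\widehat{\boldsymbol{\theta}}_\tau)\boldsymbol{\Sigma}_\tau(\widehat{\boldsymbol{\theta}}_\tau)\boldsymbol{H}(\widehat{\boldsymbol{\theta}}_\tau)$ is consistent (as shown in the proof of Theorem~\ref{THM:null_composite}), it follows that $W_n(\widehat{\boldsymbol{\theta}}_\tau)=\boldsymbol{Z}^T\boldsymbol{Z}+o_p(1)$, and the quadratic-form fact delivers the non-central $\chi_r^2$ limits with $\boldsymbol{\nu}^T\boldsymbol{\nu}$ equal to $a$ and $b$ respectively.

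The point demanding care is the bookkeeping of the two remainder terms rather than any deep difficulty: because $\|\boldsymbol{\theta}_n-\boldsymbol{\theta}_0\|=O(n^{-1/2})$ and $\|\widehat{\boldsymbol{\theta}}_\tau-\boldsymbol{\theta}_n\|=O_p(n^{-1/2})$, both $o(\cdot)$ terms become $o_p(1)$ after multiplication by $\sqrt{n}$ and hence drop out of the limit; one must also verify that the discrepancy between $\boldsymbol{H}^T(\boldsymbol{\theta}_n)$ and $\boldsymbol{H}^T(\boldsymbol{\theta}_0)$, when acting on the $O_p(n^{-1/2})$ vector $\widehat{\boldsymbol{\theta}}_\tau-\boldsymbol{\theta}_n$, is asymptotically negligible, which again follows from continuity of $\boldsymbol{H}$. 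Finally, the identity $\boldsymbol{d}^*=\boldsymbol{H}^T(\boldsymbol{\theta}_0)\boldsymbol{d}$ established before the theorem makes the two non-centralities coincide, confirming the equivalence in the limit of the two alternative formulations.
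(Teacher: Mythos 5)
Your proposal is correct and follows essentially the same route as the paper's own proof: a Taylor expansion of $\boldsymbol{h}(\widehat{\boldsymbol{\theta}}_{\tau})$ around $\boldsymbol{\theta}_{n}$, substitution of $\boldsymbol{h}(\boldsymbol{\theta}_{n})$ via (\ref{a.17}) or the defining relation of $H_{1,n}^{\ast}$, the limit $\sqrt{n}\,\boldsymbol{h}(\widehat{\boldsymbol{\theta}}_{\tau})\rightarrow\mathcal{N}(\boldsymbol{\mu},\boldsymbol{H}^{T}(\boldsymbol{\theta}_{0})\boldsymbol{\Sigma}_{\tau}(\boldsymbol{\theta}_{0})\boldsymbol{H}(\boldsymbol{\theta}_{0}))$, and the standard quadratic-form result. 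Your added care with the remainder terms, the replacement of $\boldsymbol{H}^{T}(\boldsymbol{\theta}_{n})$ by $\boldsymbol{H}^{T}(\boldsymbol{\theta}_{0})$, and the consistency of the plug-in weight matrix only makes explicit what the paper leaves implicit.
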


\section{Robustness of the Wald-type tests for Non-homogeneous Observations\label{sec3}}

\subsection{Influence functions of the Wald-type test statistics}
\label{sec3.1}

In order to study the robustness of a testing procedure, 
the first measure to consider is Hampel's influence function (IF) of the test statistics,
introduced by \cite{Rousseeuw/Ronchetti:1979} for i.i.d.~data; 
see also \cite{Rousseeuw/Ronchetti:1981} and \cite{Hampel/etc:1986} for detail.
In case of non-homogeneous data, the concept of IF has been extended suitably  
by \cite{Huber:1983} and \cite{Ghosh/Basu:2013, Ghosh/Basu:2016} for the estimators
and by \cite{Ghosh/Basu:2015} and \cite{Aerts/Haesbroeck:2016} for test statistics. 
Here, we will follow these extended definitions of IF to study the robustness
of our proposed Wald-type test statistics for non-homogeneous observations.

In order to define and study the IF for the Wald-type test statistics, 
we first need the same for the MDPDE used in constructing the Wald-type test statistics;
we will briefly recall the IF of the MDPDE under non-homogeneous observations for the sake of completeness.
Suppose $G_i$ denote the true distribution of $Y_i$ having corresponding density $g_i$ for each $i=1,\ldots, n$;
under the model null distribution with true parameter value $\boldsymbol{\theta}_0$ we have
$G_i=F_{i,\boldsymbol{\theta}}$ and $g_i=f_{i,\boldsymbol{\theta}}$ for each $i$.
Denote $\underline{\boldsymbol{G}}= (G_1,\cdots,G_n)$ and
$\mathbf{\underline{F}}_{{\boldsymbol{\theta}}_0}=(F_{1,\boldsymbol{\theta}_0},\cdots, F_{n,\boldsymbol{\theta}_0})$.
Then the minimum DPD functional $
\boldsymbol{T}_{\tau}(\underline{\boldsymbol{G}})$ for independent but non-homogeneous observations 
at the true distribution $\underline{\boldsymbol{G}}$ 
is defined as the minimizer, with respect to ${\boldsymbol{\theta}} \in \Theta$, of 
the average DPD measure 
$\frac{1}{n} \displaystyle\sum_{i=1}^n d_{\tau}(g_i,f_{i,\boldsymbol{\theta}})$ with 
\begin{equation}
d_{\tau}(g_i,f_{i,\boldsymbol{\theta}}) = \int \left\{ f_{i,\boldsymbol{\theta}}(y)^{1+\tau} 
- \left(1+\frac{1}{\tau}\right)f_{i,\boldsymbol{\theta}}(y)^{\tau}g_i(y) 
+ \frac{1}{\tau}g_i(y)^{1+\tau}\right\}dy.
\nonumber
\end{equation}
Now, for each $i=1, \ldots, n$, let us denote by  $G_{i,\epsilon} = (1-\epsilon) G_i + \epsilon \wedge_{t_i}$
the $\epsilon$-contaminated distribution in the $i$-th direction, 
where $\wedge_{t_i}$ denotes the degenerate distribution at the contamination point $t_i$. 
Note that, in the case of non-homogeneous data the contamination can be either in any fixed direction,
say $i_0$-th direction, or in all the $n$ directions.
The corresponding IF of the minimum DPD functional $\boldsymbol{T}_{\tau}(\underline{\boldsymbol{G}})$
has been established in \cite{Ghosh/Basu:2013}; their forms at the model null distributions are given by
\begin{eqnarray}
IF_{i_0}(t_{i_0}; \boldsymbol{T}_{\tau}, \mathbf{\underline{F}}_{{\boldsymbol{\theta}}_0})
&=&  \boldsymbol\Psi_{n,\tau}^{-1}({\boldsymbol{\theta}}_0)
\frac{1}{n}\boldsymbol{D}_{\tau, i_0}(t_{i_0};{\boldsymbol{\theta}}_0),
\label{EQ:IF_MDPDE0}\\
IF(t_{1},\ldots,t_n; \boldsymbol{T}_{\tau}, \mathbf{\underline{F}}_{{\boldsymbol{\theta}}_0})
&=& \boldsymbol\Psi_{n,\tau}^{-1}({\boldsymbol{\theta}}_0)
\frac{1}{n} \sum_{i=1}^{n}\boldsymbol{D}_{\tau, i}(t_{i}; {\boldsymbol{\theta}}_0),
\label{EQ:IF_MDPDEa}
\end{eqnarray}
where $\boldsymbol{D}_{\tau, i}(t;{\boldsymbol{\theta}})
=\left[  f_{i,\boldsymbol{\theta}}(t)^{\tau} {\boldsymbol{u}_{i,\boldsymbol{\theta}}}(t) - \boldsymbol\xi_{i,\tau}\right]$
with $\boldsymbol\xi_{i,\tau}$ being as defined in Equation (\ref{EQ:xi}).
Note that these IFs are bounded at $\tau>0$ and unbounded at $\tau=0$,
implying the robustness of the MDPDEs with $\tau>0$ over the classical MLE (at $\tau=0$).

Now, we can defined the IF for the proposed Wald-type test statistics. 
We define the associated statistical functional, evaluated at $\underline{\boldsymbol{G}}$, as (ignoring the multiplier $n$)%
\begin{eqnarray}
W_{{\tau}}^{0}(\underline{\boldsymbol{G}}) &=&
(\boldsymbol{T}_{{\tau}}(\underline{\boldsymbol{G}})-\boldsymbol{\theta}_{0})^{T}
\boldsymbol{\Sigma}_{{\tau}}^{-1}(\boldsymbol{\theta}_{0})
(\boldsymbol{T}_{{\tau}}(\underline{\boldsymbol{G}})-\boldsymbol{\theta}_{0})
\label{EQ:WaldFunc0}%
\end{eqnarray}
corresponding to (\ref{a.7}) for the simple null hypothesis, and 
\begin{eqnarray}
W_{{\tau}}(\underline{\boldsymbol{G}}) &=&
\boldsymbol{h}^{T}(\boldsymbol{T}_{{\tau}}(\underline{\boldsymbol{G}}))
\left(  \boldsymbol{H}^{T}(\boldsymbol{T}_{{\tau}}(\underline{\boldsymbol{G}}))
\boldsymbol{\Sigma}_{\tau}(\boldsymbol{T}_{{\tau}}(\underline{\boldsymbol{G}}))
\boldsymbol{H}(\boldsymbol{T}_{{\tau}}(\underline{\boldsymbol{G}}))\right)^{-1}
\boldsymbol{h}(\boldsymbol{T}_{{\tau}}(\underline{\boldsymbol{G}})) 
\label{EQ:WaldFunc}%
\end{eqnarray}
corresponding to (\ref{a.13}) for the composite null hypothesis.

First we consider the Wald-type test functional $W_{{\tau}}^{0}$ for the simple null hypothesis
and contamination only one direction, say $i_0$-th direction.
The corresponding IF is then defined as  
\begin{eqnarray}
IF_{i_0}(t_{i_0}; W_{\tau}^{0}, \underline{\mathbf{G}})  
= \left.\frac{\partial}{\partial\epsilon} W_{\tau}^{0}(G_1, \cdots, G_{i_0-1}, G_{i_0,\epsilon},G_{i_0+1},\cdots, G_n)\right|_{\epsilon=0} 
= 2 (\boldsymbol{T}_{{\tau}}(\underline{\boldsymbol{G}})-\boldsymbol{\theta}_{0})^{T}
\boldsymbol{\Sigma}_{{\tau}}^{-1}(\boldsymbol{\theta}_{0})
IF_{i_0}(t_{i_0}; \boldsymbol{T}_{\tau}, \underline{\mathbf{G}}),\nonumber
\end{eqnarray}
which, when evaluated at the null distribution $\underline{\mathbf{G}}=\mathbf{\underline{F}}_{{\boldsymbol{\theta}}_0}$,
becomes identically zero as $\boldsymbol{T}_{\tau}(\mathbf{\underline{F}}_{{\boldsymbol{\theta}}_0}) = {\boldsymbol{\theta}}_0$.
So, one need to consider the second order IF of the proposed Wald-type test functional $W_{\tau}^{0}$ defined as
\begin{eqnarray}
IF_{i_0}^{(2)}(t_{i_0}; W_{\tau}^{0}, \underline{\mathbf{G}}) 
&=& \frac{\partial^2}{\partial^2\epsilon} 
W_{\tau}^{0}(G_1,\cdots,G_{i_0-1},G_{i_0,\epsilon},G_{i_0+1},\cdots, G_n) \big|_{\epsilon=0}. \nonumber 
\end{eqnarray}
When evaluated at the null model distribution  $\underline{\mathbf{G}}=\mathbf{\underline{F}}_{{\boldsymbol{\theta}}_0}$, 
this second order IF has the simplified form 
\begin{eqnarray}
IF_{i_0}^{(2)}(t_{i_0}; W_{\tau}^{0}, \mathbf{\underline{F}}_{{\boldsymbol{\theta}}_0}) 
&=& 2 IF_{i_0}(t_{i_0}; \boldsymbol{T}_{\tau}, \mathbf{\underline{F}}_{{\boldsymbol{\theta}}_0})^T 
\boldsymbol{\Sigma}_{{\tau}}^{-1}(\boldsymbol{\theta}_{0})
IF_{i_0}(t_{i_0}; \boldsymbol{T}_{\tau}, \mathbf{\underline{F}}_{{\boldsymbol{\theta}}_0}) \nonumber\\
&=& 2 \left[\frac{1}{n}\boldsymbol{D}_{\tau, i_0}(t_{i_0};{\boldsymbol{\theta}}_0)\right]^T 
\left[\boldsymbol\Psi_{n,\tau}^{-1}({\boldsymbol{\theta}}_0)
\boldsymbol{\Sigma}_{{\tau}}^{-1}(\boldsymbol{\theta}_{0})
\boldsymbol\Psi_{n,\tau}^{-1}({\boldsymbol{\theta}}_0)\right]
\left[\frac{1}{n}\boldsymbol{D}_{\tau, i_0}(t_{i_0};{\boldsymbol{\theta}}_0)\right]. 
\label{EQ:IF2W00}
\end{eqnarray}
Similarly, we can derive the first and second order IF of $W_{\tau}^{0}$ 
for contamination in all directions at the point $\boldsymbol{t}=(t_1,\ldots, t_n)$
respectively defined as 
\begin{eqnarray}
IF(\boldsymbol{t}; W_{\tau}^{0}, \underline{\mathbf{G}}) 
= \frac{\partial}{\partial\epsilon} W_{\tau}^{0}(G_{1,\epsilon},\cdots, G_{n,\epsilon})\big|_{\epsilon=0}, 
\mbox{ and }~
IF^{(2)}(\boldsymbol{t}; W_{\tau}^{0}, \underline{\mathbf{G}}) 
= \frac{\partial^2}{\partial^2\epsilon} 
W_{\tau}^{0}(G_{1,\epsilon},\cdots, G_{n,\epsilon}) \big|_{\epsilon=0}. \nonumber 
\end{eqnarray}
A direct calculation shows that, at the simple null model distribution 
$\underline{\mathbf{G}}=\mathbf{\underline{F}}_{{\boldsymbol{\theta}}_0}$, 
these IFs simplifies to 
\begin{eqnarray}
IF(\boldsymbol{t}; W_{\tau}^{0}, \mathbf{\underline{F}}_{{\boldsymbol{\theta}}_0}) 
&=& 0, \nonumber\\
IF^{(2)}(\boldsymbol{t}; W_{\tau}^{0}, \mathbf{\underline{F}}_{{\boldsymbol{\theta}}_0}) 
&=& 2 IF(\boldsymbol{t}; \boldsymbol{T}_{\tau}, \mathbf{\underline{F}}_{{\boldsymbol{\theta}}_0})^T 
\boldsymbol{\Sigma}_{{\tau}}^{-1}(\boldsymbol{\theta}_{0})
IF(\boldsymbol{t}; \boldsymbol{T}_{\tau}, \mathbf{\underline{F}}_{{\boldsymbol{\theta}}_0}) \nonumber\\
&=& 2 \left[\frac{1}{n}\sum_{i=1}^n\boldsymbol{D}_{\tau, i}(t_{i};{\boldsymbol{\theta}}_0)\right]^T 
\left[\boldsymbol\Psi_{n,\tau}^{-1}({\boldsymbol{\theta}}_0)
\boldsymbol{\Sigma}_{{\tau}}^{-1}(\boldsymbol{\theta}_{0})
\boldsymbol\Psi_{n,\tau}^{-1}({\boldsymbol{\theta}}_0)\right]
\left[\frac{1}{n}\sum_{i=1}^n\boldsymbol{D}_{\tau, i}(t_{i};{\boldsymbol{\theta}}_0)\right]. 
\label{EQ:IF2W0a}
\end{eqnarray}
Note that, both the second order IF in (\ref{EQ:IF2W00}) and (\ref{EQ:IF2W0a}) of the Wald-type test functional $W_{\tau}^0$
for testing simple null hypothesis under contamination in one or all directions are bounded,
whenever the corresponding MDPDE functional has bounded IF, i.e., for any $\tau>0$. 
This implies robustness of our proposed Wald-type tests for simple null hypothesis with $\tau>0$.

Next we can similarly derive the first and second order IFs of 
the proposed Wald-type tests functional $W_{\tau}$ in (\ref{EQ:WaldFunc}) for composite null hypotheses. 
For brevity, we will skip the details and present only the final results under composite null 
$\underline{\mathbf{G}}=\mathbf{\underline{F}}_{{\boldsymbol{\theta}}_0}$ 
with  $\boldsymbol{\theta}_0 \in \Theta$. In particular, the first order IF for contamination in 
either one or all directions are both identically zero, i.e., 
$$
IF_{i_0}(t_{i_0}; W_{\tau}^{0}, \mathbf{\underline{F}}_{{\boldsymbol{\theta}}_0})=0,~~~ 
IF(\boldsymbol{t}; W_{\tau}, \mathbf{\underline{F}}_{{\boldsymbol{\theta}}_0})=0, 
$$
and the corresponding second order IF has the form 
\begin{eqnarray}
IF_{i_0}^{(2)}(t_{i_0}; W_{\tau}, \mathbf{\underline{F}}_{{\boldsymbol{\theta}}_0}) 
&=& 2 IF_{i_0}(t_{i_0}; \boldsymbol{T}_{\tau}, \mathbf{\underline{F}}_{{\boldsymbol{\theta}}_0})^T 
\boldsymbol{H}(\boldsymbol{\theta}_{0})
\left[\boldsymbol{H}^{T}(\boldsymbol{\theta}_{0})\boldsymbol{\Sigma}_{\tau}(\boldsymbol{\theta}_{0})
\boldsymbol{H}(\boldsymbol{\theta}_{0})\right]^{-1}
\boldsymbol{H}^T(\boldsymbol{\theta}_{0})IF_{i_0}(t_{i_0}; \boldsymbol{T}_{\tau}, \mathbf{\underline{F}}_{{\boldsymbol{\theta}}_0}) \nonumber\\
IF^{(2)}(\boldsymbol{t}; W_{\tau}, \mathbf{\underline{F}}_{{\boldsymbol{\theta}}_0}) 
&=& 2 IF(\boldsymbol{t}; \boldsymbol{T}_{\tau}, \mathbf{\underline{F}}_{{\boldsymbol{\theta}}_0})^T 
\boldsymbol{H}(\boldsymbol{\theta}_{0})
\left[\boldsymbol{H}^{T}(\boldsymbol{\theta}_{0})\boldsymbol{\Sigma}_{\tau}(\boldsymbol{\theta}_{0})
\boldsymbol{H}(\boldsymbol{\theta}_{0})\right]^{-1}
\boldsymbol{H}^T(\boldsymbol{\theta}_{0})
IF(\boldsymbol{t}; \boldsymbol{T}_{\tau}, \mathbf{\underline{F}}_{{\boldsymbol{\theta}}_0}).\nonumber
\end{eqnarray}
Again these second order IFs are bounded for any $\tau>0$ and unbounded at $\tau=0$
implying the robustness of our proposed Wald-type tests for composite hypothesis testing also.

\subsection{Level and Power Influence Functions\label{sec3.2}}

We will now study the robustness of the level and power of 
the proposed Wald-type tests through the corresponding 
influence functions for their asymptotic level and powers 
\citep{Hampel/etc:1986,Heritier/Ronchetti:1994,Toma/Broniatowski:2010,Ghosh/Basu:2015}.
Noting the consistency of these proposed Wald-type tests, we consider their asymptotic power 
under the contiguous alternatives in (\ref{a.8}) and (\ref{a.15}) respectively 
for the simple and composite hypotheses. 
Additionally, considering suitable contamination over these alternatives and the null hypothesis,
we define the contaminated distributions, for each $i=1, \ldots, n$,  
$$
F_{i,n,\epsilon,t_i}^L = \left(1-\frac{\epsilon}{\sqrt{n}}\right) 
F_{i,{\boldsymbol{\theta}}_0}+\frac{\epsilon}{\sqrt{n}} \wedge_{t_i},
\mbox{ and }
F_{i,n,\epsilon,t_i}^P = \left(1-\frac{\epsilon}{\sqrt{n}}\right) 
F_{i,{\boldsymbol{\theta}}_n}+\frac{\epsilon}{\sqrt{n}} \wedge_{t_i},
$$
respectively for the analysis of level and power stability.
Denote $\mathbf{t} = (t_1, \cdots, t_n)^T$, 
$\mathbf{\underline{F}}_{n,\epsilon,\mathbf{t}}^P = (F_{1,n,\epsilon,t_i}^P, \cdots, F_{n,n,\epsilon,t_i}^P)$ and 
$\mathbf{\underline{F}}_{n,\epsilon,\mathbf{t}}^L = (F_{1,n,\epsilon,t_i}^L, \cdots, F_{n,n,\epsilon,t_i}^L)$. 
Then the level influence function (LIF) and the power influence function (PIF) 
for the proposed Wald-type test statistics $W_{n}^0(\boldsymbol{\theta}_{0})$ 
for the simple null hypothesis (\ref{a.6}) are defined, 
assuming the nominal level of significance to be $\alpha$, as
\begin{eqnarray}
LIF(\mathbf{t}; W_{n}^0, \mathbf{\underline{F}}_{{\boldsymbol{\theta}}_0} ) 
&=& \lim_{n \rightarrow \infty} ~ \frac{\partial}{\partial \epsilon} 
P_{\mathbf{\underline{F}}_{n,\epsilon,\mathbf{t}}^L }( W_{n}^0(\boldsymbol{\theta}_{0}) >  \chi_{p,\alpha}^2) 
\big|_{\epsilon=0},\nonumber\\
PIF(\mathbf{t}; W_{n}^0, \mathbf{\underline{F}}_{{\boldsymbol{\theta}}_0} ) 
&=& \lim_{n \rightarrow \infty} ~ \frac{\partial}{\partial \epsilon} 
P_{\mathbf{\underline{F}}_{n,\epsilon,\mathbf{t}}^P }(  W_{n}^0(\boldsymbol{\theta}_{0}) >  \chi_{p,\alpha}^2) 
\big|_{\epsilon=0}.\nonumber
\end{eqnarray}
Similarly the LIF and PIF of the proposed Wald-type test statistics 
$W_{n}(\widehat{\boldsymbol{\theta}}_{\tau})$ for the composite null hypothesis (\ref{a.12}) 
can be defined through above expressions by replacing $W_{n}^0(\boldsymbol{\theta}_{0})$ and $\chi_{p,\alpha}^2$ 
by $W_{n}(\widehat{\boldsymbol{\theta}}_{\tau})$  and $\chi_{r,\alpha}^2$  respectively,
where $\boldsymbol{\theta}_0$ is now the true null parameter in $\Theta_0$.

Let us first consider the case of simple null hypothesis and 
derive the asymptotic power under the contiguous contaminated distribution 
$\mathbf{\underline{F}}_{n,\epsilon,\mathbf{t}}^P$ in the following theorem.

\begin{theorem}
\label{THM:7asymp_power_one}
Consider the problem of testing the simple null hypothesis (\ref{a.6}) 
by the proposed Wald-type test statistics $W_{n}^0(\boldsymbol{\theta}_{0})$ 
at $\alpha$-level of significance and consider the contiguous alternative hypotheses given by (\ref{a.8}).
Then the following results hold.
	
\begin{enumerate}
\item The asymptotic distribution of $W_{n}^{0}({\boldsymbol{\theta}}_{0})$ 
under $\mathbf{\underline{F}}_{n,\epsilon,\mathbf{t}}^P$ is $\chi_{p}^2(\delta_{\epsilon})$ with  
\[
\delta_{\epsilon}= \widetilde{\boldsymbol{d}}_{\epsilon,\boldsymbol{t},{\tau}}^{T}(\boldsymbol{\theta}_{0}) 
\boldsymbol{\Sigma}_{{\tau}}^{-1}(\boldsymbol{\theta}_{0}) 
\widetilde{\boldsymbol{d}}_{\epsilon,\boldsymbol{t},{\tau}}(\boldsymbol{\theta}_{0}),
\]
where $\widetilde{\boldsymbol{d}}_{\epsilon,\boldsymbol{t},{\tau}}(\boldsymbol{\theta}_{0})
=\boldsymbol{d}+\epsilon IF(\boldsymbol{t}; \boldsymbol{T}_{\tau}, \mathbf{\underline{F}}_{{\boldsymbol{\theta}}_0})$ 
with $IF(\boldsymbol{t}; \boldsymbol{T}_{\tau}, \mathbf{\underline{F}}_{{\boldsymbol{\theta}}_0})$ 
being given by (\ref{EQ:IF_MDPDEa}).
		
\item The corresponding asymptotic power under contiguous contaminated distribution 
$\mathbf{\underline{F}}_{n,\epsilon,\mathbf{t}}^P$ is given by
\begin{align}
{\pi}_{W_{n}^{0}}(\boldsymbol{\theta}_{n},\epsilon,\boldsymbol{t})  
&=  \lim_{n \rightarrow \infty} P_{\mathbf{\underline{F}}_{n,\epsilon,\mathbf{t}}^P }(W_{n}^0(\boldsymbol{\theta}_{0})>\chi_{p,\alpha}^2)
= 1-G_{\chi_{p}^{2}(\delta_{\epsilon})}(\chi_{p,\alpha}^{2})\nonumber\\
& = \sum\limits_{v=0}^{\infty}C_{v}\left(\widetilde{\boldsymbol{d}}_{\epsilon,\boldsymbol{t},{\tau}}(\boldsymbol{\theta}_{0}),
\boldsymbol{\Sigma}_{{\tau}}^{-1}(\boldsymbol{\theta}_{0})\right)  P\left(\chi_{p+2v}^{2}>\chi_{p,\alpha}^{2}\right) ,
\label{EQ:Cont_power_one}
\end{align}
where
$
C_{v}\left(  \boldsymbol{s},\boldsymbol{A}\right)  =\frac{\left(\mathbf{s}^{T}\boldsymbol{A}\mathbf{s}\right)^{v}}{v!2^{v}}
e^{-\frac{1}{2}\mathbf{s}^{T}\boldsymbol{A}\mathbf{s}}.
$	
\end{enumerate}
\begin{proof}
Denote $\boldsymbol{\theta}_{n}^{\ast}
=\boldsymbol{T}_{\tau}(\mathbf{\underline{F}}_{n,\epsilon,\mathbf{t}}^P)$.
Then, we can express our Wald-type test statistics in (\ref{a.7}) as 
\begin{eqnarray}
&& W_{n}^0(\boldsymbol{\theta}_{0}) 
= n(\widehat{\boldsymbol{\theta}}_{\tau}-\boldsymbol{\theta}_{0})^{T}
\boldsymbol{\Sigma}_{\tau}^{-1}(\boldsymbol{\theta}_{0})(\widehat{\boldsymbol{\theta}}_{\tau}-\boldsymbol{\theta}_{0}) \nonumber\\
&  =& n(\widehat{\boldsymbol{\theta}}_{\tau}-\boldsymbol{\theta}_{n}^{\ast})^{T}
\boldsymbol{\Sigma}_{\tau}^{-1}(\boldsymbol{\theta}_{0})(\widehat{\boldsymbol{\theta}}_{\tau}-\boldsymbol{\theta}_{n}^{\ast})
+2n(\widehat{\boldsymbol{\theta}}_{\tau}-\boldsymbol{\theta}_{n}^{\ast})^{T}
\boldsymbol{\Sigma}_{\tau}^{-1}(\boldsymbol{\theta}_{0})(\boldsymbol{\theta}_{n}^{\ast}-\boldsymbol{\theta}_{0}) 
+ n(\boldsymbol{\theta}_{n}^{\ast}-\boldsymbol{\theta}_{0})^{T}
\boldsymbol{\Sigma}_{\tau}^{-1}(\boldsymbol{\theta}_{0})(\boldsymbol{\theta}_{n}^{\ast}-\boldsymbol{\theta}_{0}).~~~~~~
\label{W0}
\end{eqnarray}
A suitable Taylor series expansion of $\boldsymbol{\theta}_{n}^{\ast}$,  as a function of $\epsilon$ at 
$\epsilon_{n}=0$ yields \citep{Ghosh/etc:2016a} 
\begin{align*}
\boldsymbol{\theta}_{n}^{\ast}  &  =\boldsymbol{\theta}_{n}
+\tfrac{\epsilon}{\sqrt{n}}IF(\boldsymbol{t}; \boldsymbol{T}_{\tau}, \mathbf{\underline{F}}_{{\boldsymbol{\theta}}_0})
+o_p(\frac{1}{\sqrt{n}}\boldsymbol{1}_{p}) 
\end{align*}
and hence 
\begin{align*}
\sqrt{n}(\boldsymbol{\theta}_{n}^{\ast}-\boldsymbol{\theta}_{n})  &
=\epsilon IF(\boldsymbol{t}; \boldsymbol{T}_{\tau}, \mathbf{\underline{F}}_{{\boldsymbol{\theta}}_0})  + o_{p}(\boldsymbol{1}_{p}),\\
\sqrt{n}(\boldsymbol{\theta}_{n}^{\ast}-\boldsymbol{\theta}_{0}-n^{-1/2}\boldsymbol{d})  
&  =\epsilon IF(\boldsymbol{t}; \boldsymbol{T}_{\tau}, \mathbf{\underline{F}}_{{\boldsymbol{\theta}}_0})  + o_{p}(\boldsymbol{1}_{p}).
\end{align*}
Writing  $\boldsymbol{\theta}_{n}^{\ast}$ in terms of $\boldsymbol{\theta}_{0}$, we get 
\begin{align}
\sqrt{n}(\boldsymbol{\theta}_{n}^{\ast}-\boldsymbol{\theta}_{0})  &
=\boldsymbol{d}+\epsilon IF(\boldsymbol{t}; \boldsymbol{T}_{\tau}, \mathbf{\underline{F}}_{{\boldsymbol{\theta}}_0}) 
+o_{p}(\boldsymbol{1}_{p})\nonumber\\
&  =\widetilde{\boldsymbol{d}}_{\epsilon,\boldsymbol{t},\tau
}(\boldsymbol{\theta}_{0})+o_{p}(\boldsymbol{1}_{p}). \label{DT}%
\end{align}
Using these, we can rewrite (\ref{W0}) as
\begin{eqnarray}
W_{n}^0(\boldsymbol{\theta}_{0}) 
= n\left(\sqrt{n}(\widehat{\boldsymbol{\theta}}_{\tau}-\boldsymbol{\theta}_{n}^{\ast})
+\widetilde{\boldsymbol{d}}_{\epsilon,\boldsymbol{x},\tau}(\boldsymbol{\theta}_{0})\right)^{T}
\boldsymbol{\Sigma}_{\tau}^{-1}(\boldsymbol{\theta}_{0})
\left(\sqrt{n}(\widehat{\boldsymbol{\theta}}_{\tau}-\boldsymbol{\theta}_{n}^{\ast})
+\widetilde{\boldsymbol{d}}_{\epsilon,\boldsymbol{x},\tau}(\boldsymbol{\theta}_{0})\right)
+o_{p}(1).
\label{EQ:W0.1}
\end{eqnarray}
But, under $\mathbf{\underline{F}}_{n,\epsilon,\mathbf{t}}^P$, the asymptotic distribution of MDPDE 
and continuity of $\boldsymbol{\Sigma}_{\tau}(\boldsymbol{\theta})$ implies that
\begin{equation}
\sqrt{n}(\widehat{\boldsymbol{\theta}}_{\tau}-\boldsymbol{\theta}_{n}^{\ast})
\underset{n\rightarrow\infty}{\overset{\mathcal{L}}{\longrightarrow}}
\mathcal{N}(\boldsymbol{0}_{p},\boldsymbol{\Sigma}_{\tau}(\boldsymbol{\theta}_{0})). 
\label{norm}%
\end{equation}
Hence combining (\ref{EQ:W0.1}) and (\ref{norm}), we finally get
\[
W_{n}^{0}(\widehat{\boldsymbol{\theta}}_{\tau})
\underset{n\rightarrow\infty}{\overset{\mathcal{L}}{\longrightarrow}}
\chi_{p}^{2}\left(\delta_{\epsilon}\right) 
\]
with $\delta_{\epsilon}$ as defined in the statement of the theorem. 
This completes the proof of Part 1 of the theorem.

Next, Part 2 of the theorem follows directly from 
the infinite series expansion of non-central distribution functions 
in terms of that of the central chi-square variables as follows:
\begin{align}
{\pi}_{W_{n}^{0}}(\boldsymbol{\theta}_{n},\epsilon,\boldsymbol{t})  
& =   \lim_{n \rightarrow \infty} P_{\mathbf{\underline{F}}_{n,\epsilon,\mathbf{t}}^P }(W_{n}^0(\boldsymbol{\theta}_{0})>\chi_{p,\alpha}^2)
\nonumber\\
&  = P(\chi_{p}^{2}\left(  \delta_{\epsilon}\right)  >\chi_{p,\alpha}^{2})
=1-G_{\chi_{p}^{2}\left(  \delta_{\epsilon}\right)  }\left(  \chi_{p,\alpha}^{2}\right)
\nonumber\\
&  =\sum\limits_{v=0}^{\infty}C_{v}\left(\widetilde{\boldsymbol{d}}_{\epsilon,\boldsymbol{t},{\tau}}(\boldsymbol{\theta}_{0}),
\boldsymbol{\Sigma}_{{\tau}}^{-1}(\boldsymbol{\theta}_{0})\right)  P\left(\chi_{p+2v}^{2}>\chi_{p,\alpha}^{2}\right).\nonumber
\end{align}	
\end{proof}
\end{theorem}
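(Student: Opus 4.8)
The plan is to recenter the test statistic at the MDPDE functional evaluated along the contaminated sequence and thereby reduce $W_n^0(\boldsymbol{\theta}_0)$ to a single quadratic form in an asymptotically Gaussian vector whose mean absorbs both the contiguous shift $\boldsymbol{d}$ from (\ref{a.8}) and the contamination contribution $\epsilon\, IF(\boldsymbol{t};\boldsymbol{T}_\tau,\mathbf{\underline{F}}_{{\boldsymbol{\theta}}_0})$. First I would set $\boldsymbol{\theta}_n^\ast=\boldsymbol{T}_\tau(\mathbf{\underline{F}}_{n,\epsilon,\mathbf{t}}^P)$ and write $\widehat{\boldsymbol{\theta}}_\tau-\boldsymbol{\theta}_0=(\widehat{\boldsymbol{\theta}}_\tau-\boldsymbol{\theta}_n^\ast)+(\boldsymbol{\theta}_n^\ast-\boldsymbol{\theta}_0)$; substituting this into (\ref{a.7}) splits $W_n^0$ into a purely stochastic quadratic term in $\widehat{\boldsymbol{\theta}}_\tau-\boldsymbol{\theta}_n^\ast$, a deterministic quadratic drift in $\boldsymbol{\theta}_n^\ast-\boldsymbol{\theta}_0$, and a cross term.

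The next step is to pin down the drift $\sqrt{n}(\boldsymbol{\theta}_n^\ast-\boldsymbol{\theta}_0)$. Here I would Taylor-expand the functional value $\boldsymbol{\theta}_n^\ast$ in the contamination proportion $\epsilon/\sqrt{n}$ about $\epsilon=0$, invoking the all-direction influence-function expansion (\ref{EQ:IF_MDPDEa}) for the MDPDE. Since the underlying shift is already $\boldsymbol{\theta}_n=\boldsymbol{\theta}_0+n^{-1/2}\boldsymbol{d}$, the two contributions of order $n^{-1/2}$ combine to give $\sqrt{n}(\boldsymbol{\theta}_n^\ast-\boldsymbol{\theta}_0)=\boldsymbol{d}+\epsilon\, IF(\boldsymbol{t};\boldsymbol{T}_\tau,\mathbf{\underline{F}}_{{\boldsymbol{\theta}}_0})+o_p(\boldsymbol{1}_p)=\widetilde{\boldsymbol{d}}_{\epsilon,\boldsymbol{t},\tau}(\boldsymbol{\theta}_0)+o_p(\boldsymbol{1}_p)$. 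Together with the limiting normality of the recentered estimator, this lets me recombine the three pieces into a single quadratic form in $\sqrt{n}(\widehat{\boldsymbol{\theta}}_\tau-\boldsymbol{\theta}_n^\ast)+\widetilde{\boldsymbol{d}}_{\epsilon,\boldsymbol{t},\tau}(\boldsymbol{\theta}_0)$ up to an $o_p(1)$ remainder.

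The crux of the argument, and the step I expect to be the main obstacle, is to establish that under the contiguous contaminated sequence $\mathbf{\underline{F}}_{n,\epsilon,\mathbf{t}}^P$ one has $\sqrt{n}(\widehat{\boldsymbol{\theta}}_\tau-\boldsymbol{\theta}_n^\ast)\overset{\mathcal{L}}{\longrightarrow}\mathcal{N}(\boldsymbol{0}_p,\boldsymbol{\Sigma}_\tau(\boldsymbol{\theta}_0))$, with precisely the \emph{null-model} covariance in the limit. This is delicate because the estimator is centered at a moving functional value $\boldsymbol{\theta}_n^\ast$ rather than a fixed parameter, and the sampling law is itself contaminated. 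I would handle it by a contiguity argument: the contamination fraction $\epsilon/\sqrt{n}$ and the parameter drift $n^{-1/2}\boldsymbol{d}$ both vanish at rate $n^{-1/2}$, so $\mathbf{\underline{F}}_{n,\epsilon,\mathbf{t}}^P$ is mutually contiguous to the null product model $\mathbf{\underline{F}}_{{\boldsymbol{\theta}}_0}$, the Ghosh-Basu expansion underlying (\ref{a.25}) continues to apply, and the covariance converges to $\boldsymbol{\Sigma}_\tau(\boldsymbol{\theta}_0)$ by the assumed continuity of $\boldsymbol{\Sigma}_\tau(\cdot)$. Granting this, Slutsky's theorem gives $n^{1/2}\boldsymbol{\Sigma}_\tau^{-1/2}(\boldsymbol{\theta}_0)(\widehat{\boldsymbol{\theta}}_\tau-\boldsymbol{\theta}_0)\overset{\mathcal{L}}{\longrightarrow}\mathcal{N}(\boldsymbol{\Sigma}_\tau^{-1/2}(\boldsymbol{\theta}_0)\widetilde{\boldsymbol{d}}_{\epsilon,\boldsymbol{t},\tau}(\boldsymbol{\theta}_0),\boldsymbol{I}_p)$, and the same symmetric-projection quadratic-form lemma used for the contiguous-alternatives theorem identifies the limit of $W_n^0$ as $\chi_p^2(\delta_\epsilon)$ with $\delta_\epsilon=\widetilde{\boldsymbol{d}}_{\epsilon,\boldsymbol{t},\tau}^T(\boldsymbol{\theta}_0)\boldsymbol{\Sigma}_\tau^{-1}(\boldsymbol{\theta}_0)\widetilde{\boldsymbol{d}}_{\epsilon,\boldsymbol{t},\tau}(\boldsymbol{\theta}_0)$, which is Part 1.

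Finally, Part 2 follows routinely by writing the non-central chi-square distribution function $G_{\chi_p^2(\delta_\epsilon)}$ as a Poisson-weighted mixture of central chi-square distribution functions; this immediately yields the series representation $\sum_{v=0}^\infty C_v(\widetilde{\boldsymbol{d}}_{\epsilon,\boldsymbol{t},\tau}(\boldsymbol{\theta}_0),\boldsymbol{\Sigma}_\tau^{-1}(\boldsymbol{\theta}_0))\,P(\chi_{p+2v}^2>\chi_{p,\alpha}^2)$ for the asymptotic power in (\ref{EQ:Cont_power_one}), requiring no further probabilistic input.
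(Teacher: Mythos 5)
Your proposal is correct and follows essentially the same route as the paper's proof: the same recentering at $\boldsymbol{\theta}_n^\ast=\boldsymbol{T}_\tau(\mathbf{\underline{F}}_{n,\epsilon,\mathbf{t}}^P)$, the same Taylor expansion in $\epsilon/\sqrt{n}$ yielding the shifted drift $\widetilde{\boldsymbol{d}}_{\epsilon,\boldsymbol{t},\tau}(\boldsymbol{\theta}_0)$, the same quadratic-form lemma for Part 1, and the same Poisson-mixture series for Part 2. If anything, your explicit contiguity justification for the asymptotic normality of $\sqrt{n}(\widehat{\boldsymbol{\theta}}_\tau-\boldsymbol{\theta}_n^\ast)$ with the null-model covariance is slightly more careful than the paper, which simply asserts this step.
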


Note that, substituting $\epsilon=0$ in Expression (\ref{EQ:Cont_power_one}) of the above theorem, 
we have an infinite series expression for the asymptotic power of the proposed Wald-type tests 
under the contiguous alternative hypotheses	(\ref{a.8}) as given by
\begin{equation}
{\pi}_{W_{n}^{0}}(\boldsymbol{\theta}_{n})
={\pi}_{W_{n}^{0}}(\boldsymbol{\theta}_{n},0,\boldsymbol{t})
= \sum\limits_{v=0}^{\infty}C_{v}\left(\boldsymbol{d},\boldsymbol{\Sigma}_{{\tau}}^{-1}(\boldsymbol{\theta}_{0})\right)  
P\left( \chi_{p+2v}^{2}>\chi_{p,\alpha}^{2}\right),\nonumber
\end{equation}
which has been previously obtained in terms of suitable distribution function in (\ref{EQ:Power_one}).

Further, substituting $\boldsymbol{d}=\boldsymbol{0}_{p}$ in Expression (\ref{EQ:Cont_power_one}), 
we get the asymptotic level of the proposed Wald-type tests under the contaminated distribution 
$\mathbf{\underline{F}}_{n,\epsilon,\mathbf{t}}^L$ as given by
\begin{align}
\alpha_{W_{n}^{0}}(\epsilon,\boldsymbol{t})  
&  ={\pi}_{W_{n}^{0}}(\boldsymbol{\theta}_{0},\epsilon,\boldsymbol{t})\big|_{\boldsymbol{d}=\boldsymbol{0}_{p}}
=\sum\limits_{v=0}^{\infty}C_{v}\left(\epsilon IF(\boldsymbol{t}; \boldsymbol{T}_{\tau}, \mathbf{\underline{F}}_{{\boldsymbol{\theta}}_0}),
\boldsymbol{\Sigma}_{{\tau}}^{-1}(\boldsymbol{\theta}_{0})\right) P\left(\chi_{p+2v}^{2}>\chi_{p,\alpha}^{2}\right)  .\nonumber
\end{align}

Finally, using the expression of the asymptotic power ${\pi}_{W_{n}^{0}}(\boldsymbol{\theta}_{0},\epsilon,\boldsymbol{t})$
from (\ref{EQ:Cont_power_one}) and differentiating it suitably,  
we get the required PIF and then get the LIF by substituting $\boldsymbol{d}=\boldsymbol{0}_{p}$
as presented in the following theorem.

\begin{theorem}
\label{THM:PIF_one}
Under the assumptions of Theorem \ref{THM:7asymp_power_one},
the power and level influence functions of the proposed Wald-type tests for the simple null hypothesis (\ref{a.6}) 
are given by
\begin{equation}
PIF(\mathbf{t}; W_{n}^0, \mathbf{\underline{F}}_{{\boldsymbol{\theta}}_0} ) 
= K_{p}^{\ast}\left( \mathbf{d}^{T}\boldsymbol{\Sigma}_{{\tau}}^{-1}(\boldsymbol{\theta}_{0})\mathbf{d}\right)  
\boldsymbol{d}^{T}\boldsymbol{\Sigma}_{{\tau}}^{-1}(\boldsymbol{\theta}_{0})
 IF(\boldsymbol{t}; \boldsymbol{T}_{\tau}, \mathbf{\underline{F}}_{{\boldsymbol{\theta}}_0}),\nonumber
\end{equation}
with 
$
K_{p}^{\ast}(s)=e^{-\frac{s}{2}}\sum\limits_{v=0}^{\infty}\frac{s^{v-1}}{v!2^{v}}(2v-s)
P\left(\chi_{p+2v}^{2}>\chi_{p,\tau}^{2}\right)
$
and	
$$
LIF(\mathbf{t}; W_{n}^0, \mathbf{\underline{F}}_{{\boldsymbol{\theta}}_0} ) =0.
$$
\end{theorem}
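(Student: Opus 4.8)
The plan is to obtain the PIF by differentiating, with respect to the contamination parameter $\epsilon$, the asymptotic contiguous power $\pi_{W_{n}^{0}}(\boldsymbol{\theta}_{n},\epsilon,\boldsymbol{t})$ already computed in Theorem \ref{THM:7asymp_power_one}, and evaluating the derivative at $\epsilon=0$; the LIF then follows by specializing to $\boldsymbol{d}=\boldsymbol{0}_{p}$. Abbreviating $\boldsymbol{A}=\boldsymbol{\Sigma}_{\tau}^{-1}(\boldsymbol{\theta}_{0})$ and $\boldsymbol{IF}=IF(\boldsymbol{t};\boldsymbol{T}_{\tau},\mathbf{\underline{F}}_{{\boldsymbol{\theta}}_0})$, the key observation is that the entire $\epsilon$-dependence of the series $\sum_{v}C_{v}(\widetilde{\boldsymbol{d}}_{\epsilon,\boldsymbol{t},\tau}(\boldsymbol{\theta}_{0}),\boldsymbol{A})\,P(\chi_{p+2v}^{2}>\chi_{p,\alpha}^{2})$ enters only through the scalar quadratic form $q(\epsilon)=\widetilde{\boldsymbol{d}}_{\epsilon,\boldsymbol{t},\tau}(\boldsymbol{\theta}_{0})^{T}\boldsymbol{A}\,\widetilde{\boldsymbol{d}}_{\epsilon,\boldsymbol{t},\tau}(\boldsymbol{\theta}_{0})=(\boldsymbol{d}+\epsilon\boldsymbol{IF})^{T}\boldsymbol{A}(\boldsymbol{d}+\epsilon\boldsymbol{IF})$, since $C_{v}(\boldsymbol{s},\boldsymbol{A})=q^{v}e^{-q/2}/(v!\,2^{v})$. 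Thus the vector derivative reduces to a one-dimensional chain-rule computation in $q$.

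Next I would differentiate each summand through $q$. A short calculation gives $\frac{d}{d\epsilon}C_{v}=\frac{q^{v-1}}{v!\,2^{v+1}}e^{-q/2}(2v-q)\,q'(\epsilon)$ with $q'(\epsilon)=2(\boldsymbol{d}+\epsilon\boldsymbol{IF})^{T}\boldsymbol{A}\,\boldsymbol{IF}$. Evaluating at $\epsilon=0$ uses $q(0)=\boldsymbol{d}^{T}\boldsymbol{A}\,\boldsymbol{d}=:s$ and $q'(0)=2\boldsymbol{d}^{T}\boldsymbol{A}\,\boldsymbol{IF}$, so the $v$-th term contributes $\frac{s^{v-1}}{v!\,2^{v}}e^{-s/2}(2v-s)\,\boldsymbol{d}^{T}\boldsymbol{A}\,\boldsymbol{IF}\cdot P(\chi_{p+2v}^{2}>\chi_{p,\alpha}^{2})$. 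Summing over $v$ and pulling out the common scalar factors $e^{-s/2}$ and $\boldsymbol{d}^{T}\boldsymbol{A}\,\boldsymbol{IF}$ reproduces exactly $PIF=K_{p}^{\ast}(s)\,\boldsymbol{d}^{T}\boldsymbol{A}\,\boldsymbol{IF}$ with $s=\mathbf{d}^{T}\boldsymbol{\Sigma}_{\tau}^{-1}(\boldsymbol{\theta}_{0})\mathbf{d}$, which is the claimed expression.

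The main obstacle is justifying the interchange of $\frac{d}{d\epsilon}$ and $\sum_{v}$, i.e. term-by-term differentiation of the infinite series. I would settle this by a Weierstrass-type argument: on a neighborhood of $\epsilon=0$ the map $\epsilon\mapsto q(\epsilon)$ is continuous and uniformly bounded, and since $P(\chi_{p+2v}^{2}>\chi_{p,\alpha}^{2})\le 1$, the differentiated terms are dominated by a summable sequence governed by the factor $q^{v-1}/(v!\,2^{v})$; this yields uniform convergence of the differentiated series and legitimizes the interchange. A secondary, cosmetic point is the apparent $s^{-1}$ singularity of $K_{p}^{\ast}$ as $s\to 0$: I would note it is removable, because splitting $(2v-s)$ shows the $v=0$ contribution to the $2v$-part vanishes (the factor $2v$ annihilates it), leaving two power series in $s$ that are entire; hence $K_{p}^{\ast}(0)$ is finite.

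Finally, the LIF corresponds to contaminating the null rather than the alternative, i.e. to $\boldsymbol{d}=\boldsymbol{0}_{p}$ (no contiguous shift). In the PIF the prefactor $\boldsymbol{d}^{T}\boldsymbol{A}\,\boldsymbol{IF}$ then vanishes identically, while $K_{p}^{\ast}(0)$ is finite by the remark above, so $LIF(\boldsymbol{t};W_{n}^{0},\mathbf{\underline{F}}_{{\boldsymbol{\theta}}_0})=K_{p}^{\ast}(0)\cdot 0=0$, as claimed. This vanishing reflects the first-order stability of the asymptotic level of the Wald-type test under infinitesimal contamination.
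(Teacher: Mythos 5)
Your proposal is correct and follows essentially the same route as the paper: differentiate the series expression for the contaminated contiguous power from Theorem \ref{THM:7asymp_power_one} term by term via the chain rule (you route it through the scalar $q(\epsilon)=\widetilde{\boldsymbol{d}}_{\epsilon,\boldsymbol{t},\tau}^{T}\boldsymbol{\Sigma}_{\tau}^{-1}\widetilde{\boldsymbol{d}}_{\epsilon,\boldsymbol{t},\tau}$ where the paper uses the vector gradient of $C_{v}$, but these are the same computation), then set $\boldsymbol{d}=\boldsymbol{0}_{p}$ for the LIF. Your added justifications for the term-by-term differentiation and for the removable singularity of $K_{p}^{\ast}$ at $s=0$ are points the paper leaves implicit, and they strengthen rather than alter the argument.
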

\begin{proof}
Considering the expression of ${\pi}_{W_{n}^{0}}(\boldsymbol{\theta}_{0},\epsilon,\boldsymbol{t})$
from (\ref{EQ:Cont_power_one}) in Theorem \ref{THM:7asymp_power_one} and using the definition of the PIF
along with the chain rule of derivatives, we get
\begin{eqnarray}
PIF(\mathbf{t}; W_{n}^0, \mathbf{\underline{F}}_{{\boldsymbol{\theta}}_0} )  
&=& \frac{\partial}{\partial\epsilon}\left.{\pi}_{W_{n}^{0}}(\boldsymbol{\theta}_{0},\epsilon,\boldsymbol{t})\right\vert _{\epsilon=0}
\nonumber\\
& =& \sum\limits_{v=0}^{\infty}\frac{\partial}{\partial\epsilon}\left.
C_{v}\left(\widetilde{\boldsymbol{d}}_{\epsilon,\boldsymbol{t},{\tau}}(\boldsymbol{\theta}_{0}),
\boldsymbol{\Sigma}_{{\tau}}^{-1}(\boldsymbol{\theta}_{0})\right)\right\vert _{\epsilon=0}
P\left(  \chi_{p+2v}^{2}>\chi_{p,\alpha}^{2}\right) \nonumber\\
&  =&  \sum\limits_{v=0}^{\infty}\left\{  \frac{\partial}{\partial\mathbf{s}}\left.  
C_{v}\left(  \mathbf{s},\boldsymbol{\Sigma}_{\tau}^{-1}(\boldsymbol{\theta}_{0})\right) 
\right\vert _{\mathbf{s}=\widetilde{\boldsymbol{d}}_{0,\boldsymbol{t},\tau}(\boldsymbol{\theta}_{0})} \right\}^{T}
\left\{  \frac{\partial}{\partial\epsilon}\left.
\widetilde{\boldsymbol{d}}_{\epsilon,\boldsymbol{t},\tau}(\boldsymbol{\theta}_{0})\right\vert _{\epsilon=0}\right\}  
P\left(\chi_{p+2v}^{2}>\chi_{p,\alpha}^{2}\right).~~~~~
\label{pifeq}
\end{eqnarray}
Now, one can check that  
$\widetilde{\boldsymbol{d}}_{0,\boldsymbol{x},\tau}(\boldsymbol{\theta}_{0})=\boldsymbol{d}$, 
$
\frac{\partial}{\partial\epsilon}\widetilde{\boldsymbol{d}}_{\epsilon,\boldsymbol{t},\tau}(\boldsymbol{\theta}_{0})
=IF(\boldsymbol{t}; \boldsymbol{T}_{\tau}, \mathbf{\underline{F}}_{{\boldsymbol{\theta}}_0})
$
and
\[
\frac{\partial}{\partial\mathbf{t}}C_{v}\left(  \mathbf{t},\mathbf{A}\right)
=\frac{\left(  \mathbf{t}^{T}\mathbf{A}\mathbf{t}\right)  ^{v-1}}{v!2^{v}%
}\left(  2v-\mathbf{t}^{T}\mathbf{A}\mathbf{t}\right)  \mathbf{A}%
\mathbf{t}e^{-\frac{1}{2}\mathbf{t}^{T}\mathbf{A}\mathbf{t}}.
\]
Substituting these expressions in (\ref{pifeq}) and simplifying, 
we obtain the required expression of the PIF as given in the theorem.

Finally, the LIF follows from the PIF by substituting $\boldsymbol{d}=\boldsymbol{0}_{p}$.
\end{proof}

\bigskip
Note that the above PIF is clearly bounded if and only if 
the IF of the MDPDE functional $\boldsymbol{T}_{\tau}$ is bounded, i.e., whenever $\tau>0$.
This again implies the robustness of the asymptotic power of the proposed Wald-type tests with $\tau>0$ 
under contiguous contamination, over the classical MLE based Wald test (at $\tau=0$) that has an unbounded PIF. 
Further, the asymptotic level of the Wald-type tests will not also be affected by a contiguous contamination
as suggested by its zero LIF.

Next we can similarly derive the PIF and LIF for the proposed Wald-type test statistics 
$W_{n}(\widehat{\boldsymbol{\theta}}_{\tau})$ for composite null hypotheses also.
For brevity, we only present the main results corresponding to Theorem \ref{THM:7asymp_power_one}
and \ref{THM:PIF_one} for the composite hypotheses case in the following theorems;
proofs are similar and hence omitted. 
The main implications are again the same proving the claimed robustness of 
our proposal with $\tau>0$ in terms of its asymptotic level and power under contiguous contamination 
through zero LIF and bounded PIF.

\begin{theorem}
\label{THM:7asymp_power_composite}
Consider the problem of testing the composite null hypothesis (\ref{a.12}) 
by the proposed Wald-type test statistics $W_{n}(\widehat{\boldsymbol{\theta}}_{\tau})$ 
at $\alpha$-level of significance and consider the contiguous alternative hypotheses given by (\ref{a.15}).
Then the following results hold.
	
\begin{enumerate}
\item The asymptotic distribution of $W_{n}(\widehat{\boldsymbol{\theta}}_{\tau})$ 
under $\mathbf{\underline{F}}_{n,\epsilon,\mathbf{t}}^P$ is $\chi_{r}^2(\delta_{\epsilon}^\ast)$ with  
\[
\delta_{\epsilon}^\ast = 
\widetilde{\boldsymbol{d}}_{\epsilon,\boldsymbol{t},{\tau}}^{T}(\boldsymbol{\theta}_{0})
\boldsymbol{H}(\boldsymbol{\theta}_{0})
\left[\boldsymbol{H}^{T}(\boldsymbol{\theta}_{0}) \boldsymbol{\Sigma}_{{\tau}}(\boldsymbol{\theta}_{0}) \boldsymbol{H}(\boldsymbol{\theta}_{0})\right]^{-1}
\boldsymbol{H}^{T}(\boldsymbol{\theta}_{0}) 
\widetilde{\boldsymbol{d}}_{\epsilon,\boldsymbol{t},{\tau}}(\boldsymbol{\theta}_{0}),
\]
		
\item The corresponding asymptotic power under contiguous contaminated distribution 
$\mathbf{\underline{F}}_{n,\epsilon,\mathbf{t}}^P$ is given by
\begin{align}
{\pi}_{W_{n}}(\boldsymbol{\theta}_{n},\epsilon,\boldsymbol{t})  
&  \lim_{n \rightarrow \infty} P_{\mathbf{\underline{F}}_{n,\epsilon,\mathbf{t}}^P }
(W_{n}(\widehat{\boldsymbol{\theta}}_{\tau})>\chi_{r,\alpha}^2)
= 1-G_{\chi_{r}^{2}(\delta_{\epsilon}^\ast)}(\chi_{r,\alpha}^{2})\nonumber\\
& = \sum\limits_{v=0}^{\infty}C_{v}\left(
\boldsymbol{H}^{T}(\boldsymbol{\theta}_{0})\widetilde{\boldsymbol{d}}_{\epsilon,\boldsymbol{t},{\tau}}(\boldsymbol{\theta}_{0}),
\left[\boldsymbol{H}^{T}(\boldsymbol{\theta}_{0}) \boldsymbol{\Sigma}_{{\tau}}(\boldsymbol{\theta}_{0})
\boldsymbol{H}(\boldsymbol{\theta}_{0})\right]^{-1}\right)  
P\left(\chi_{r+2v}^{2}>\chi_{r,\alpha}^{2}\right).		
\label{EQ:Cont_power_composite}
\end{align}
\end{enumerate}
\end{theorem}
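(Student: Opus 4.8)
The plan is to mirror closely the proof of Theorem \ref{THM:7asymp_power_one}, replacing the quadratic form in $(\widehat{\boldsymbol{\theta}}_{\tau}-\boldsymbol{\theta}_{0})$ by the corresponding quadratic form in the constraint vector $\boldsymbol{h}(\widehat{\boldsymbol{\theta}}_{\tau})$. First I would set $\boldsymbol{\theta}_{n}^{\ast}=\boldsymbol{T}_{\tau}(\mathbf{\underline{F}}_{n,\epsilon,\mathbf{t}}^P)$, the MDPDE functional evaluated at the contiguous contaminated distribution, and import from \cite{Ghosh/etc:2016a} the same $\epsilon$-expansion around $\epsilon=0$ used in the simple-null case, yielding
\[
\sqrt{n}(\boldsymbol{\theta}_{n}^{\ast}-\boldsymbol{\theta}_{0})
=\widetilde{\boldsymbol{d}}_{\epsilon,\boldsymbol{t},\tau}(\boldsymbol{\theta}_{0})+o_{p}(\boldsymbol{1}_{p}).
\]
Since the composite statistic is built from $\boldsymbol{h}$ rather than from the raw parameter, the target of Part 1 is the limiting law of $\sqrt{n}\,\boldsymbol{h}(\widehat{\boldsymbol{\theta}}_{\tau})$.

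Next I would carry out a two-stage Taylor expansion of $\boldsymbol{h}$, exactly as in Theorem \ref{THM:null_composite}: expanding $\boldsymbol{h}(\widehat{\boldsymbol{\theta}}_{\tau})$ about $\boldsymbol{\theta}_{n}^{\ast}$ and $\boldsymbol{h}(\boldsymbol{\theta}_{n}^{\ast})$ about $\boldsymbol{\theta}_{0}$, using $\boldsymbol{h}(\boldsymbol{\theta}_{0})=\boldsymbol{0}$ and the continuity of $\boldsymbol{H}$, I expect
\[
\sqrt{n}\,\boldsymbol{h}(\widehat{\boldsymbol{\theta}}_{\tau})
=\boldsymbol{H}^{T}(\boldsymbol{\theta}_{0})\sqrt{n}(\boldsymbol{\theta}_{n}^{\ast}-\boldsymbol{\theta}_{0})
+\boldsymbol{H}^{T}(\boldsymbol{\theta}_{0})\sqrt{n}(\widehat{\boldsymbol{\theta}}_{\tau}-\boldsymbol{\theta}_{n}^{\ast})
+o_{p}(\boldsymbol{1}).
\]
Inserting the displayed expansion of $\boldsymbol{\theta}_{n}^{\ast}$ together with the asymptotic normality $\sqrt{n}(\widehat{\boldsymbol{\theta}}_{\tau}-\boldsymbol{\theta}_{n}^{\ast})\overset{\mathcal{L}}{\longrightarrow}\mathcal{N}(\boldsymbol{0}_{p},\boldsymbol{\Sigma}_{\tau}(\boldsymbol{\theta}_{0}))$ of the MDPDE under $\mathbf{\underline{F}}_{n,\epsilon,\mathbf{t}}^P$, established in (\ref{norm}), then gives
\[
\sqrt{n}\,\boldsymbol{h}(\widehat{\boldsymbol{\theta}}_{\tau})
\overset{\mathcal{L}}{\longrightarrow}
\mathcal{N}\!\left(\boldsymbol{H}^{T}(\boldsymbol{\theta}_{0})\widetilde{\boldsymbol{d}}_{\epsilon,\boldsymbol{t},\tau}(\boldsymbol{\theta}_{0}),\,
\boldsymbol{H}^{T}(\boldsymbol{\theta}_{0})\boldsymbol{\Sigma}_{\tau}(\boldsymbol{\theta}_{0})\boldsymbol{H}(\boldsymbol{\theta}_{0})\right).
\]

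To close Part 1 I would again follow Theorem \ref{THM:null_composite}: consistency $\widehat{\boldsymbol{\theta}}_{\tau}\overset{\mathcal{P}}{\longrightarrow}\boldsymbol{\theta}_{0}$ along the contiguous contaminated sequence, with the continuity of $\boldsymbol{H}$ and $\boldsymbol{\Sigma}_{\tau}$, lets me replace the estimated sandwich $\boldsymbol{H}^{T}(\widehat{\boldsymbol{\theta}}_{\tau})\boldsymbol{\Sigma}_{\tau}(\widehat{\boldsymbol{\theta}}_{\tau})\boldsymbol{H}(\widehat{\boldsymbol{\theta}}_{\tau})$ by its deterministic limit via Slutsky. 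Putting $\boldsymbol{Z}=\left(\boldsymbol{H}^{T}(\boldsymbol{\theta}_{0})\boldsymbol{\Sigma}_{\tau}(\boldsymbol{\theta}_{0})\boldsymbol{H}(\boldsymbol{\theta}_{0})\right)^{-1/2}\sqrt{n}\,\boldsymbol{h}(\widehat{\boldsymbol{\theta}}_{\tau})$ makes $\boldsymbol{Z}$ asymptotically $\mathcal{N}(\boldsymbol{\mu},\boldsymbol{I}_{r})$ with $\boldsymbol{\mu}^{T}\boldsymbol{\mu}=\delta_{\epsilon}^{\ast}$, so the same quadratic-form result invoked in the preceding contiguous-alternative theorem delivers $W_{n}(\widehat{\boldsymbol{\theta}}_{\tau})\overset{\mathcal{L}}{\longrightarrow}\chi_{r}^{2}(\delta_{\epsilon}^{\ast})$. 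Part 2 is then immediate by expanding the non-central chi-square distribution function as a Poisson mixture of central ones, reading the arguments of $C_{v}$ off $\delta_{\epsilon}^{\ast}$, which reproduces (\ref{EQ:Cont_power_composite}).

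The main obstacle I anticipate is not the algebra but justifying the two functional-analytic steps under the drifting sequence $\mathbf{\underline{F}}_{n,\epsilon,\mathbf{t}}^P$ rather than at a fixed distribution: that the MDPDE functional admits the $\epsilon$-expansion around $\boldsymbol{\theta}_{n}$ with remainder $o_{p}(n^{-1/2})$, and that $\widehat{\boldsymbol{\theta}}_{\tau}$ stays consistent for $\boldsymbol{\theta}_{0}$ so the estimated sandwich converges. Both hinge on the contiguity of $\mathbf{\underline{F}}_{n,\epsilon,\mathbf{t}}^P$ to $\mathbf{\underline{F}}_{\boldsymbol{\theta}_{0}}$ and on the Ghosh--Basu regularity conditions; once these are granted, everything else parallels the simple-null argument verbatim.
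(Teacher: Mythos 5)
Your proposal is correct and follows exactly the route the paper intends: the paper omits this proof, stating it is analogous to Theorem \ref{THM:7asymp_power_one}, and your argument is precisely that analogue — the $\epsilon$-expansion of $\boldsymbol{T}_{\tau}(\mathbf{\underline{F}}_{n,\epsilon,\mathbf{t}}^P)$ giving $\sqrt{n}(\boldsymbol{\theta}_n^{\ast}-\boldsymbol{\theta}_0)=\widetilde{\boldsymbol{d}}_{\epsilon,\boldsymbol{t},\tau}(\boldsymbol{\theta}_0)+o_p(\boldsymbol{1}_p)$, combined with the Taylor expansion of $\boldsymbol{h}$ from Theorem \ref{THM:null_composite} and the quadratic-form lemma. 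The concerns you flag about the drifting-sequence justification are handled in the paper at the same level (by citation to the earlier references), so nothing is missing relative to the paper's own standard of rigor.
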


\begin{theorem}
\label{THM:PIF_composite}
Under the assumptions of Theorem \ref{THM:7asymp_power_composite},
the power and level influence functions of the proposed Wald-type test of the composite null hypothesis 
are given by
\begin{eqnarray}
PIF(\mathbf{t}; W_{n}, \mathbf{\underline{F}}_{{\boldsymbol{\theta}}_0} ) 
&=& K_{r}^{\ast}\left( \delta_0^\ast\right)
\boldsymbol{d}^{T}\boldsymbol{H}(\boldsymbol{\theta}_{0})
\left[\boldsymbol{H}^{T}(\boldsymbol{\theta}_{0}) \boldsymbol{\Sigma}_{{\tau}}(\boldsymbol{\theta}_{0}) \boldsymbol{H}(\boldsymbol{\theta}_{0})\right]^{-1}
\boldsymbol{H}^{T}(\boldsymbol{\theta}_{0}) 
IF(\boldsymbol{t}; \boldsymbol{T}_{\tau}, \mathbf{\underline{F}}_{{\boldsymbol{\theta}}_0}),\nonumber\\
LIF(\mathbf{t}; W_{n}, \mathbf{\underline{F}}_{{\boldsymbol{\theta}}_0} ) &=& 0.
\end{eqnarray}
\end{theorem}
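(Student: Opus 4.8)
The plan is to mirror the proof of Theorem~\ref{THM:PIF_one}, since Theorem~\ref{THM:7asymp_power_composite} already supplies, for the composite case, a closed-form chi-square mixture for the contaminated asymptotic power exactly as Theorem~\ref{THM:7asymp_power_one} did for the simple case. Concretely, I would start from the asymptotic power expression (\ref{EQ:Cont_power_composite}), regard it as a function of the contamination strength $\epsilon$ through the single vector argument
\[
\mathbf{s}_{\epsilon}=\boldsymbol{H}^{T}(\boldsymbol{\theta}_{0})\widetilde{\boldsymbol{d}}_{\epsilon,\boldsymbol{t},{\tau}}(\boldsymbol{\theta}_{0}),
\qquad
\boldsymbol{A}=\left[\boldsymbol{H}^{T}(\boldsymbol{\theta}_{0})\boldsymbol{\Sigma}_{{\tau}}(\boldsymbol{\theta}_{0})\boldsymbol{H}(\boldsymbol{\theta}_{0})\right]^{-1},
\]
and differentiate it term by term with respect to $\epsilon$ at $\epsilon=0$, the interchange of $\frac{\partial}{\partial\epsilon}$ and $\sum_{v}$ being justified exactly as in the simple-null case.

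The core computation is a chain rule through $C_{v}(\mathbf{s}_{\epsilon},\boldsymbol{A})$, using the same partial-derivative formula for $C_{v}$ already derived in the proof of Theorem~\ref{THM:PIF_one}, namely
\[
\frac{\partial}{\partial\mathbf{s}}C_{v}(\mathbf{s},\boldsymbol{A})
=\frac{(\mathbf{s}^{T}\boldsymbol{A}\mathbf{s})^{v-1}}{v!\,2^{v}}(2v-\mathbf{s}^{T}\boldsymbol{A}\mathbf{s})\,\boldsymbol{A}\mathbf{s}\,e^{-\frac{1}{2}\mathbf{s}^{T}\boldsymbol{A}\mathbf{s}},
\]
together with the two elementary facts $\widetilde{\boldsymbol{d}}_{0,\boldsymbol{t},{\tau}}(\boldsymbol{\theta}_{0})=\boldsymbol{d}$ and $\frac{\partial}{\partial\epsilon}\widetilde{\boldsymbol{d}}_{\epsilon,\boldsymbol{t},{\tau}}(\boldsymbol{\theta}_{0})=IF(\boldsymbol{t};\boldsymbol{T}_{\tau},\mathbf{\underline{F}}_{{\boldsymbol{\theta}}_0})$ inherited from Theorem~\ref{THM:7asymp_power_composite}. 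At $\epsilon=0$ these give $\mathbf{s}_{0}=\boldsymbol{H}^{T}(\boldsymbol{\theta}_{0})\boldsymbol{d}$ and $\frac{\partial}{\partial\epsilon}\mathbf{s}_{\epsilon}\big|_{\epsilon=0}=\boldsymbol{H}^{T}(\boldsymbol{\theta}_{0})IF(\boldsymbol{t};\boldsymbol{T}_{\tau},\mathbf{\underline{F}}_{{\boldsymbol{\theta}}_0})$, whence the scalar $\mathbf{s}_{0}^{T}\boldsymbol{A}\mathbf{s}_{0}$ is precisely the noncentrality $\delta_{0}^{\ast}$.

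Collecting terms, the common left factor $\boldsymbol{A}\mathbf{s}_{0}=\left[\boldsymbol{H}^{T}(\boldsymbol{\theta}_{0})\boldsymbol{\Sigma}_{{\tau}}(\boldsymbol{\theta}_{0})\boldsymbol{H}(\boldsymbol{\theta}_{0})\right]^{-1}\boldsymbol{H}^{T}(\boldsymbol{\theta}_{0})\boldsymbol{d}$ factors out of the $v$-sum, the remaining scalar series being exactly $K_{r}^{\ast}(\delta_{0}^{\ast})$, the $r$-degrees-of-freedom analogue of the $K_{p}^{\ast}$ of Theorem~\ref{THM:PIF_one}; transposing to match the stated row-vector form then yields the advertised PIF. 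The LIF follows immediately by setting $\boldsymbol{d}=\boldsymbol{0}_{p}$: this forces $\mathbf{s}_{0}=\boldsymbol{0}$, which annihilates every term of the differentiated series (each carries a factor $\boldsymbol{A}\mathbf{s}_{0}$), so $LIF=0$. I expect no real obstacle in this differentiation; the genuinely substantive work sits upstream in Theorem~\ref{THM:7asymp_power_composite}, where one must combine the Taylor expansion of $\boldsymbol{h}(\widehat{\boldsymbol{\theta}}_{\tau})$ about $\boldsymbol{\theta}_{n}$ with the von Mises expansion $\boldsymbol{\theta}_{n}^{\ast}=\boldsymbol{\theta}_{n}+\tfrac{\epsilon}{\sqrt{n}}IF(\boldsymbol{t};\boldsymbol{T}_{\tau},\mathbf{\underline{F}}_{{\boldsymbol{\theta}}_0})+o_{p}(n^{-1/2}\boldsymbol{1}_{p})$ so that the cross term in the induced quadratic form vanishes asymptotically and $\delta_{\epsilon}^{\ast}$ emerges cleanly; once that distributional result is granted, the present proof is entirely mechanical.
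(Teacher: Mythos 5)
Your proposal is correct and follows exactly the route the paper intends: the authors omit this proof as ``similar'' to that of Theorem \ref{THM:PIF_one}, and your term-by-term differentiation of (\ref{EQ:Cont_power_composite}) via the chain rule through $C_{v}\left(\boldsymbol{H}^{T}(\boldsymbol{\theta}_{0})\widetilde{\boldsymbol{d}}_{\epsilon,\boldsymbol{t},{\tau}}(\boldsymbol{\theta}_{0}),\left[\boldsymbol{H}^{T}(\boldsymbol{\theta}_{0})\boldsymbol{\Sigma}_{{\tau}}(\boldsymbol{\theta}_{0})\boldsymbol{H}(\boldsymbol{\theta}_{0})\right]^{-1}\right)$ is precisely that adaptation, with the correct identifications $\mathbf{s}_{0}=\boldsymbol{H}^{T}(\boldsymbol{\theta}_{0})\boldsymbol{d}$ and $\mathbf{s}_{0}^{T}\boldsymbol{A}\mathbf{s}_{0}=\delta_{0}^{\ast}$. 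The LIF argument (every term carries the factor $\boldsymbol{A}\mathbf{s}_{0}$, which vanishes at $\boldsymbol{d}=\boldsymbol{0}$) is also the paper's intended one.
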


\section{Application: Testing for Linear Hypotheses in Generalized Linear Models (GLMs) with fixed design}
\label{sec4}

In this Section we apply the theoretical results obtained in this paper for
non-homogeneous observations to the generalized linear model (GLM). Therefore
now the density function associated to the independent random variables $%
Y_{i},$ $1\leq i\leq n,$ is given by 
\begin{equation}
f_{i,\boldsymbol{\theta}}(y_i)	= 
f\left( y_{i},\eta _{i},\phi \right) =
\exp \left \{ \frac{y_{i}\eta_{i}-b(\eta _{i})}{a(\phi )}+c\left( y_{i},\phi \right) \right \} ,\text{ }%
	1\leq i\leq n,  \label{A}
\end{equation}%
where the canonical parameter, $\eta _{i}$, is an unknown measure of localization 
depending on the given fixed design points $\boldsymbol{x}_{i}\in \mathbb{R}^{k}$, $1\leq i\leq n$ 
and $\phi $ is a known or unknown nuisance scale or dispersion parameter 
typically required to produce standard errors following Gaussian, Gamma or inverse Gaussian distributions. 
The functions $a(\phi )$, $b(\eta _{i})$ and $c\left( y_{i},\phi \right) $ are known. 
In particular, $a(\phi )$ is set to $1$ for binomial, Poisson, and negative binomial
distribution (known $\phi$) and it does not enter into the calculations for standard errors. 
The mean $\mu _{i}$ of $Y_{i}$ is given by  
$\mu _{i}=\mu \left( \eta _{i}\right) =\mathrm{E}\left[ Y_{i}\right]=b^{\prime }(\eta _{i})$
and the variance by  
$\sigma _{i}^{2}=\sigma ^{2}(\eta _{i},\phi)=\mathrm{Var}\left[ Y_{i}\right]=a(\phi )b^{\prime\prime}(\eta _{i}).$
The mean response is assumed, according to GLMs, to be modeled linearly with respect to $\boldsymbol{x}_i$ 
through a known link function, $g$, i.e., 	$g(\mu_i)=\boldsymbol{x}_i^{T}\boldsymbol{\beta }$,
where $g$ is a monotone and differentiable function and $\boldsymbol{\beta\in }\mathbb{R}^{k}$ is an unknown parameter. 
In this setting, since $\eta_{i}=\eta _{i}\left( \boldsymbol{x}_{i}^{T}\boldsymbol{\beta }\right) $,
we shall also denote (\ref{A}) by $f\left( y_{i},\boldsymbol{x}_{i}^{T}\boldsymbol{\beta },\phi \right) $ 
and the common parameters of the GLM by 
$\boldsymbol{\theta }=(\boldsymbol{\beta }^{T},\phi )^{T}$, $p=k+1$.
At the same time we denote by 
$\widehat{\boldsymbol{\theta }}_{\tau}$=$\left(\widehat{\boldsymbol{\beta }}_{\tau}^{T},\widehat{\phi }_{\tau}\right)^{T}$ 
the minimum density power divergence estimator of $\boldsymbol{\theta }$ with tuning parameter $\tau$. 
The estimating equations, based on (\ref{a.2}), to get  $\widehat{\boldsymbol{\theta }}_{\tau }$ 
in this present case are given by%
\begin{equation}
\sum \limits_{i=1}^{n}\left[ \gamma _{1,\tau }(\boldsymbol{x}_{i})-K_{1}(y_{i},\boldsymbol{x}_{i}^{T}\boldsymbol{\beta },\phi )
f^{\tau}(y_{i},\boldsymbol{x}_{i}^{T}\boldsymbol{\beta },\phi )\right] \boldsymbol{x}_{i}=\boldsymbol{0},  \label{B}
\end{equation}%
and 
\begin{equation}
	\sum \limits_{i=1}^{n}\left[ \gamma _{2,\tau }(\boldsymbol{x}%
	_{i})-K_{2}(y_{i},\boldsymbol{x}_{i}^{T}\boldsymbol{\beta },\phi )f^{\tau
	}(y_{i},\boldsymbol{x}_{i}^{T}\boldsymbol{\beta },\phi )\right] =0.
	\label{C}
\end{equation}%
where  
\begin{align*}
K_{1}(y_{i},\boldsymbol{x}_{i}^{T}\boldsymbol{\beta },\phi )& 
=\frac{y_{i}-\mu(\eta_{i})}{\sigma^{2}(\eta_{i})g^{\prime }\left(\mu(\theta_{i})\right)}, 
~~	K_{2}(y_{i},\boldsymbol{x}_{i}^{T}\boldsymbol{\beta },\phi ) 
=-\frac{y_{i}\eta_{i}-b\left( \eta_{i}\right)}{a^{2}(\phi)}a^{\prime }(\phi)+\frac{\partial c\left(y_{i},\phi\right)}{\partial \phi}.
\end{align*}%
and 
\begin{equation}
	\gamma _{j,\tau }(\boldsymbol{x}_{i})=\int K_{j}(y,\boldsymbol{x}_{i}^{T}%
	\boldsymbol{\beta },\phi )f^{1+\tau }(y,\boldsymbol{x}_{i}^{T}\boldsymbol{%
		\beta },\phi )dy,\quad \text{for }j=1,2.  \label{D}
\end{equation}

If we want to ignore the parameter $\phi $ and to estimate $\boldsymbol{%
	\beta }$ taking $\phi $ fixed (or, substituted suitably), it is enough to
consider only the set of estimating equations in (\ref{B}). Further, $\tau =0$, we have 
$\gamma _{1,0 }(\boldsymbol{x}_{i})=0$
and the estimating equations for $\boldsymbol{\beta }$ are 
\begin{equation*}
\sum \limits_{i=1}^{n}\frac{y_{i}-\mu(\eta_{i})}{\sigma^{2}(\eta_{i})g^{\prime}\left(\mu(\eta_{i})\right)}\boldsymbol{x}_{i}
=\boldsymbol{0}.
\end{equation*}

The asymptotic distribution of $\widehat{\boldsymbol{\theta }}_{\tau}$ is then given by (\ref{a.25}), 
where we now have
\begin{equation*}
\boldsymbol{\Omega}_{n,\tau }(\boldsymbol{\theta })=%
\begin{pmatrix}
	\sum \limits_{i=1}^{n}\left( \gamma _{11,2\tau }(\boldsymbol{x}_{i})-\gamma
		_{1,\tau }^{2}(\boldsymbol{x}_{i})\right) \boldsymbol{x}_{i}\boldsymbol{x}%
		_{i}^{T} & \sum \limits_{i=1}^{n}\left( \gamma _{12,2\tau }(\boldsymbol{x}%
		_{i})-\gamma _{1,\tau }(\boldsymbol{x}_{i})\gamma _{1,\tau }(\boldsymbol{x}%
		_{i})\right) \boldsymbol{x}_{i} \\ 
		\sum \limits_{i=1}^{n}\left( \gamma _{12,2\tau }(\boldsymbol{x}_{i})-\gamma
		_{1,\tau }(\boldsymbol{x}_{i})\gamma _{1,\tau }(\boldsymbol{x})\right) 
		\boldsymbol{x}_{i}^{T} & \sum \limits_{i=1}^{n}\left( \gamma _{22,2\tau }(%
		\boldsymbol{x}_{i})-\gamma _{2,\tau }^{2}(\boldsymbol{x}_{i})\right) 
	\end{pmatrix}%
	,
\end{equation*}%
and  
\begin{equation*}
	\boldsymbol{\Psi}_{n,\tau }(\boldsymbol{\theta })=%
	\begin{pmatrix}
		\sum \limits_{i=1}^{n}\gamma _{11,\tau }(\boldsymbol{x}_{i})\boldsymbol{x}%
		_{i}\boldsymbol{x}_{i}^{T} & \sum \limits_{i=1}^{n}\gamma _{12,\tau }(%
		\boldsymbol{x}_{i})\boldsymbol{x}_{i} \\ 
		\sum \limits_{i=1}^{n}\gamma _{12,\tau }(\boldsymbol{x}_{i})\boldsymbol{x}%
		_{i} & \sum \limits_{i=1}^{n}\gamma _{22,\tau }(\boldsymbol{x}_{i})%
	\end{pmatrix}%
	,
\end{equation*}%
with $\gamma _{j,\tau }(\boldsymbol{x})$, $j=1,2$, being given by (\ref{D})
and 
\begin{equation*}
	\gamma _{jh,\tau }(\boldsymbol{x})=\int K_{j}\left( y,\boldsymbol{x}^{T}%
	\boldsymbol{\beta },\phi \right) K_{h}\left( y,\boldsymbol{x}^{T}\boldsymbol{%
		\beta },\phi \right) f^{1+\tau }\left( y,\boldsymbol{x}^{T}\boldsymbol{\beta 
	},\phi \right) dy\text{, for }j,h=1,2.
\end{equation*}%
Notice that for the case where $\phi $ is known we get 
\begin{equation}
\boldsymbol{\Omega}_{n,\tau }(\boldsymbol{\theta })=\sum \limits_{i=1}^{n}\left(
	\gamma _{11,\tau }(\boldsymbol{x}_{i})-\gamma _{1,\tau }^{2}(\boldsymbol{x}%
	_{i})\right) \boldsymbol{x}_{i}\boldsymbol{x}_{i}^{T}, 
~\mbox{and}~ 
\boldsymbol{\Psi}_{n,\tau }(\boldsymbol{\theta })=\sum \limits_{i=1}^{n}\gamma_{11,\tau }(\boldsymbol{x}_{i})
\boldsymbol{x}_{i}\boldsymbol{x}_{i}^{T}.
\label{EQ:mat_betaO}
\end{equation}
See \cite{Ghosh/Basu:2016} for more details on the properties of the MDPDE 
$\widehat{\boldsymbol{\theta }}_{\tau}$ under this fixed-design GLM.

Here, we consider the most important hypothesis testing problem in the context of GLM, 
namely testing the linear hypothesis on regression coefficient $\boldsymbol{\beta}$ as given by 
\begin{equation}
H_{0}:\boldsymbol{L}\boldsymbol{\beta } =\boldsymbol{l}_0\text{
	versus }H_{1}:\boldsymbol{L}\boldsymbol{\beta } \neq \boldsymbol{l}_0,  
\label{G}
\end{equation}%
with $\boldsymbol{L}$ being an $r\times k$ known matrix of rank $r$ and $\boldsymbol{l}_0$ being an known $r$-vector with $r\leq k$.
Note that, this particular hypothesis (\ref{G}) belongs to the general class of hypothesis in (\ref{a.12}) with 
$\boldsymbol{h}\left( \boldsymbol{\eta }\right) =\boldsymbol{L}\boldsymbol{\beta } -\boldsymbol{l}_0$
and $\boldsymbol{H}\left( \boldsymbol{\eta }\right) =\boldsymbol{L}^T$.
Now for testing  (\ref{G}), we can consider the family of Wald-type test statistics presented in Section \ref{sec2.2}, 
given by   
\begin{equation}
W_{n}(\widehat{\boldsymbol{\theta }}_{\tau })
=n\left( \boldsymbol{L}\widehat{\boldsymbol{\beta}}_{\tau} -\boldsymbol{l}_0\right)^{T}
\left[ \boldsymbol{L}\boldsymbol{\Sigma }_{\tau}(\widehat{\boldsymbol{\theta }}_{\tau})\boldsymbol{L}^T\right]^{-1}
\left(\boldsymbol{L}\widehat{\boldsymbol{\beta}}_{\tau} -\boldsymbol{l}_0\right).  \label{H}
\end{equation}

Based on our Theorem \ref{THM:null_composite}, 
the null hypothesis given in (\ref{G}) will be rejected if we have that 
\begin{equation}
	W_{n}(\widehat{\boldsymbol{\theta }}_{\tau })>\chi _{r,\alpha }^{2}.  \label{I}
\end{equation}
Further, following discussions in Section \ref{sec2.2}, this proposed Wald-type test is consistent at any fixed alternatives
and one can obtain an approximation to its power function at any fixed alternatives.

Next, suppose the true null parameter value is $\boldsymbol{\theta}_0 = \left(\boldsymbol{\beta}_0^T, \phi_0\right)^T$
and consider the sequence of contiguous alternatives 
$H_{1,n} : \boldsymbol{\beta}_n = \boldsymbol{\beta}_0 + n^{-1/2}\boldsymbol{d}$ 
with $\boldsymbol{d}\in\mathbb{R}^k - \{\boldsymbol{0}\}$. 
This is also equivalent to the alternative contiguous hypothesis   
$H_{1,n} : \boldsymbol{L}\boldsymbol{\beta}_n = \boldsymbol{l}_0 + n^{-1/2}\boldsymbol{d}^\ast$ 
with $\boldsymbol{d}^\ast=\boldsymbol{L}\boldsymbol{d}\in\mathbb{R}^r - \{\boldsymbol{0}\}$. 
Under these contiguous alternatives, the proposed Wald-type test statistics 
have the asymptotic distribution as non-central chi-square with degrees of freedom $r$ and 
non-centrality parameter 
\begin{equation}
\delta=\boldsymbol{d}^T\boldsymbol{L}^T
\left[ \boldsymbol{L}\boldsymbol{\Sigma }_{\tau}(\boldsymbol{\theta}_0)\boldsymbol{L}^T\right]^{-1}
\boldsymbol{L}\boldsymbol{d} =\boldsymbol{d}^{\ast T}
\left[ \boldsymbol{L}\boldsymbol{\Sigma }_{\tau}(\boldsymbol{\theta}_0)\boldsymbol{L}^T\right]^{-1}\boldsymbol{d}^\ast.
\label{EQ:delta_glm}
\end{equation}
Then, the asymptotic power at these contiguous alternatives can easily be obtained 
through the upper cumulative distribution functions of the above non-central chi-square distributions.
We will examine their behavior for some special cases of GLM in the next section.

Next, considering the robustness of the proposed Wald-type tests under GLM, 
the first order influence function of the test statistics and the level influence functions are always identically zero 
under contamination in any fixed direction or in all directions following the general theory developed in Section \ref{sec3}.
For both types of contaminations, the non-zero second order influence function of the 
proposed Wald-type test statistics (\ref{H}) under fixed-design GLM is given by 
\begin{eqnarray}
IF_{i_0}^{(2)}(t_{i_0}; W_{\tau}, \mathbf{\underline{F}}_{{\boldsymbol{\theta}}_0}) 
&=& 2 IF_{i_0}(t_{i_0}; \boldsymbol{T}_{\tau}, \mathbf{\underline{F}}_{{\boldsymbol{\theta}}_0})^T 
\boldsymbol{L}^T\left[ \boldsymbol{L}\boldsymbol{\Sigma }_{\tau}(\boldsymbol{\theta}_0)\boldsymbol{L}^T\right]^{-1}
\boldsymbol{L}IF_{i_0}(t_{i_0}; \boldsymbol{T}_{\tau}, \mathbf{\underline{F}}_{{\boldsymbol{\theta}}_0}) \nonumber\\
IF^{(2)}(\boldsymbol{t}; W_{\tau}, \mathbf{\underline{F}}_{{\boldsymbol{\theta}}_0}) 
&=& 2 IF(\boldsymbol{t}; \boldsymbol{T}_{\tau}, \mathbf{\underline{F}}_{{\boldsymbol{\theta}}_0})^T 
\boldsymbol{L}^T\left[ \boldsymbol{L}\boldsymbol{\Sigma }_{\tau}(\boldsymbol{\theta}_0)\boldsymbol{L}^T\right]^{-1}
\boldsymbol{L}IF(\boldsymbol{t}; \boldsymbol{T}_{\tau}, \mathbf{\underline{F}}_{{\boldsymbol{\theta}}_0}),\nonumber
\end{eqnarray}
where $IF_{i_0}(t_{i_0}; \boldsymbol{T}_{\tau}, \mathbf{\underline{F}}_{{\boldsymbol{\theta}}_0})$
and $IF(\boldsymbol{t}; \boldsymbol{T}_{\tau}, \mathbf{\underline{F}}_{{\boldsymbol{\theta}}_0})$
are corresponding influence functions of the MDPDE $\widehat{\boldsymbol{\theta}}_{\tau}$ under the fixed-design GLM
for contamination in the $i_0$-th direction and all directions respectively. 
These influence functions of the MDPDE under fixed-design GLM have been studied by \cite{Ghosh/Basu:2016};
using the explicit form of these IFs, the second order IFs of our test statistics become
\begin{eqnarray}
IF_{i_0}^{(2)}(t_{i_0}; W_{\tau}, \mathbf{\underline{F}}_{({\boldsymbol{\beta}}_0,\phi_0)}) 
&=& 2 \left[\frac{1}{n}\boldsymbol{S}_{\tau, i_0}(t_{i_0};{\boldsymbol{\beta}}_0,\phi_0)\right]^T 
\boldsymbol{L}_{0,\tau}^\ast
\left[\frac{1}{n}\boldsymbol{S}_{\tau, i_0}(t_{i_0};{\boldsymbol{\beta}}_0,\phi)\right], \nonumber\\
IF^{(2)}(\boldsymbol{t}; W_{\tau}, \mathbf{\underline{F}}_{({\boldsymbol{\beta}}_0,\phi_0)}) 
&=& 2 \left[\frac{1}{n}\sum_{i=1}^{n}\boldsymbol{S}_{\tau, i}(t_{i};{\boldsymbol{\beta}}_0,\phi_0)\right]^T 
\boldsymbol{L}_0^\ast
\left[\frac{1}{n}\sum_{i=1}^{n}\boldsymbol{S}_{\tau, i}(t_{i};{\boldsymbol{\beta}}_0,\phi_0)\right],\nonumber
\end{eqnarray}
where
$\boldsymbol{L}_{0,\tau}^\ast = \boldsymbol\Psi_{n,\tau}^{-1}({\boldsymbol{\beta}}_0,\phi_0)\boldsymbol{L}^T
\left[ \boldsymbol{L}\boldsymbol{\Sigma }_{\tau}({\boldsymbol{\beta}}_0,\phi_0)\boldsymbol{L}^T\right]^{-1}
\boldsymbol{L}\boldsymbol\Psi_{n,\tau}^{-1}({\boldsymbol{\beta}}_0,\phi_0)$ and 
\begin{eqnarray}
\boldsymbol{S}_{\tau, i}(t_{i};{\boldsymbol{\beta}},\phi) = 
\begin{bmatrix}
\begin{array}{c}
\left( K_{1}(t_{i},\boldsymbol{x}_{i}^{T}\boldsymbol{\beta },\phi )
f^{\tau}(t_{i},\boldsymbol{x}_{i}^{T}\boldsymbol{\beta },\phi ) - \gamma _{1,\tau }(\boldsymbol{x}_{i})\right) \boldsymbol{x}_{i}\\\\
\left( K_{2}(t_{i},\boldsymbol{x}_{i}^{T}\boldsymbol{\beta },\phi )
f^{\tau}(t_{i},\boldsymbol{x}_{i}^{T}\boldsymbol{\beta },\phi ) - \gamma _{2,\tau }(\boldsymbol{x}_{i})\right)
\end{array}
\end{bmatrix}.
\end{eqnarray}
Clearly these influence functions will be bounded whenever the function 
$\boldsymbol{S}_{\tau, i}(t_{i};{\boldsymbol{\beta}}_0,\phi_0)$ is bounded in $t_i$. 
However, due to the particular exponential form of the density in GLM, we have 
$K_{j}(t_{i},\boldsymbol{x}_{i}^{T}\boldsymbol{\beta },\phi )$ is a polynomial function of $t_i$ and  
the integral  $\gamma _{j,\tau }(\boldsymbol{x}_{i})$ is bounded for any given finite $\boldsymbol{x}_i$ for each $j=1,2$. 
Hence, for any $\tau>0$, the function $\boldsymbol{S}_{\tau, i}(t_{i};{\boldsymbol{\beta}}_0,\phi_0)$ will bounded in $t_i$
and it will be unbounded at $\tau=0$. This implies that the proposed  Wald-type tests with $\tau>0$ under fixed-design  GLM
will be robust compared to the non-robust classical Wald-test at $\tau=0$.

We can similarly also check the power robustness of our proposal at $\tau>0$ under fixed-design GLM
by deriving the form of PIF from Theorem \ref{THM:PIF_composite}, 
whose boundedness again depends directly on the boundedness of the function 
$\boldsymbol{S}_{\tau, i}(t_{i};{\boldsymbol{\beta}}_0,\phi_0)$ with respect to the contamination points $t_i$s.
In particular, the form of the PIF under contiguous alternatives $H_{1,n}$ 
for the present case of fixed-design GLM simplifies to
\begin{eqnarray}
PIF(\mathbf{t}; W_{n}, \mathbf{\underline{F}}_{({\boldsymbol{\beta}}_0,\phi_0)} ) 
&=& K_{r}^{\ast}\left( \delta\right)
\boldsymbol{d}^{T}\boldsymbol{L}^T
\left[ \boldsymbol{L}\boldsymbol{\Sigma }_{\tau}({\boldsymbol{\beta}}_0,\phi_0)\boldsymbol{L}^T\right]^{-1}
\boldsymbol{L}\left[\frac{1}{n}\sum_{i=1}^{n}\boldsymbol{S}_{\tau, i}(t_{i};{\boldsymbol{\beta}}_0,\phi_0)\right],\nonumber
\end{eqnarray}
where $\delta$ is as given by Equation (\ref{EQ:delta_glm}).
We will further study the behavior of these influence functions for some 
particular examples of GLM in the next section.

It is worthwhile to note that the GLM considered in this paper 
as an special case of general non-homogeneous set-up is different from 
the usual GLM with stochastic covariates (random design);
here we are assuming that the values of covariates (design-points)
$\boldsymbol{x}_i$ are fixed and known previously. 
The problem of robust hypothesis testing under GLM with random design 
has been considered in \cite{Basu/etc:2016b,Basu/etc:2016c}.

\section{Examples and Illustrations\label{sec5}}

\subsection{Testing Significance of a Normal Linear Regression Model}\label{sec5.1}

As our first illustrative example, we will consider the most common and simplest 
case of GLM, namely the normal regression model where the model density $f(y_i, \boldsymbol{x}_i^T\boldsymbol{\beta}, \phi)$
is normal with mean $\boldsymbol{x}_i^T\boldsymbol{\beta}$ and common variance $\phi>0$. 
This model has a simpler representation given by
$$
y_i = \boldsymbol{x}_i^T\boldsymbol{\beta} + \varepsilon_i, ~~i=1, \ldots, n,
$$
where $\varepsilon_i$s are independent normally distributed errors with mean $0$ and variance $\phi$. 
When the design points $\boldsymbol{x}_i$s are pre-fixed, we can apply the results derived above 
to construct and study robust Wald-type tests for general linear hypothesis under this simpler model.
In particular, for illustration, let us consider the problem of testing for the significance of 
this linear model characterized by the hypothesis
\begin{eqnarray}
H_{0}:\boldsymbol{\beta } =\boldsymbol{\beta}_0
\text{ versus }
H_{1}:\boldsymbol{\beta } \neq \boldsymbol{\beta}_0,  
\label{EQ:lrm_hyp}
\end{eqnarray}
where $\boldsymbol{\beta}_0$ is a known $k$-vector of hypothesized regression coefficients (usually a zero vector)
and we assume $\phi$ to be unknown under both hypotheses. 
The classical F-test for this problem is a version of the classical Wald test based on the MLE of the parameters 
$\boldsymbol{\theta} = (\boldsymbol{\beta}^T, \phi)^T$ and hence known to be highly non-robust.
We will now study the performances of the proposed Wald-type tests for this hypothesis.

Note that the hypothesis in (\ref{EQ:lrm_hyp}) under the normal linear regression model 
belongs to the class of general linear hypothesis with $\boldsymbol{L}=\begin{pmatrix}
\begin{array}{cc}
\boldsymbol{I}_k & \boldsymbol{0}\\
\boldsymbol{0} & 0
\end{array}
\end{pmatrix}$ with 
$\boldsymbol{I}_k$ being the identity matrix of order $k$ 
and $\boldsymbol{l}_0=(\boldsymbol{\beta}_0^T 0)^T$.
So, using the results of the previous section, the robust Wald-type test statistics for 
this testing problem  simplifies to 
\begin{equation}
W_{n}(\widehat{\boldsymbol{\beta }}_{\tau},\widehat{\phi}_{\tau})
=n\left( \widehat{\boldsymbol{\beta}}_{\tau} -\boldsymbol{\beta}_0\right)^{T}
\left[ \boldsymbol{L}\boldsymbol{\Sigma }_{\tau}(\widehat{\boldsymbol{\beta }}_{\tau},\widehat{\phi}_{\tau})\boldsymbol{L}^T\right]^{-1}
\left(\widehat{\boldsymbol{\beta}}_{\tau} -\boldsymbol{\beta}_0\right),  
\label{EQ:lrm_TS}
\end{equation}
where $\widehat{\boldsymbol{\beta }}_{\tau}$ and $\widehat{\phi}_{\tau}$ are the MDPDE of $\boldsymbol{\beta}$ and $\phi$
respectively with tuning parameter $\tau$ and have asymptotic joint covariance matrix $\boldsymbol{\Sigma}_{\tau}(\boldsymbol{\beta}_0,\phi_0)$
at the true null parameter values $(\boldsymbol{\beta}_0,\phi_0)$. 
\cite{Ghosh/Basu:2013} studied the properties of these MDPDEs under fixed design normal linear model in detail.
In particular it follows that, under assumptions (R1)--(R2) of their paper (also listed in Appendix \ref{APP:cond}),
asymptotically 
$$\sqrt{n}\left((\widehat{\boldsymbol{\beta }}_{\tau}^T,\widehat{\phi}_{\tau})^T - (\boldsymbol{\beta}_0^T,\phi_0)^T\right)$$
follows a $k+1$-variate normal distribution with mean vector $\boldsymbol{0}$
and covariance matrix given by 
$$
\boldsymbol{\Sigma}_{\tau}(\boldsymbol{\beta}_0,\phi_0)= 
\begin{bmatrix}
\begin{array}{cc}
\upsilon_{\tau}^{\beta} \boldsymbol{C}_x^{-1} & \boldsymbol{0}\\
\boldsymbol{0}^T & \upsilon_{\tau}^{\phi}
\end{array}
\end{bmatrix},
$$
where $\boldsymbol{C}_x = \displaystyle\lim\limits_{n\rightarrow\infty}\frac{1}{n}(\boldsymbol{X}^T\boldsymbol{X})$
with $\boldsymbol{X}^T=\left[\boldsymbol{x}_1 \cdots \boldsymbol{x}_n\right]$ being the design matrix and
$$\upsilon_{\tau}^{\beta}= \phi \left(1+\frac{\tau^2}{1+2\tau}\right)^{3/2},~~
\upsilon_{\tau}^{\phi} = \frac{4\phi^2}{(2+\tau^2)^2}\left[2(1+2\tau^2)\left(1+\frac{\tau^2}{1+2\tau}\right)^{5/2}-\tau^2(1+\tau)^2\right].
$$ 
Using these expressions, our proposed Wald-type test statistics (\ref{EQ:lrm_TS}) for testing (\ref{EQ:lrm_hyp})
further simplifies to 
\begin{equation}
W_{n}(\widehat{\boldsymbol{\beta }}_{\tau},\widehat{\phi}_{\tau})
=\frac{n}{\widehat{\phi}_{\tau}}\left(1+\frac{\tau^2}{1+2\tau}\right)^{-3/2}
\left( \widehat{\boldsymbol{\beta}}_{\tau} -\boldsymbol{\beta}_0\right)^{T}\boldsymbol{C}_x
\left(\widehat{\boldsymbol{\beta}}_{\tau} -\boldsymbol{\beta}_0\right),  
\label{EQ:lrm_TS1}
\end{equation}
which coincides with the classical Wald test at $\tau=0$.
Following the theory of Section \ref{sec4}, these Wald-type test statistics have asymptotic null distribution as
$\chi^2_k$ and consistent against any fixed alternatives. 
To study its power against contiguous alternatives  
$H_{1,n} : \boldsymbol{\beta}_n = \boldsymbol{\beta}_0 + n^{-1/2}\boldsymbol{d}$ 
with $\boldsymbol{d}\in\mathbb{R}^k - \{\boldsymbol{0}\}$, note that
the asymptotic distribution of the proposed Wald-type test statistics under $H_{1,n}$
is non-central $\chi^2$ with degrees of freedom $r$ and non-centrality parameter 
\begin{equation}
\delta=\frac{1}{\phi_0}\left(1+\frac{\tau^2}{1+2\tau}\right)^{-3/2}\left[\boldsymbol{d}^T\boldsymbol{C}_x\boldsymbol{d}\right]. 
\label{EQ:delta_lrm}
\end{equation}
Clearly the asymptotic contiguous power of our proposed test statistics depends on the given 
fixed values of design points through the quantity $d_x=\left[\boldsymbol{d}^T\boldsymbol{C}_x\boldsymbol{d}\right]$
along with the tuning parameter $\tau$. Table \ref{TAB:lrm_Eff} presents the empirical values of these contiguous powers
over $\tau$ for different values of $d_x$, with $\phi_0=1$ and $5\%$ level of significance. 
Note that, as the number ($k$) of regressors to be tested increases, 
we need larger values of $d_x$ to achieve any fixed values of the contiguous power; 
for a given fixed design this corresponds to larger values of $||\boldsymbol{d}||$.
Further, for any fixed $\tau$ the values of contiguous power increases as $d_x$ increases 
but for any fixed $d_x>0$ it decreases as $\tau$ increases as expected;
the choice $d_x=0$ leads to the level of the tests for all $\tau \geq 0$.
However, interestingly, the loss in power compared to the classical Wald test at $\tau=0$ 
is not quite significant at small values of $\tau>0$. 
And, against this relatively small price, we will gain substantially robustness 
against contamination in data as illustrated below with the specific forms of the influence functions.


\begin{table}[h]
	\caption{Contiguous power of the proposed Wald-type test for testing (\ref{EQ:lrm_hyp}) under the normal regression model}
	\centering
	\begin{tabular}{l| cccccc| cccccc} \hline
	& \multicolumn{6}{|c}{$k=1$}& \multicolumn{6}{|c}{$k=20$}\\\hline
			 & \multicolumn{6}{|c}{$\tau$}& \multicolumn{6}{|c}{$\tau$}\\
$d_x$	&	0	&	0.1	&	0.3	&	0.5	&	0.7	&	1 &	0		&	0.1	&	0.3	&	0.5	&	0.7	&	1	\\\hline
0	&	0.050	&	0.050	&	0.050	&	0.050	&	0.050	&	0.050	&	0.050	&	0.050	&	0.050	&	0.050	&	0.050	&	0.050	\\
2	&	0.293	&	0.290	&	0.274	&	0.254	&	0.234	&	0.207	&	0.096	&	0.096	&	0.092	&	0.088	&	0.083	&	0.078	\\
5	&	0.609	&	0.603	&	0.574	&	0.535	&	0.494	&	0.437	&	0.193	&	0.191	&	0.179	&	0.164	&	0.150	&	0.133	\\
10	&	0.885	&	0.882	&	0.859	&	0.825	&	0.786	&	0.722	&	0.402	&	0.397	&	0.367	&	0.331	&	0.296	&	0.252	\\
15	&	0.972	&	0.971	&	0.961	&	0.944	&	0.921	&	0.877	&	0.611	&	0.604	&	0.565	&	0.513	&	0.461	&	0.391	\\
20	&	0.994	&	0.994	&	0.990	&	0.984	&	0.973	&	0.950	&	0.775	&	0.768	&	0.730	&	0.675	&	0.616	&	0.531	\\
25	&	0.999	&	0.999	&	0.998	&	0.996	&	0.992	&	0.981	&	0.883	&	0.878	&	0.847	&	0.800	&	0.745	&	0.657	\\
50	&	1.000	&	1.000	&	1.000	&	1.000	&	1.000	&	1.000	&	0.944	&	0.941	&	0.920	&	0.885	&	0.840	&	0.761	\\
\hline
	\end{tabular}
	\label{TAB:lrm_Eff}
\end{table}

\bigskip
Again, based on the general theory developed in Section \ref{sec4}, 
one can readily check that the second order influence function of the proposed Wald-type tests 
at the true null distribution having parameters $\boldsymbol{\theta}_0 = (\boldsymbol{\beta}_0^T, \phi_0)^T$
under the present case simplifies to
\begin{eqnarray}
IF_{i_0}^{(2)}(t_{i_0}; W_{\tau}, \mathbf{\underline{F}}_{{\boldsymbol{\theta}}_0}) 
&=& \frac{2}{\phi_0} (1+2\tau)^{3/2}\left(t_{i_0}-\boldsymbol{x}_{i_0}^T\boldsymbol{\beta}_0\right)^2
e^{-\frac{\tau\left(t_{i_0}-\boldsymbol{x}_{i_0}^T\boldsymbol{\beta}_0\right)^2}{\phi_0}}
\left[\boldsymbol{x}_{i_0}^T(\boldsymbol{X}^T\boldsymbol{X})^{-1}\boldsymbol{C}_x(\boldsymbol{X}^T\boldsymbol{X})^{-1}\boldsymbol{x}_{i_0}
\right], \nonumber\\
IF^{(2)}(\boldsymbol{t}; W_{\tau}, \mathbf{\underline{F}}_{{\boldsymbol{\theta}}_0}) 
&=& \frac{2}{\phi_0} (1+2\tau)^{3/2}\sum_{i=1}^n\left(t_{i}-\boldsymbol{x}_{i}^T\boldsymbol{\beta}_0\right)^2
e^{-\frac{\tau\left(t_{i}-\boldsymbol{x}_{i}^T\boldsymbol{\beta}_0\right)^2}{\phi_0}}
\left[\boldsymbol{x}_{i}^T(\boldsymbol{X}^T\boldsymbol{X})^{-1}\boldsymbol{C}_x(\boldsymbol{X}^T\boldsymbol{X})^{-1}\boldsymbol{x}_{i}
\right].\nonumber
\end{eqnarray}
Clearly, these influence functions depend on the values of the given fixed design-points in the direction 
of contamination. However, for any given finite design-points, they are bounded in contamination points $t_i$s 
for each $\tau>0$ and unbounded at $\tau=0$. We will explicitly examine their nature for some particular 
cases of design matrix; in particular, we consider the following four fixed designs:

\begin{figure}[!th]
	\centering
	\subfloat[Design 1, $i_0=10$]{
		\includegraphics[width=.3\textwidth] {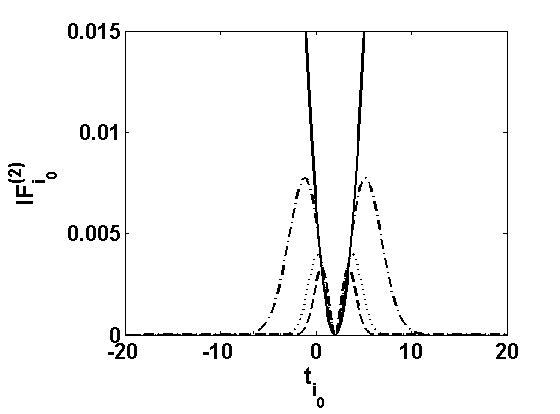}
		\label{fig:apower_30}}
	~ 
	\subfloat[Design 1, $i_0=40$]{
		\includegraphics[width=.3\textwidth] {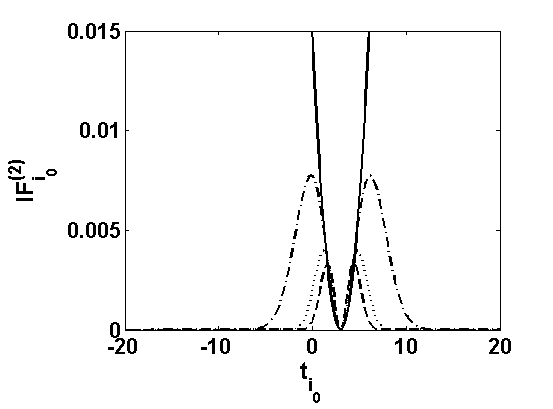}
		\label{fig:apower_302}}
	~ 
	\subfloat[Design 1, all directions]{
		\includegraphics[width=.3\textwidth] {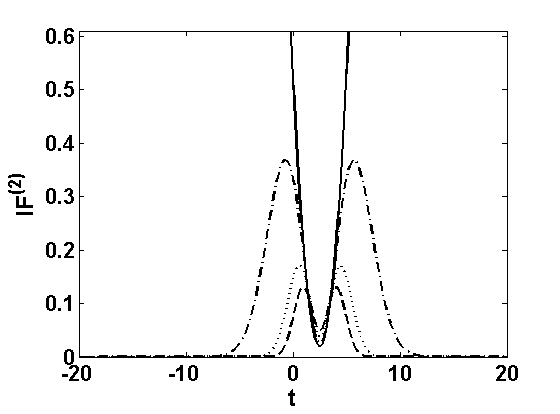}
		\label{fig:apower_303}}
	\\ 
	\subfloat[Design 2, $i_0=10$]{
		\includegraphics[width=.3\textwidth] {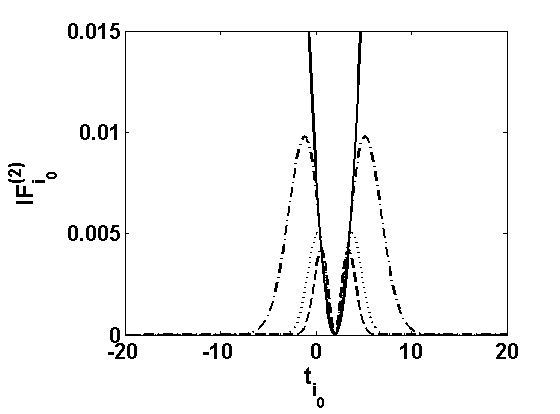}
		\label{fig:apower_30}}
	~ 
	\subfloat[Design 2, $i_0=40$]{
		\includegraphics[width=.3\textwidth] {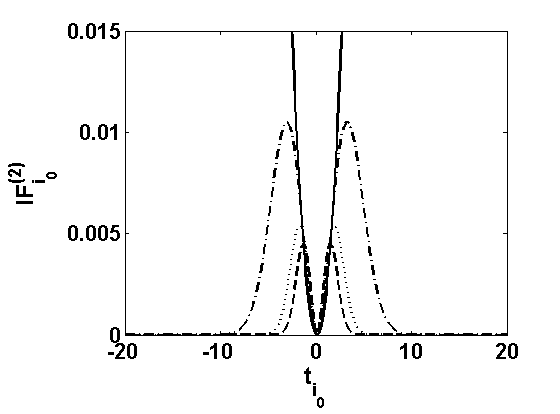}
		\label{fig:apower_302}}
	~ 
	\subfloat[Design 2, all directions]{
		\includegraphics[width=.3\textwidth] {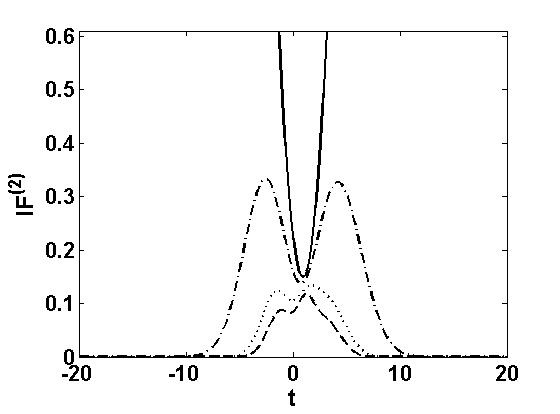}
		\label{fig:apower_303}}
	\\ 
	\subfloat[Design 3, $i_0=10$]{
		\includegraphics[width=.3\textwidth] {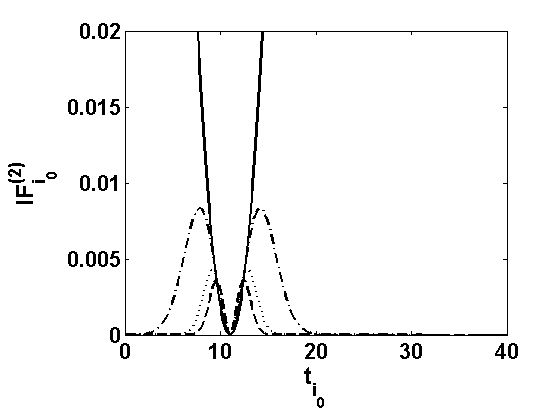}
		\label{fig:apower_30}}
	~ 
	\subfloat[Design 3, $i_0=40$]{
		\includegraphics[width=.3\textwidth] {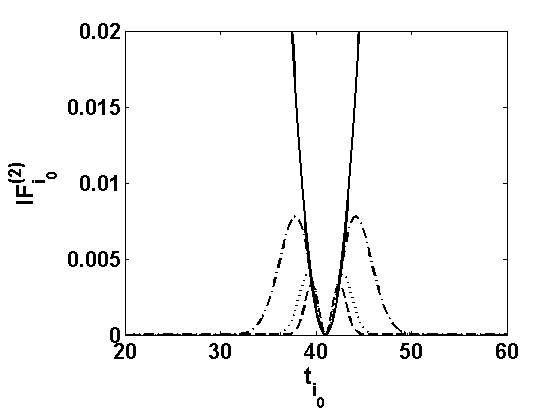}
		\label{fig:apower_302}}
	~ 
	\subfloat[$^*$ Design 3, all directions ]{
		\includegraphics[width=.3\textwidth] {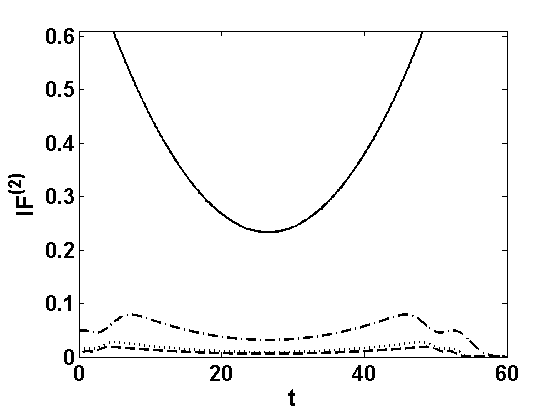}
		\label{FIG:1i}}
	\\ 
	\subfloat[Design 4, $i_0=10$]{
		\includegraphics[width=.3\textwidth] {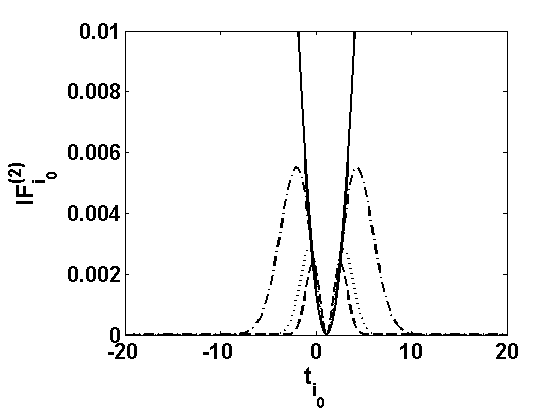}
		\label{fig:apower_30}}
	~ 
	\subfloat[Design 4, $i_0=40$]{
		\includegraphics[width=.3\textwidth] {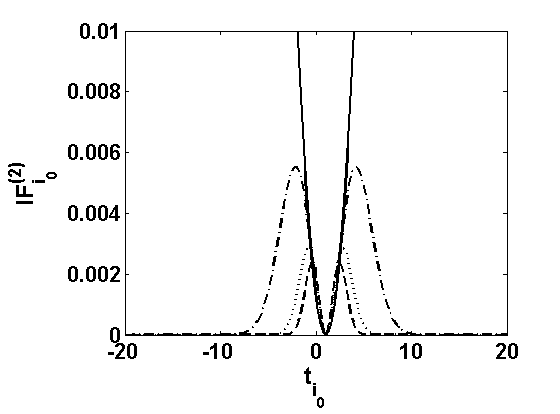}
		\label{fig:apower_302}}
	~ 
	\subfloat[Design 4, all directions]{
		\includegraphics[width=.3\textwidth] {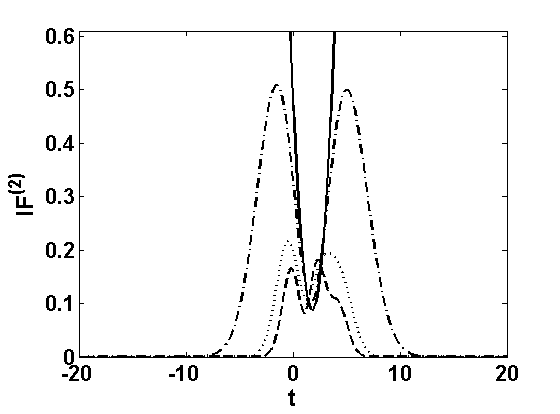}
		\label{fig:apower_303}}
	\caption{Second order influence function of the proposed Wald-type test statistics for testing (\ref{EQ:lrm_hyp}) under the normal regression model with fixed designs 1 -- 4 and contamination in the direction $i_0=10, 40$ or in all directions 
		at $\boldsymbol{t}=t\boldsymbol{1}$  
		[solid line: $\tau=0$; dash-dotted line: $\tau=0.1$; dotted line: $\tau=0.3$; dashed line: $\tau=0.5$].
		~~~~~~~~~~~~~~~~~~~~~~~~~~~~~~~~~~~~~~~~~~~~~~~~~~~~~~~~~~~~~~~~~~~~
		$^*$ indicates that the values for $\tau=0$ (solid line) has been shown in multiple of $10^{-2}$ for this graph only.}
	\label{FIG:IF2_lrm}
\end{figure}

$$
\begin{array}{lll}
\mbox{Design 1:} & \boldsymbol{x}_i=(1, x_{i1})^T;~~x_{i1}=a, i=1,\ldots, n/2; x_{i1}=b, i=n/2+1,\ldots, n.
& \mbox{(two-point design)}\\
\mbox{Design 2:} & \boldsymbol{x}_i=(1, x_{i1})^T;~~{x}_{i1}, i=1,\ldots, n, 
\mbox{are pre-fixed iid observations from }N(\mu_x,\sigma_x^2).
& \mbox{(Fixed-Normal design)}\\
\mbox{Design 3:} & \boldsymbol{x}_i=(1, x_{i1})^T;~~x_{i1}=i~\mbox{for}~i=1,\ldots, n.
& \mbox{(Divergent design)} \\
\mbox{Design 4:} & \boldsymbol{x}_i=(1, x_{i1}, x_{i2})^T;~~x_{i1}=\frac{1}{i},~x_{i2}=\frac{1}{i^2}~\mbox{for}~i=1,\ldots, n.
& \mbox{(Convergent design)}
\end{array}
$$
Note that, the $\boldsymbol{C}_x$ matrix is finitely defined and is positive definite for the first two designs
with values 
$$\boldsymbol{C}_x=\begin{pmatrix}
\begin{array}{cc}
1 &\frac{1}{2}(a+b)\\ \frac{1}{2}(a+b) & \frac{1}{2}(a^2+b^2)
\end{array}
\end{pmatrix}
~~\mbox{and}~~\boldsymbol{C}_x=\begin{pmatrix}
\begin{array}{cc}
1 &\mu_x\\ \mu_x & \sigma_x^2+\mu_x^2
\end{array}
\end{pmatrix}
$$ respectively.
In our illustrations, we have taken $a=1$, $b=2$ in design 1 and $\mu_x=0$, $\sigma_x=1$ in Design 2
so that the first one is asymmetric and non-orthogonal but the second one is symmetric and orthogonal.
The design matrix for Designs 3 and 4 are positive definite for any finite sample sizes, 
but the corresponding $\boldsymbol{C}_x$ matrices have all elements except their $(1,1)$-th element as $\infty$ and $0$ respectively;
however,  for the computation of the above fixed sample IFs in these cases 
we can use the finite sample approximation of $\boldsymbol{C}_x$ by $\frac{1}{n}(\boldsymbol{X}^T\boldsymbol{X})$.
Figure \ref{FIG:IF2_lrm} presents the second order  IF of our test statistics for different contamination direction
under these four designs at the finite sample size $n=50$ with $\boldsymbol{\beta}_0=\boldsymbol{1}$, the vector of ones, $\phi_0=1$
and different values of $\tau$. The boundedness of these IFs at $\tau>0$ clearly indicates the robustness of our proposed 
Wald-type tests. Further, the (absolute) supremum of these IFs also decreases as $\tau$ increases 
which implies the increasing robustness of the proposal with increasing $\tau$. 
The extent of this increase in robustness for $\tau>0$ over $\tau=0$ becomes more prominent when contamination is in all directions
and/or the size of the fixed design matrix increases (an extreme increment is in the case of Figure \ref{FIG:1i}).

Noting that the LIF is always zero, we can next study the power influence function also.
In the present case, the PIF can be seen to have the form 
\begin{eqnarray}
PIF(\mathbf{t}; W_{n}, \mathbf{\underline{F}}_{{\boldsymbol{\theta}}_0} ) 
&=& K_{k}^{\ast}\left( \delta\right)\frac{2}{\phi_0} (1+2\tau)^{3/2}(1+\tau)^{-3/2}
\sum_{i=1}^n\left(t_{i}-\boldsymbol{x}_{i}^T\boldsymbol{\beta}_0\right)
e^{-\frac{\tau\left(t_{i}-\boldsymbol{x}_{i}^T\boldsymbol{\beta}_0\right)^2}{2\phi_0}}
\left[\boldsymbol{d}^{T}\boldsymbol{C}_x(\boldsymbol{X}^T\boldsymbol{X})^{-1}\boldsymbol{x}_{i}\right],\nonumber
\end{eqnarray}
where $\delta$ is as given by Equation (\ref{EQ:delta_lrm}).
Figure \ref{FIG:PIF_lrm} presents these PIFs for the above four designs with different $\tau$
at the finite sample size $n=50$ with $\boldsymbol{\beta}_0=\boldsymbol{1}$, $\phi_0=1$, 
$\boldsymbol{d}=10^{-2}\boldsymbol{\beta}$ and $5\%$ level of significance. 
Again, the power of the proposed Wald-type tests under the normal model seems to be robust for all $\tau>0$ 
and for all the fixed designs over the classical non-robust choice of $\tau=0$. 
Further, the extent of robustness increases as $\tau$ increases or the size of the design matrix decreases.

\begin{figure}[!th]
	\centering
	\subfloat[Design 1]{
		\includegraphics[width=.3\textwidth] {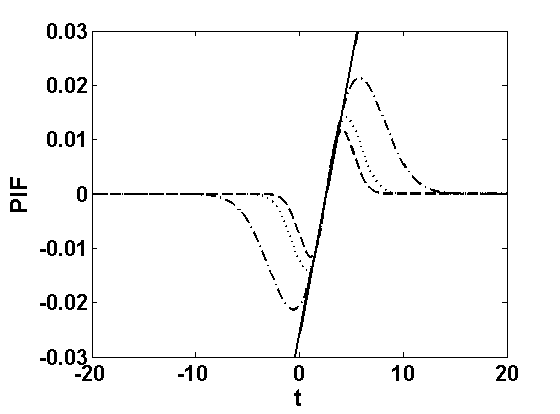}
		\label{fig:apower_30}}
	~ 
	\subfloat[Design 2]{
		\includegraphics[width=.3\textwidth] {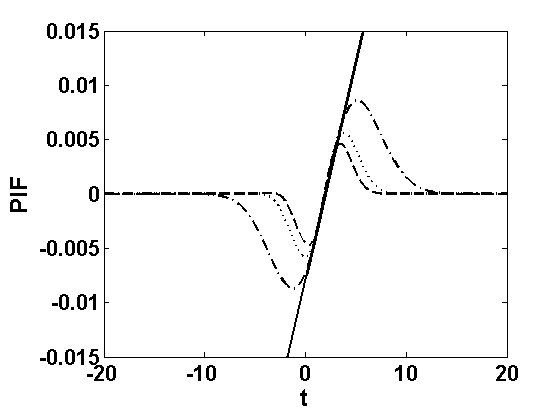}
		\label{fig:apower_302}}
	\\ 
	\subfloat[Design 3]{
		\includegraphics[width=.3\textwidth] {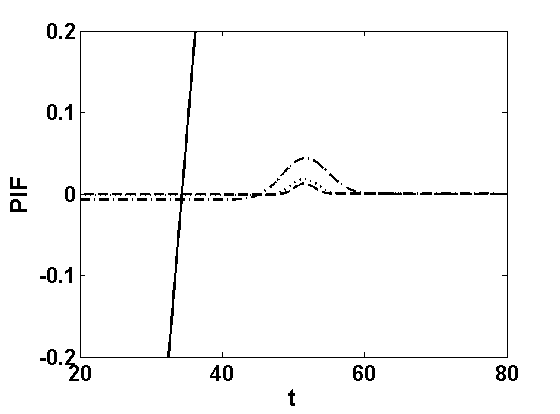}
		\label{fig:apower_302}}
	~ 
	\subfloat[Design 4]{
		\includegraphics[width=.3\textwidth] {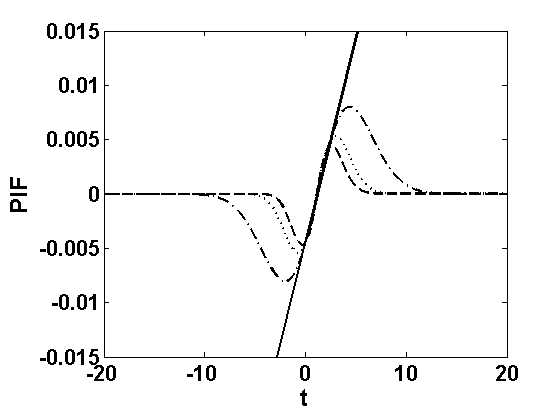}
		\label{fig:apower_303}}
	\caption{Power influence function of the proposed Wald-type test statistics for testing (\ref{EQ:lrm_hyp}) under the normal regression model with fixed designs 1 -- 4 at contamination point $\boldsymbol{t}=t\boldsymbol{1}$
		[solid line: $\tau=0$; dash-dotted line: $\tau=0.1$; dotted line: $\tau=0.3$; dashed line: $\tau=0.5$].}
	\label{FIG:PIF_lrm}
\end{figure}

\subsection{Testing for individual regression coefficients in Poisson model for Count Data}

Let us consider another popular special class of GLM applicable to the analysis of count responses, 
namely the Poisson regression model. Here the response $y_i$ is assumed to have a Poisson distribution with 
mean $e^{\boldsymbol{x}_i^T\boldsymbol{\beta}}$ depending on the given predictor values $\boldsymbol{x}_i$.
In terms of the GLM notations of Section \ref{sec4}, the density in (\ref{A}) is then a Poisson density
with $\eta_i = \boldsymbol{x}_i^T\boldsymbol{\beta}$, known $\phi=1$ and the logarithmic link function. 
So, we can obtain the robust MDPDE $\widehat{\boldsymbol{\theta }}_{\tau}=\widehat{\boldsymbol{\beta }}_{\tau}$ 
of the regression parameter $\boldsymbol{\theta}=\boldsymbol{\beta}$ in this case following the general theory 
of Section \ref{sec4}. \cite{Ghosh/Basu:2016} studied these MDPDEs  $\widehat{\boldsymbol{\beta }}_{\tau}$
under the fixed-design Poisson regression model and their properties in detail with examples. 
In particular, in the notations of Section \ref{sec4}, we have estimating equations given only by (\ref{B}) with 
$K_{1}(y_{i},\boldsymbol{x}_{i}^{T}\boldsymbol{\beta },\phi ) = \left(y_{i} - \boldsymbol{x}_i^T\boldsymbol{\beta}\right)$,
and the required asymptotic variance matrix $\boldsymbol{\Sigma}_{\tau}$ can be obtained 
in terms of $\boldsymbol{\Omega}_{n,\tau }$ and $\boldsymbol{\Psi}_{n,\tau }$ as defined in (\ref{EQ:mat_betaO}).

Here, as our second illustration of the proposed Wald-type testing procedures, 
we will consider the problem of testing for the significance of any predictor (say, $h$-th predictor) in the model.
For a fixed integer $h$ between 1 to $k$, the corresponding hypothesis is given by 
\begin{equation}
H_{0}:\beta_h=0
\text{ versus }
H_{1}:{\beta_h } \neq 0,  
\label{EQ:Hyp_pr}
\end{equation}%
where $\beta_h$ denotes the $h$-th component of the regression vector $\beta$.
Clearly this important hypothesis in (\ref{EQ:Hyp_pr}) is a special case of the general linear hypotheses in (\ref{G})
with $r=1$, $\boldsymbol{L}^T$ being an $k$-vector with all entries zero except the $h$-th entry as 1 and ${l}_0=0$.
So, our proposed Wald-type test statistics for this problem, following the general theory of Section \ref{sec4},
can be simplified as 
\begin{equation}
W_{n}(\widehat{\boldsymbol{\beta }}_{\tau})
=\frac{n\widehat{{\beta }}_{h,\tau}^2}{\sigma_{hh,\tau}^2(\widehat{\boldsymbol{\beta }}_{\tau})},  
\label{EQ:TS_pr}
\end{equation}
where $\widehat{{\beta }}_{h,\tau}$ is the $h$-th entry of $\widehat{\boldsymbol{\beta }}_{\tau}$
denoting the MDPDE of $\beta_h$ and $\sigma^2_{hh,\tau}$ denote the $h$-th diagonal element 
of the asymptotic covariance matrix $\boldsymbol{\Sigma}_{\tau}$ at the null parameter values
denoting the null asymptotic variance of $\sqrt{n}\widehat{{\beta }}_{h,\tau}$.
Following Section \ref{sec4}, this test statistics in (\ref{EQ:TS_pr}) is asymptotically 
distributed as $\chi_1^2$ distribution and consistent at any fixed alternatives. 
Further, denoting the null parameter value as $\boldsymbol{\beta}_0$ having $h$-th entry $\beta_{h,0}=0$, 
the asymptotic distribution of the proposed Wald-type test statistics under the contiguous alternatives  
$$
H_{1,n} : \boldsymbol{\beta}_{n}~\mbox{with}~\beta_{h,n} =  n^{-1/2}d, \beta_{l,n} = \beta_{l,0}~\mbox{for}~l\neq h, 
d\in\mathbb{R} - \{{0}\}
$$  
is a non-central $\chi^2$ distribution with one degree of freedom and non-centrality parameter 
\begin{equation}
\delta=\frac{d^2}{\sigma_{hh,\tau}^2({\boldsymbol{\beta }}_{0})}. 
\label{EQ:delta_pr}
\end{equation}
Note that $\sigma_{hh,\tau}^2({\boldsymbol{\beta }}_{0})$ has no closed form expression in this case
but can be estimated numerically for any fixed sample size and any given design-matrix by 
$\widehat{\sigma}_{hh,\tau}^2({\boldsymbol{\beta }}_{0})$, the $h$-th diagonal entry of the matrix 
$$\boldsymbol{\Psi}_{n,\tau }^{-1}({\boldsymbol{\beta }}_{0}) \boldsymbol{\Omega}_{n,\tau }({\boldsymbol{\beta }}_{0})
\boldsymbol{\Psi}_{n,\tau }^{-1}({\boldsymbol{\beta }}_{0})$$ estimating $\boldsymbol{\Sigma}_{\tau}({\boldsymbol{\beta }}_{0})$.
Therefore, the effect of given design points can not be separated out explicitly from the form of 
asymptotic contiguous power based on this non-central distribution as was the case for previous normal model.
We again consider the four given designs 1--4 from Section \ref{sec5.1} and numerically compute the 
asymptotic contiguous power of the proposed Wald-type tests for testing (\ref{EQ:Hyp_pr}) 
with different values of $h$, $d$ and $\tau$ assuming $n=50$, 
$\beta_{l,0}=1$ for all $l\neq h$ and $5\%$ level of significance;
the results are shown in Table \ref{TAB:Eff_pr}. 
Once again, the power loss is not quite significant for any small positive values of $\tau$.
Also, we need larger values of $d$ to attain any fixed power by the proposed Wald-type test statistics 
with fixed tuning parameter whenever the values of the fixed design variables increases.

\begin{table}[h]
	\caption{Contiguous power of the proposed Wald-type test for testing (\ref{EQ:Hyp_pr}) under the Poisson regression model}
	\centering
	\resizebox{\textwidth}{!}{ %
	\begin{tabular}{l|l| cccccc| cccccc} \hline
Design	&	& \multicolumn{6}{|c}{$\tau$}& \multicolumn{6}{|c}{$\tau$}\\
	&	$d_x$	&	0	&	0.1	&	0.3	&	0.5	&	0.7	&	1 &	0		&	0.1	&	0.3	&	0.5	&	0.7	&	1	\\\hline
Design 1	&	& \multicolumn{6}{|c}{$h=1$}& \multicolumn{6}{|c}{$h=2$}\\\hline
	&	0	&	0.050	&	0.050	&	0.050	&	0.050	&	0.050	&	0.050	&	0.050	&	0.050	&	0.050	&	0.050	&	0.050	&	0.050	\\
&	2	&	0.200	&	0.199	&	0.189	&	0.178	&	0.167	&	0.152	&	0.378	&	0.375	&	0.355	&	0.331	&	0.308	&	0.276	\\
&	3	&	0.388	&	0.383	&	0.364	&	0.339	&	0.315	&	0.282	&	0.696	&	0.691	&	0.663	&	0.627	&	0.589	&	0.534	\\
&	5	&	0.796	&	0.792	&	0.766	&	0.730	&	0.692	&	0.634	&	0.985	&	0.984	&	0.978	&	0.968	&	0.954	&	0.926	\\
&	7	&	0.974	&	0.973	&	0.964	&	0.950	&	0.932	&	0.897	&	1.000	&	1.000	&	1.000	&	1.000	&	0.999	&	0.998	\\
&	10	&	1.000	&	1.000	&	1.000	&	0.999	&	0.999	&	0.996	&	1.000	&	1.000	&	1.000	&	1.000	&	1.000	&	1.000	\\
\hline
Design 2	&	& \multicolumn{6}{|c}{$h=1$}& \multicolumn{6}{|c}{$h=2$}\\\hline
	&	0	&	0.050	&	0.050	&	0.050	&	0.050	&	0.050	&	0.050	&	0.050	&	0.050	&	0.050	&	0.050	&	0.050	&	0.050	\\
&	1	&	0.140	&	0.139	&	0.132	&	0.124	&	0.117	&	0.110	&	0.320	&	0.316	&	0.300	&	0.280	&	0.261	&	0.234	\\
&	2	&	0.414	&	0.408	&	0.381	&	0.351	&	0.326	&	0.299	&	0.846	&	0.842	&	0.819	&	0.786	&	0.749	&	0.693	\\
&	3	&	0.743	&	0.735	&	0.700	&	0.658	&	0.619	&	0.574	&	0.994	&	0.994	&	0.991	&	0.985	&	0.977	&	0.959	\\
&	5	&	0.992	&	0.991	&	0.985	&	0.976	&	0.965	&	0.947	&	1.000	&	1.000	&	1.000	&	1.000	&	1.000	&	1.000	\\
&	7	&	1.000	&	1.000	&	1.000	&	1.000	&	1.000	&	0.999	&	1.000	&	1.000	&	1.000	&	1.000	&	1.000	&	1.000	\\
\hline
Design 3	&	& \multicolumn{6}{|c}{$h=1$}& \multicolumn{6}{|c}{$h=2$}\\\hline
	&	0	&	0.050	&	0.050	&	0.050	&	0.050	&	0.050	&	0.050	&	0.050	&	0.050	&	0.050	&	0.050	&	0.050	&	0.050	\\
&	0.01	&	1.000	&	1.000	&	1.000	&	1.000	&	1.000	&	1.000	&	0.057	&	0.056	&	0.056	&	0.056	&	0.055	&	0.055	\\
	&	0.05	&	1.000	&	1.000	&	1.000	&	1.000	&	1.000	&	1.000	&	0.221	&	0.219	&	0.209	&	0.196	&	0.183	&	0.166	\\
&	0.1	&	1.000	&	1.000	&	1.000	&	1.000	&	1.000	&	1.000	&	0.662	&	0.657	&	0.630	&	0.593	&	0.557	&	0.503	\\
&	0.2	&	1.000	&	1.000	&	1.000	&	1.000	&	1.000	&	1.000	&	0.997	&	0.997	&	0.996	&	0.993	&	0.988	&	0.976	\\
&	0.5	&	1.000	&	1.000	&	1.000	&	1.000	&	1.000	&	1.000	&	1.000	&	1.000	&	1.000	&	1.000	&	1.000	&	1.000	\\
\hline
Design 4	&	& \multicolumn{6}{|c}{$h=2$}& \multicolumn{6}{|c}{$h=3$}\\\hline
	&	0	&	0.050	&	0.050	&	0.050	&	0.050	&	0.050	&	0.050	&	0.050	&	0.050	&	0.050	&	0.050	&	0.050	&	0.050	\\
&	10	&	0.153	&	0.152	&	0.145	&	0.137	&	0.129	&	0.119	&	0.168	&	0.167	&	0.159	&	0.149	&	0.140	&	0.128	\\
&	20	&	0.459	&	0.455	&	0.431	&	0.402	&	0.373	&	0.333	&	0.510	&	0.506	&	0.479	&	0.445	&	0.412	&	0.366	\\
&	30	&	0.795	&	0.792	&	0.765	&	0.728	&	0.689	&	0.630	&	0.845	&	0.841	&	0.816	&	0.781	&	0.740	&	0.679	\\
&	50	&	0.996	&	0.996	&	0.994	&	0.990	&	0.983	&	0.968	&	0.999	&	0.999	&	0.998	&	0.995	&	0.992	&	0.981	\\
&	70	&	1.000	&	1.000	&	1.000	&	1.000	&	1.000	&	1.000	&	1.000	&	1.000	&	1.000	&	1.000	&	1.000	&	1.000	\\
\hline
	\end{tabular}}
	\label{TAB:Eff_pr}
\end{table}

\begin{figure}[!th]
	\centering
	\subfloat[Design 1, $i_0=10$]{
		\includegraphics[width=.3\textwidth] {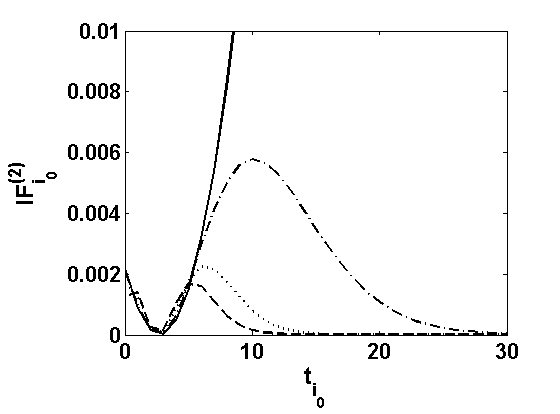}
		\label{fig:apower_30}}
	~ 
	\subfloat[Design 1, $i_0=40$]{
		\includegraphics[width=.3\textwidth] {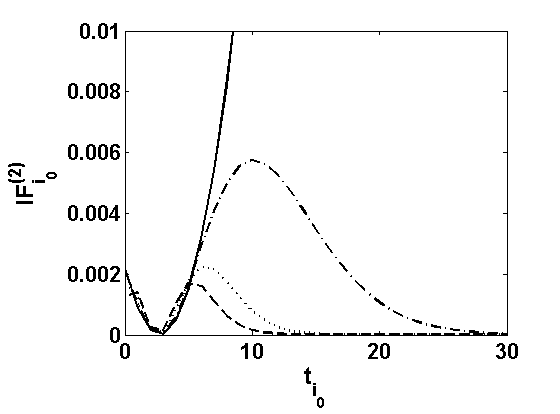}
		\label{fig:apower_302}}
	~ 
	\subfloat[Design 1, all directions]{
		\includegraphics[width=.3\textwidth] {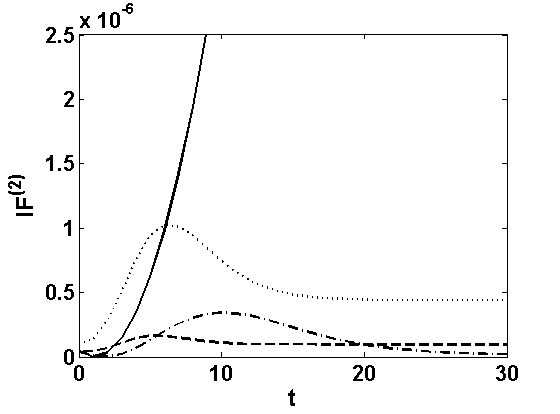}
		\label{fig:apower_303}}
	\\ 
	\subfloat[Design 2, $i_0=10$]{
		\includegraphics[width=.3\textwidth] {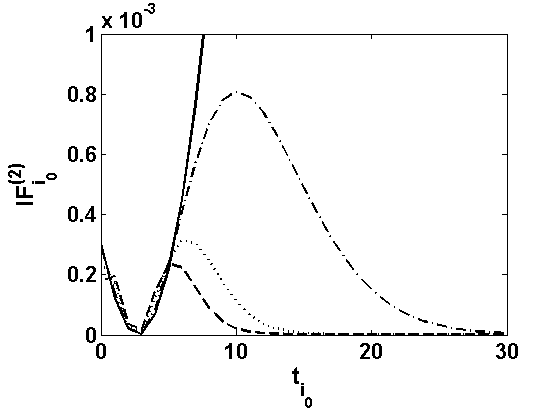}
		\label{fig:apower_30}}
	~ 
	\subfloat[Design 2, $i_0=40$]{
		\includegraphics[width=.3\textwidth] {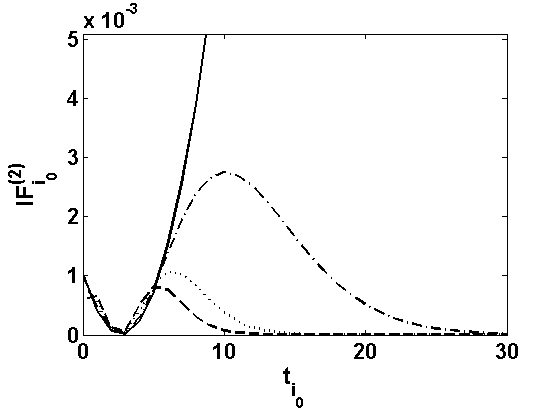}
		\label{fig:apower_302}}
	~ 
	\subfloat[Design 2, all directions]{
		\includegraphics[width=.3\textwidth] {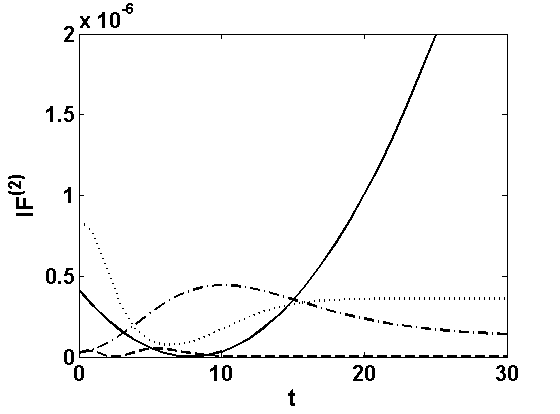}
		\label{fig:apower_303}}
	\\ 
	\subfloat[Design 3, $i_0=10$]{
		\includegraphics[width=.3\textwidth] {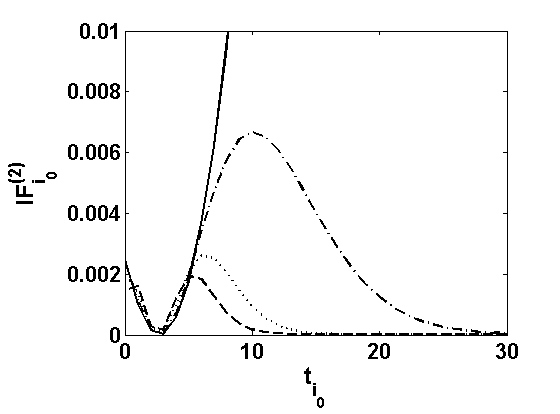}
		\label{fig:apower_30}}
	~ 
	\subfloat[Design 3, $i_0=40$]{
		\includegraphics[width=.3\textwidth] {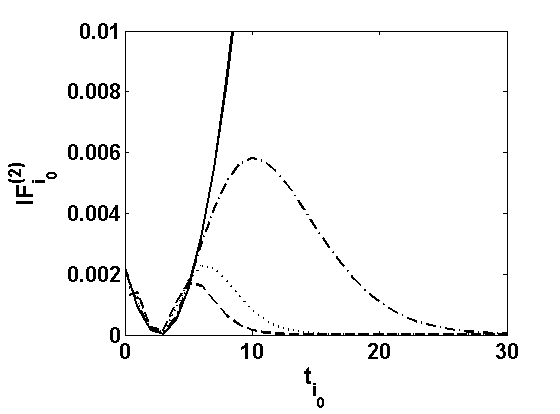}
		\label{fig:apower_302}}
	~ 
	\subfloat[Design 3, all directions ]{
		\includegraphics[width=.3\textwidth] {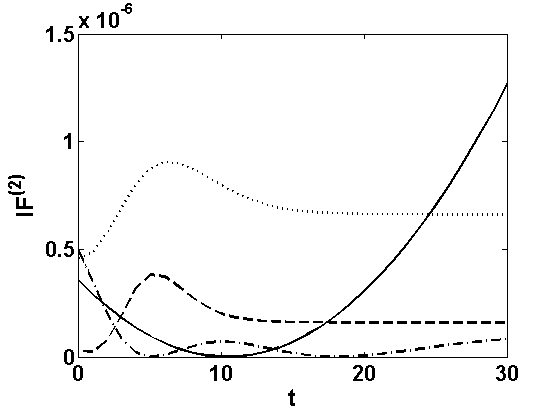}
		\label{FIG:1i}}
	\\ 
	\subfloat[Design 4, $i_0=10$]{
		\includegraphics[width=.3\textwidth] {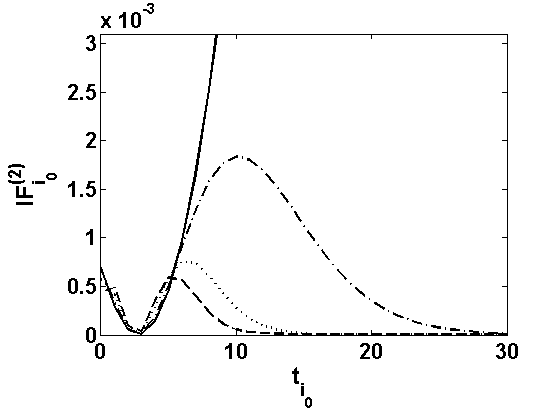}
		\label{fig:apower_30}}
	~ 
	\subfloat[Design 4, $i_0=40$]{
		\includegraphics[width=.3\textwidth] {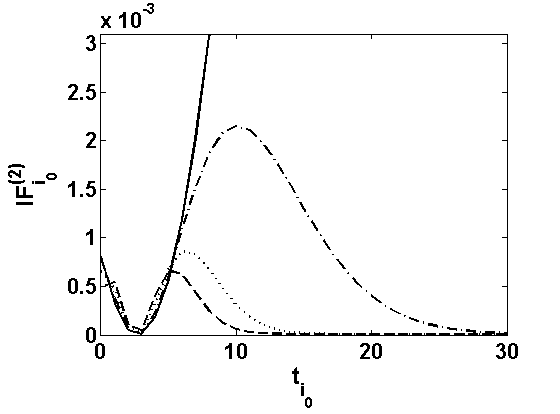}
		\label{fig:apower_302}}
	~ 
	\subfloat[Design 4, all directions]{
		\includegraphics[width=.3\textwidth] {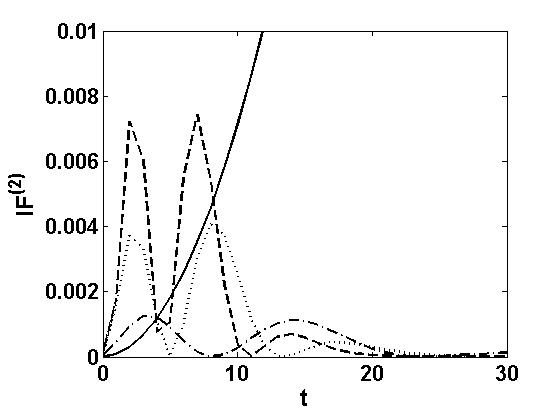}
		\label{fig:apower_303}}
	\caption{Second order influence function of the proposed Wald-type test statistics for testing (\ref{EQ:Hyp_pr}) 
		under Poisson regression model with fixed designs 1 -- 4 and contamination in the direction $i_0=10, 40$ or in all directions 
		at $\boldsymbol{t}=t\boldsymbol{1}$  
		[solid line: $\tau=0$; dash-dotted line: $\tau=0.1$; dotted line: $\tau=0.3$; dashed line: $\tau=0.5$].}
	\label{FIG:IF2_pr}
\end{figure}

Further, the robustness of our proposed Wald-type tests can also be verified by examining the 
second order influence function of the test statistics and the power influence functions
from their general expressions derived in Section \ref{sec4}.
However, in the present case of Poisson regression model, we cannot have more simplified  explicit expressions for them
except for the particular form of $K_{1}(y_{i},\boldsymbol{x}_{i}^{T}\boldsymbol{\beta },\phi )$ as defined above and no 
$K_{2}(y_{i},\boldsymbol{x}_{i}^{T}\boldsymbol{\beta },\phi )$;
but they can be easily computed numerically for any given fixed design.
We again present the numerical values of the second order influence functions of the proposed Wald-type test statistics 
for testing significance of the first slope parameter $\beta_2$ ($h=2$ in (\ref{EQ:Hyp_pr})) at different values of $\tau$
under the four designs considered in Section \ref{sec5.1} with $n=50$ and $\beta_{l,0}=1$ for all $l\neq 2$;
these are presented in Figures \ref{FIG:IF2_pr}.
The redescending nature of all the influence functions 
with increasing $\tau$ is again quite clear from the figures, 
which indicates the increasing robustness of our proposed Wald-types tests as $\tau>0$ increases 
over the non-robust choice $\tau=0$ (having unbounded IFs).
The nature of the power influence functions in this case are also seen to be very similar 
implying the robustness of our proposal at $\tau>0$; so we skip them for brevity.


\newpage
\section{Concluding remarks and the Choice of $\tau$\label{sec7}}

We have proposed a robust parametric hypothesis testing approach for general non-homogeneous observations
involving a common model parameter. The test statistics have been constructed by generalizing the 
Wald test statistics using the robust minimum density power divergence estimator (with parameter $\tau\geq 0$)
of the underlying common parameter in place of its non-robust maximum likelihood estimator. 
The properties of the proposed test have been studied for both simple and composite hypotheses under general non-homogeneous set-up
and applied to the cases of several fixed design GLMs. 
In particular, it has been observed that the proposed tests have simple chi-square asymptotic limit under null hypothesis
in contrast to the linear combination of chi-square limit for the robust tests of \cite{Ghosh/Basu:2015}
making its application much easier under complex models. Also, the tests are always consistent at any fixed alternatives
and have bounded second order influence functions of the test statistics and bounded power influence functions for $\tau>0$
implying robustness of our proposal.

Further, in each of the examples considered, we have seen that the asymptotic power of the proposed Wald-type tests
under any contiguous alternatives as well as the extent of robustness depends on the tuning parameter $\tau$.
In particular, as $\tau$ increases, the contiguous power decreases slightly from its highest value at $\tau=0$
corresponding to the non-robust classical MLE based Wald test but the robustness increases significantly.
Thus, the tuning parameter $\tau$ yields a trade-off between asymptotic contiguous power and robustness of these Wald-type tests;
note the similarity with the trade-off between asymptotic efficiency and robustness of the underlying MDPDE 
as studied in \cite{Ghosh/Basu:2013}. In fact one can explicitly examine, from the theoretical results derived here,
that the dependence of the power and robustness of the proposed Wald-type tests 
comes directly through the similar dependence of the efficiency and robustness of the MDPDE used 
in constructing the test statistics. Hence a proper choice of the tuning parameter $\tau$ balancing  
the asymptotic power and robustness can be equivalently obtained by balancing the corresponding trade-off
for the underlying MDPDE. This latter problem under the non-homogeneous set-up has been proposed and studied by 
\cite{Ghosh/Basu:2013,Ghosh/Basu:2015a,Ghosh/Basu:2016}, where it is proposed that 
a data-driven estimate of the mean square error of the MDPDE be minimized 
to obtain the optimum tuning parameter for any given practical dataset.
The same optimum $\tau$ can also be used as well for applying our proposed Wald-type tests for any practical hypothesis testing problems.
However, more detailed investigation on this issue could ba an interesting future research work.

Another possible future extension of the present paper will be to construct similar robust testing procedures 
for two independent samples of non-homogeneous data. This problem is of high practical relevance as
one can then use the construction to test between two regression lines from fixed design clinical trials
occurring frequently in medical sciences and epidemiology. 
The corresponding problem with homogeneous sample has been recently tackled by \cite{Ghosh/etc:2016b}
which should be extended to the cases with non-homogeneous data and fixed-design regressions as in the present paper.
We hope to pursue some such extensions in future.

\appendix
\section{Assumptions} 
\label{APP:cond}
\renewcommand{\theenumi}{(A\arabic{enumi})}
\renewcommand\labelenumi{(A\arabic{enumi})}
\noindent\textbf{
Assumptions required for Asymptotic distributions of the MDPDE under non-homogeneous data \citep{Ghosh/Basu:2013}:}
\begin{enumerate}
	\item\label{as1} For all $i=1, \ldots, n$, the support of model distribution given by
	$\chi= \{y|f_{i,\boldsymbol{\theta}}(y) > 0\}$ is independent of $i$ and $\boldsymbol{\theta}$
	and is the same as the support of true distribution $G_i$ for each $i$.
	
	\item\label{as2} There exists an open subset $\omega\subseteq\Theta$
	that contains the best fitting parameter $\boldsymbol{\theta}^g=\boldsymbol{T}_{\tau}(\underline{\boldsymbol{G}})$
	and, at each $\boldsymbol{\theta}\in \omega$, the model density $f_{i,\boldsymbol{\theta}}(y)$ 
	is thrice continuously differentiable with respect to $\theta$ for almost all $y \in\chi$ and all $i =1,\ldots, n$.

	\item\label{as3} $\int f_{i,\boldsymbol{\theta}}(y)^{1+\alpha}dy$ and 
	$\int f_{i,\boldsymbol{\theta}}(y)^{\alpha}g_i(y)dy$ can be differentiated three times with respect to $\theta$, 
	and the derivatives can be taken under the integral sign for any $i =1,\ldots, n$.
	
	\item\label{as4} The matrix $\boldsymbol{J}_{i,\tau}$ is positive definite for any $i=1,\ldots, n$, and, 
	$\displaystyle\inf_n[ \min~\mbox{eigenvalue of } \boldsymbol{\Psi}_{n,\tau} ] > 0.$
	
	\item\label{as5} Define $V_{i,\boldsymbol{\theta}}(y)=\left[	\int f_{i,\boldsymbol{\theta}}(y)^{1+\alpha} dy 
	- \left(1+\frac{1}{\alpha}\right)f_{i,\boldsymbol{\theta}}(y)^\alpha\right]$.
	For all $\boldsymbol{\theta}\in \Theta$, $i=1, \ldots, n$, and all $j, h, l=1, \ldots, k$,
	the $(j,h,l)$-th (third order) partial derivatives of $V_{i,\boldsymbol{\theta}}(y)$ is bounded in absolute value 
	by some functions $M_{jhl}^{(i)}(y)$ satisfying
	$\displaystyle\frac{1}{n} \sum_{i=1}^{n} E_{g_i} \left[ M_{jhl}^{(i)}(Y)\right] =O(1)$.
		
	\item\label{as6} For all $j, h =1, \ldots, k$, define 
	$\boldsymbol{N}_{ijh}^{(1)}(y) = \nabla_{j}V_{i,\boldsymbol{\theta}}(y)$ and
	$\boldsymbol{N}_{ijh}^{(2)}(y) = \nabla_{jh}V_{i,\boldsymbol{\theta}}(y) - E_{g_i}(\nabla_{jh}V_{i,\boldsymbol{\theta}}(y))$.
	Then, we have
	%
	\begin{align}
	&\lim_{N\rightarrow\infty} \sup_{n>1} \Biggl\{ \frac{1}{n} \sum_{i=1}^n E_{g_i} |\boldsymbol{N}_{ijh}^{(l)}(y)| 
	I(|\boldsymbol{N}_{ijh}^{(l)}(y))| > N)\Biggr\} = 0, ~~l=1, 2.\nonumber
	\end{align}
	
	%
	
	\item\label{as7} For any $\epsilon> 0$, 
	%
	\begin{equation}
	\lim_{n\rightarrow\infty} \left\{ \frac{1}{n} \sum_{i=1}^n E_{g_i}
	\left[||\boldsymbol\Omega_n^{-1/2} \nabla V_{i,\boldsymbol{\theta}}(y)||^2 
	I(||\boldsymbol{\Omega}_n^{-1/2} \nabla V_{i,\boldsymbol{\theta}}(y)|| > \epsilon\sqrt{n})\right]\right\} = 0.
	\label{EQ:A7}
	\end{equation}
\end{enumerate}

\bigskip
\renewcommand{\theenumi}{(R\arabic{enumi})}
\renewcommand\labelenumi{(R\arabic{enumi})}
\noindent\textbf{
	Assumptions required for Asymptotic distributions of the MDPDE under Normal Fixed-Design Linear Model \citep{Ghosh/Basu:2013}:}
\\
\noindent
The values of given design point $\boldsymbol{x}_i=(x_{1i},\ldots,x_{ki})^T$ are such that
\begin{enumerate}
\item\label{asR1}  $\displaystyle\sup_{n>1}\max_{1\le i\le n} ~ |x_{ji}| = O(1)$, 
$\displaystyle\sup_{n>1}\max_{1\le i\le n} ~ |x_{ji}x_{li}|= O(1)$,
and $\displaystyle\frac{1}{n} \sum_{i=1}^n ~ |x_{ji}x_{li}x_{hi}| = O(1)$,
 for all $j, l, h = 1, \ldots,k.$
%

\item\label{asR2} 
$\displaystyle\inf_n \left[ \min~\mbox{eigenvalue of } \frac{1}{n}(\boldsymbol{X}^T \boldsymbol{X})\right] > 0.$
%
%
%
\end{enumerate}

\end{document}